\definecolor{bluekeywords}{rgb}{0,0,1}
\definecolor{greencomments}{rgb}{0,0.5,0}
\definecolor{redstrings}{rgb}{0.64,0.08,0.08}
\definecolor{xmlcomments}{rgb}{0.5,0.5,0.5}
\definecolor{types}{rgb}{0.17,0.57,0.68}
\newcolumntype{L}[1]{>{\raggedright\let\newline\\\arraybackslash\hspace{0pt}}m{#1}}
\newcolumntype{C}[1]{>{\centering\let\newline\\\arraybackslash\hspace{0pt}}m{#1}}
\newcolumntype{R}[1]{>{\raggedleft\let\newline\\\arraybackslash\hspace{0pt}}m{#1}}
\renewcommand{\ALG@beginalgorithmic}{\footnotesize}
\lstdefinelanguage{JavaScript}{%
  keywords = { async, await, break, case, catch, class, const, continue, debugger, default, delete, do, each, else, export, finally, for, function, if, import, in, instanceof, let, new, of, return, switch, this, throw, try, typeof, var, void, while, with, yield },
  morecomment = [l]{//},
  morecomment = [s]{/*}{*/},
  morestring  = [b]',
  morestring  = [b]",
  sensitive   = true,
}
\lstdefinelanguage{Java10}{
  language      = Java,
  morekeywords  ={ var },
}
\begin{document}

 %\author{}
%\institute{}
%
%\titlerunning{Abbreviated paper title}
% If the paper title is too long for the running head, you can set
% an abbreviated paper title here
%
\title{Checking Robustness Between Weak Transactional Consistency Models\thanks{This work is supported in part by the European Research Council (ERC) under the European Union's Horizon 2020 research and innovation programme (grant agreement No 678177).}\vspace{-10pt}}
\author{Sidi Mohamed Beillahi\textsuperscript{(\Letter)} \and
Ahmed Bouajjani \and
Constantin Enea}
\authorrunning{S.M. Beillahi, A. Bouajjani, and C. Enea.}
% First names are abbreviated in the running head.
% If there are more than two authors, 'et al.' is used.
%
\institute{Universit\'e de Paris, IRIF, CNRS, Paris, France,
\email{\{beillahi,abou,cenea\}@irif.fr}}
\maketitle              % typeset the header of the contribution
%\vspace{-25pt}
\begin{abstract}

Concurrent accesses to databases are typically encapsulated in transactions in order to enable isolation from other concurrent computations and resilience to failures. Modern databases provide transactions with various semantics corresponding to different trade-offs between consistency and availability. Since a weaker consistency model provides better performance, an important issue is investigating  the weakest level of consistency needed by a given program (to satisfy its specification). 
As a way of dealing with this issue, we investigate the problem of checking whether a given program has the same set of behaviors when replacing a consistency model with a weaker one. 
This property known as \emph{robustness} generally implies that any specification of the program is preserved when weakening the consistency.
We focus on the robustness problem for consistency models which are weaker than standard serializability, namely, causal consistency, prefix consistency, and snapshot isolation. 
We show that checking robustness between these models is polynomial time reducible to a state reachability problem under serializability.
We use this reduction to also derive a pragmatic proof technique based on Lipton's reduction theory that allows to prove programs robust.
We have applied our techniques to several challenging applications drawn from the literature of distributed systems and databases. 
    \end{abstract}
  \vspace{-15pt}
\keywords{Transactional databases  \and Weak consistency \and Program verification}
\vspace{-15pt}

    %!TEX root = draft.tex
%\vspace{-30pt}
\section{Introduction}
\label{sec:intro}
%\vspace{-5pt}
Concurrent accesses to databases are typically encapsulated in transactions in order to enable isolation from other concurrent computations and resilience to failures. Modern databases provide transactions with various semantics corresponding to different tradeoffs between consistency and availability. The strongest consistency level is achieved with \emph{serializable} transactions~\cite{DBLP:journals/jacm/Papadimitriou79b} whose outcome in concurrent executions is the same as if the transactions were executed atomically in some order. Since serializability (\serc{}) carries a significant penalty on availability, modern databases often provide weaker consistency models, e.g., causal consistency (\ccc{})~\cite{DBLP:journals/cacm/Lamport78}, prefix consistency (\pcc{})~\cite{DBLP:conf/ecoop/BurckhardtLPF15,DBLP:conf/concur/Cerone0G15}, and snapshot isolation (\sic{})~\cite{DBLP:conf/sigmod/BerensonBGMOO95}. Causal consistency requires that if a transaction $\tr_1$ ``affects'' another transaction $\tr_2$, e.g., $\tr_1$ executes before $\tr_2$ in the same session or $\tr_2$ reads a value written by $\tr_1$, then the updates in these two transactions are observed by any other transaction in this order. Concurrent transactions, which are not causally related to each other, can be observed in different orders, leading to behaviors that are not possible under \serc{}. Prefix consistency requires that there is a total commit order between all the transactions such that each transaction observes all the updates in a prefix of this sequence (\pcc{} is stronger than \ccc{}). Two transactions can observe the \emph{same} prefix, which leads to behaviors that are not admitted by \serc{}. Snapshot isolation further requires that two different transactions observe different prefixes if they both write to a common variable.

Since a weaker consistency model provides better performance, an important issue is identifying the \emph{weakest} level of consistency needed by a program (to satisfy its specification). One way to tackle this issue is checking whether a program $P$ designed under a consistency model $S$ has the same behaviors when run under a weaker consistency model $W$. This property of a program is generally known as \emph{robustness} against substituting $S$ with $W$. 
%Intuitively, a program is robust when all the data or control-flow dependencies which are not preserved by the weak model $W$ compared to the stronger one $S$ are not observable.
%are observable even under the stronger model $S$. 
It implies that any specification of $P$ is preserved when weakening the consistency model (from $S$ to $W$). Preserving any specification is convenient since specifications are rarely present in practice.

The problem of checking robustness for a given program has been investigated in several recent works, but only when the stronger model ($S$) is \serc{}, e.g.,~\cite{DBLP:conf/cav/BeillahiBE19,DBLP:conf/concur/BeillahiBE19,DBLP:conf/pldi/BrutschyD0V18,DBLP:journals/jacm/CeroneG18,DBLP:conf/concur/0002G16,DBLP:conf/concur/NagarJ18}, or sequential consistency in the non-transactional case, e.g.~\cite{DBLP:conf/pldi/LahavM19,DBLP:conf/esop/BouajjaniDM13,DBLP:conf/icalp/DerevenetcM14}.
However, 
there is a large class of specifications that can be implemented even in the presence of ``anomalies'', i.e., behaviors which are not admitted under \serc{} (see~\cite{DBLP:conf/concur/ShapiroAP16} for a discussion). In this context, an important question is whether a certain implementation (program) is robust against substituting a weak consistency model, e.g., \sic{}, with a weaker one, e.g., \ccc{}. 

In this paper, we consider the sequence of increasingly strong consistency models mentioned above, \ccc{}, \pcc{}, and \sic{}, and investigate the problem of checking robustness for a given program against weakening the consistency model to one in this range. 
We study the asymptotic complexity of this problem and propose effective techniques for establishing robustness based on abstraction. 
%This complements recent results that investigate the problem of checking robustness against weakening serializability to \sic{}and \ccc{}~\cite{,}.
There are two important cases to consider: robustness against substituting \sic{} with \pcc{} and \pcc{} with \ccc{}, respectively. Robustness against substituting \sic{} with \ccc{} can be obtained as the conjunction of these two cases. 

In the first case (\sic{} vs \pcc{}), checking robustness for a program $P$ %problem checking robustness for a program $P$ in the context of two consistency models $S$ and $W$ could be done by 
is reduced to a reachability (assertion checking) problem in a composition of $P$ under \pcc{} with a monitor that checks whether a \pcc{} behavior is an ``anomaly'', i.e., admitted by $P$ under \pcc{}, but not under \sic{}. %This is the approach we consider for the case \sic{} vs \pcc{}. 
This approach raises two non-trivial challenges: (1) defining a monitor for detecting \pcc{} vs \sic{} anomalies that uses a minimal amount of auxiliary memory (to remember past events), and (2) determining the complexity of checking if the composition of $P$ with the monitor reaches a specific control location\footnote{We assume that the monitor goes to an error location when detecting an anomaly.} under the (weaker) model \pcc{}. Interestingly enough, we address these two challenges by studying the relationship between these two weak consistency models, \pcc{} and \sic{}, and \emph{serializability}. The construction of the monitor is based on the fact that the \pcc{} vs \sic{} anomalies can be defined as roughly, the difference between the \pcc{} vs \serc{} and \sic{} vs \serc{} anomalies (investigated in previous work~\cite{DBLP:conf/concur/0002G16}), and we show that the reachability problem under \pcc{} can be reduced to a reachability problem under \serc{}. These results lead to a polynomial-time reduction of this robustness problem (for arbitrary programs) to a reachability problem under \serc{}, which is important from a practical point of view since the \serc{} semantics (as opposed to the \pcc{} or \sic{} semantics) can be encoded easily in existing verification tools (using locks to guard the isolation of transactions). These results also enable a precise characterization of the complexity class of this problem.

Checking robustness against substituting \pcc{} with \ccc{} is reduced to the problem of checking robustness against substituting \serc{} with \ccc{}. The latter has been shown to be polynomial-time reducible to reachability under \serc{} in~\cite{DBLP:conf/concur/BeillahiBE19}. This surprising result relies on the reduction from \pcc{} reachability to \serc{} reachability mentioned above. This reduction shows that a given program $P$ reaches a certain control location under \pcc{} iff a transformed program $P'$, where essentially, each transaction is split in two parts, one part containing all the reads, and one part containing all the writes, reaches the same control location under \serc{}. Since this reduction preserves the structure of the program, \ccc{} vs \pcc{} anomalies of a program $P$ correspond to \ccc{} vs \serc{} anomalies of the transformed program $P'$.

Beyond enabling these reductions, the characterization of classes of anomalies or the reduction from the \pcc{} semantics to the \serc{} semantics are also important for a better understanding of these weak consistency models and the differences between them. We believe that these results can find applications beyond robustness checking, e.g., verifying conformance to given specifications.

%Overall, the reductions explained above show that the problem of checking robustness in the context of \ccc{}, \pcc{}, and \sic{} is polynomial-time reducible to a reachability problem under serializability. 

As a more pragmatic approach for establishing robustness, which avoids a non-reachability proof under \serc{}, we have introduced a proof methodology that builds on Lipton's reduction theory~\cite{DBLP:journals/cacm/Lipton75} and the concept of commutativity dependency graph introduced in~\cite{DBLP:conf/cav/BeillahiBE19}, which represents mover type dependencies between the transactions in a program. We give sufficient conditions for robustness in all the cases mentioned above, which characterize the commutativity dependency graph associated to a given program.

We tested the applicability of these verification techniques on a benchmark containing seven challenging applications extracted from previous work~\cite{DBLP:journals/pvldb/DifallahPCC13,DBLP:conf/cloud/HoltBZPOC16,DBLP:conf/pldi/BrutschyD0V18}. 
These techniques are precise enough for proving or disproving the robustness of all these applications, for all combinations of the consistency models. 

%Complete proofs and more details can be found in \cite{extended}. 
    %!TEX root = draft.tex
%However, in contrast to serializability where in order to reason about application behaviors under it we can use existing techniques for reasoning under the well studied sequential consistency, there are not many techniques to reason about application behaviors under relaxed consistency models. 

\tikzset{transactionBlock/.style={draw=black!0}}

\begin{figure}[!ht]
\lstset{basicstyle=\ttfamily\scriptsize,numbers=left,
keywords = {assume, select, return}, stepnumber=1,numberblanklines=false,mathescape=true}
\begin{subfigure}{70mm}
\begin{minipage}[t]{0.46\textwidth}
\begin{lstlisting}
      Process 1

CreateEvent(v, e1, 3):
[ Tickets[v][e1] := 3 ]

CountTickets(v):
[ r := $\sum\limits_{\mbox{e}}$Tickets[v][e] ]

\end{lstlisting}
\end{minipage}
\hfill
\begin{minipage}[t]{0.46\textwidth}
\begin{lstlisting}
      Process 2

CreateEvent(v, e2, 3):
[ Tickets[v][e2] := 3 ]

CountTickets(v):
[ r := $\sum\limits_{\mbox{e}}$Tickets[v][e] ]

\end{lstlisting}
\end{minipage}
\caption{FusionTicket.}
\label{fig:over-litmus0}
\end{subfigure}
\hfill
\begin{subfigure}{55mm}
    \scalebox{0.65}
    {
  \begin{tikzpicture}[->,>=stealth',shorten >=1pt,auto,node distance=4cm,
    semithick, transform shape,every text node part/.style={align=left}]
   \node[shape=rectangle ,draw=none,font=\large] at (-4,0)  (m)  {CreateEvent(v,e1,3)};
   \node[shape=rectangle ,draw=none,font=\large, label={right://r=3}] at (-4.3,-2)  (n)  {CountTickets(v)};
   \node[shape=rectangle ,draw=none,font=\large, ] at (0,0) (p){CreateEvent(v,e2,3)};
   \node[shape=rectangle ,draw=none,font=\large, label={right://r=3}] at (0,-2) (l) {CountTickets(v)};
   \begin{scope}[ every edge/.style={draw=black,very thick}]
   \path[->] (m.212) edge[left] node {$\hbo$} (n.119);
   \path[->] (m.230) edge[right,dashed] node {$\po$} (n.80);
   \path[->] (n) edge[below] node {$\hbo$} (p);
   \path[->] (p.240) edge[left] node {$\hbo$} (l.119);
   \path[->] (p.280) edge[right,dashed] node {$\po$} (l.80);
   \path[->] (l) edge[above] node {$\hbo$} (m);
  \end{scope}
  \end{tikzpicture}}
  \caption{A \ccc{} trace of FusionTicket.}
  \label{fig:over-litmus1}
  \end{subfigure}
  
\begin{subfigure}{73mm}
\begin{minipage}[t]{0.45\textwidth}
\begin{lstlisting}
      Process 1

Register(u, p1):
[ r := RegisteredUsers[u]
  assume r == 0
  RegisteredUsers[u] := 1
  Password[u] := p1 ]
\end{lstlisting}
\end{minipage}
\hfill
\begin{minipage}[t]{0.45\textwidth}
\begin{lstlisting}
      Process 2

Register(u, p2):
[ r := RegisteredUsers[u]
  assume r == 0
  RegisteredUsers[u] := 1
  Password[u] := p2 ]
\end{lstlisting}
\end{minipage}
\caption{Twitter.}
\label{fig:over-litmus2}
\end{subfigure}
\hfill
  \begin{subfigure}{45mm}
  \scalebox{0.65}
    {\begin{tikzpicture}[->,>=stealth',shorten >=1pt,auto,node distance=4cm,
    semithick, transform shape,every text node part/.style={align=left}]
   \node[shape=rectangle ,draw=none,font=\large] at (-3.5,0)  (m)  {Register(u,p1)};
   \node[shape=rectangle ,draw=none,font=\large] at (0,0) (p){Register(u,p2)};
   \begin{scope}[ every edge/.style={draw=black,very thick}]
   \path[->] (m) edge[bend right,above] node {$\hbo$} (p);
   \path[->] (p) edge[bend right,above] node {$\hbo$} (m);
  \end{scope}
  \end{tikzpicture}}
  \caption{A \ccc{} and \pcc{} trace of Twitter.}
  \label{fig:over-litmus3}
  \end{subfigure}

  \begin{subfigure}{73mm}
\begin{minipage}[t]{0.45\textwidth}
\begin{lstlisting}
      Process 1

RegisterRd(u, p1):
[ r := RegisteredUsers[u]
  assume r == 0 ]

RegisterWr(u, p1):
[ RegisteredUsers[u] := 1
  Password[u] := p1 ]
\end{lstlisting}
\end{minipage}
\hfill
\begin{minipage}[t]{0.45\textwidth}
\begin{lstlisting}
      Process 2

RegisterRd(u, p2):
[ r := RegisteredUsers[u]
  assume r == 0 ]

RegisterWr(u, p2):
[ RegisteredUsers[u] := 1
  Password[u] := p2 ]
\end{lstlisting}
\end{minipage}
\caption{Transformed Twitter.}
\label{fig:over-litmus4}
\end{subfigure}
\hfill
  \begin{subfigure}{50mm}
  \scalebox{0.65}
    {
  \begin{tikzpicture}[->,>=stealth',shorten >=1pt,auto,node distance=4cm,
    semithick, transform shape,every text node part/.style={align=left}]
    \node[shape=rectangle ,draw=none,font=\large] at (-4,0)  (m)  {RegisterRd(u,p1)};
    \node[shape=rectangle ,draw=none,font=\large] at (-4,-2)  (n)  {RegisterWr(u,p1)};
    \node[shape=rectangle ,draw=none,font=\large] at (0,0) (p){RegisterRd(u,p2)};
    \node[shape=rectangle ,draw=none,font=\large] at (0,-2) (l){RegisterWr(u,p2)};
    \begin{scope}[ every edge/.style={draw=black,very thick}]
   \path[->] (m) edge[above] node {$\hbo$} (l);
   \path[->] (m.235) edge[left] node {$\hbo$} (n.120);
   \path[->] (m.275) edge[right,dashed] node {$\po$} (n.80);
   \path[->] (n.360) edge[above] node {$\hbo$} (l.180);
   \path[->] (p) edge[below] node {$\hbo$} (n);
   \path[->] (p.235) edge[left] node {$\hbo$} (l.120);
   \path[->] (p.275) edge[right,dashed] node {$\po$} (l.80);
  \end{scope}
  \end{tikzpicture}}
  \caption{A \ccc{} and \serc{} trace of transformed Twitter.}
  \label{fig:over-litmus5}
  \end{subfigure}

\begin{subfigure}{130mm}
\begin{minipage}[t]{0.29\textwidth}
\begin{lstlisting}
      Process 1

PlaceBet(1,2):
[ assume time < TIMEOUT
    Bets[1] := 2 ]
\end{lstlisting}
\end{minipage}
\hfill
\begin{minipage}[t]{0.29\textwidth}
\begin{lstlisting}
      Process 2

PlaceBet(2,3):
[ assume time < TIMEOUT
    Bets[2] := 3 ]
\end{lstlisting}
\end{minipage}
\hfill
\begin{minipage}[t]{0.36\textwidth}
\begin{lstlisting}
      Process 3

SettleBet():
[Bets' := Bets
 n := Bets'.Length 
 assume time > TIMEOUT & n > 0
 select i s.t. Bets'[i] $\neq$ $\bot$
 return := Bets'[i] ]
\end{lstlisting}
\end{minipage}
\vspace{-5pt}
\caption{Betting.}
\label{fig:over-litmus6}
\end{subfigure}

\begin{subfigure}{75mm}
\scalebox{0.6}
{\begin{tikzpicture}[->,>=stealth',shorten >=1pt,auto,node distance=4cm,
  semithick, transform shape,every text node part/.style={align=left}]
 \node[shape=rectangle ,draw=none,font=\large] at (-6,0)  (m)  {PlaceBet(1,2)};
 \node[shape=rectangle ,draw=none,font=\large] at (-3,0)  (p)  {PlaceBet(2,3)};
 \node[shape=rectangle ,draw=none,font=\large, label={right:// return=2}] at (0.5,0) (q){SettleBet()};
 \begin{scope}[ every edge/.style={draw=black,very thick}]
 \path[->] (m) edge[bend right,above] node {$\hbo$} (q);
 \path[->] (q) edge[above] node {$\hbo$} (p);
\end{scope}
\end{tikzpicture}}
\caption{A \pcc{} and \sic{} trace of Betting.}
\label{fig:over-litmus7}
\end{subfigure}
\hfill
\begin{subfigure}{45mm}
  \scalebox{0.6}
  {\begin{tikzpicture}
   \node[shape=rectangle ,draw=none,font=\large] (A) at (0,0)  [] {PlaceBet(1,2)};
    \node[shape=rectangle ,draw=none,font=\large] (B) at (2.5,0)  [] {SettleBet()};
    \node[shape=rectangle ,draw=none,font=\large] (C) at (5,0)  [] {PlaceBet(2,3)};
    %\node[shape=rectangle ,draw=none,font=\large] (C) at (6,0)  [] {PlaceBet(v)};
    \begin{scope}[ every edge/.style={draw=black,very thick}]
    \path [->] (A) edge  [bend left] node [above,font=\large] {} (B);
    \path [->] (B) edge[bend left] node [above,font=\large] {} (A);
    \path [->] (C) edge  [bend right] node [above,font=\large] {} (B);
    \path [->] (B) edge[bend right] node [above,font=\large] {} (C);
    \end{scope}
  \end{tikzpicture}}
  \caption{Commutativity dependency graph of Betting.}
  \label{fig:over-litmus8}
\end{subfigure}
\vspace{-5pt}
\caption{Transactional programs and traces under different consistency models.}
\vspace{-16pt}
\label{fig:exmapleprograms}
\end{figure}
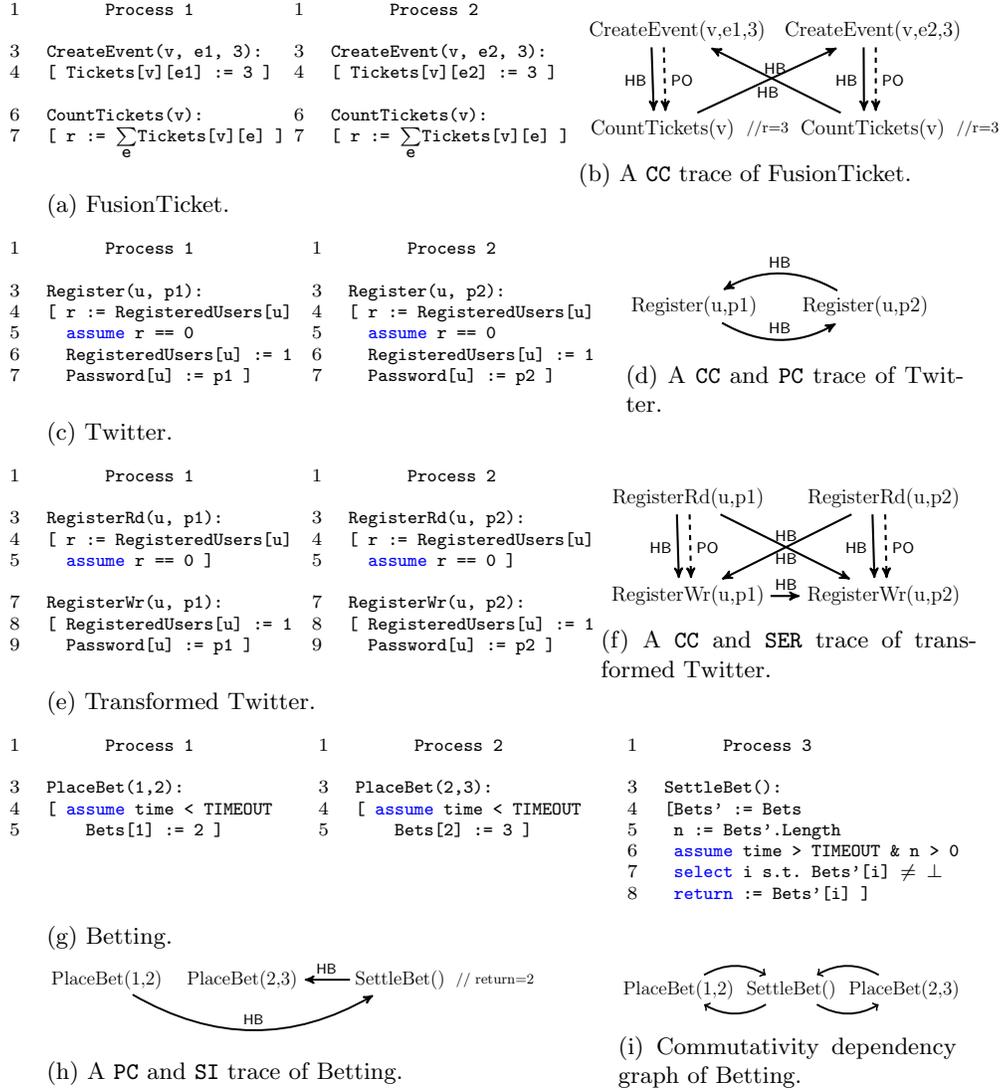

\vspace{-7pt}
\section{Overview}
\label{sec:overview}
\vspace{-2pt}
We give an overview of the robustness problems investigated in this paper, discussing first the case \pcc{} vs. \ccc{}, and then \sic{} vs \pcc{}. We end with an example that illustrates the robustness checking technique based on commutativity arguments.

\smallskip
\noindent
\textbf{Robustness \pcc{} vs \ccc{}.}
%using the three programs listed in Figure~\ref{fig:over-litmus0}, Figure~\ref{fig:over-litmus2}, and Figure~\ref{fig:over-litmus6}. 
We illustrate the robustness against substituting \pcc{} with \ccc{} using the FusionTicket and the Twitter programs in Figure~\ref{fig:over-litmus0} and Figure~\ref{fig:over-litmus2}, respectively. FusionTicket manages tickets for a number of events, each event being associated with a venue. Its state consists of a two-dimensional map that stores the number of tickets for an event in a given venue ($r$ is a local variable, and the assignment in \texttt{CountTickets} is interpreted as a read of the shared state). The program has two processes and each process contains two transactions. The first transaction creates an event $\mbox{e}$ in a venue $\mbox{v}$ with a number of tickets $\mbox{n}$, and the second transaction computes the total number of tickets for all the events in a venue $\mbox{v}$. 
A possible candidate for a specification of this program is that the values computed in \texttt{CountTickets} are monotonically increasing since each such value is computed after creating a new event. Twitter provides a transaction for registering a new user with a given username and password, which is executed by two parallel processes. Its state contains two maps that record whether a given username has been registered (0 and 1 stand for non-registered and registered, respectively) and the password for a given username. Each transaction first checks whether a given username is free (see the \texttt{assume} statement).
%The program has two processes, each process contains one transaction that registers a new username with a password by checking first if the username is free and if it is it registers it. 
The intended specification is that the user must be registered with the given password when the registration transaction succeeds.

A program is robust against substituting \pcc{} with \ccc{} if its set of behaviors under the two models coincide. 
%he sets of executions of a given program under two consistency models coincide. 
%Using the three programs above, we will illustrate three notions: (1) no robustness against substituting \pcc{} with \ccc{}, (2) robustness against substituting \pcc{} with \ccc{} and no robustness against substituting \sic{} with \pcc{}, and (3) robustness against substituting \pcc{} with \ccc{} and against substituting \sic{} with \pcc{}. 
We model behaviors of a given program as \emph{traces}, which record standard control-flow and data-flow dependencies between transactions, e.g., the order between transactions in the same session and whether a transaction reads the value written by another (read-from). 
%In particular, we consider an abstract representation of program executions called trace which is a set of transaction identifiers along with several standard dependency relations between transactions recording the program order of each process and the data-flow. 
The transitive closure of the union of all these dependency relations is called \emph{happens-before}. Figure \ref{fig:over-litmus1} pictures a trace of FusionTicket where the concrete values which are read in a transaction are written under comments. In this trace, each process registers a different event but in the same venue and with the same number of tickets, and it ignores the event created by the other process when computing the sum of tickets in the venue.

Figure \ref{fig:over-litmus1} pictures a trace of FusionTicket under \ccc{}, which is a witness that FusionTicket is \emph{not} robust against substituting \pcc{} with \ccc{}. This trace is also a violation of the intended specification since the number of tickets is not increasing (the sum of tickets is $3$ in both processes). The happens-before dependencies (pictured with $\hbo$ labeled edges) include the program-order $\po$ (the order between transactions in the same process), and read-write dependencies, since an instance of $\mbox{CountTickets(v)}$ does not observe the value written by the $\mbox{CreateEvent}$ transaction in the other process (the latter overwrites some value that the former reads).
This trace is allowed under \ccc{} because the transaction 
$\mbox{CreateEvent(v, e1, 3)}$ executes concurrently with the transaction $\mbox{CountTickets(v)}$ in the other process, and similarly for $\mbox{CreateEvent(v, e2, 3)}$. However, it is not allowed under \pcc{} since it is impossible to define a total commit order between $\mbox{CreateEvent(v, e1, 3)}$ and $\mbox{CreateEvent(v, e2, 3)}$ that justifies the reads of both $\mbox{CountTickets(v)}$ transactions (these reads should correspond to the updates in a prefix of this order). For instance, assuming that $\mbox{CreateEvent(v, e1, 3)}$ commits before $\mbox{CreateEvent(v, e2, 3)}$, $\mbox{CountTickets(v)}$ in the second process must observe the effect of $\mbox{CreateEvent(v, e1, 3)}$ as well since it observes the effect of $\mbox{CreateEvent(v, e2, 3)}$. However, this contradicts the fact that $\mbox{CountTickets(v)}$ computes the sum of tickets as being $3$. 

On the other hand, Twitter is robust against substituting \pcc{} with \ccc{}. For instance, Figure \ref{fig:over-litmus3} pictures a trace of Twitter under \ccc{}, where the \texttt{assume} in both transactions pass. In this trace, the transactions $\mbox{Register(u,p1)}$ and $\mbox{Register(u,p2)}$ execute concurrently and are unaware of each other's writes (they are not causally related). 
The $\hbo$ dependencies include write-write dependencies since both transactions write on the same location (we consider the transaction in Process 2 to be the last one writing to the \texttt{Password} map), and read-write dependencies since each transaction reads  \texttt{RegisteredUsers} that is written by the other.
%In the execution, the two processes execute their transactions concurrently. 
This trace is also allowed under \pcc{} since the commit order can be defined such that $\mbox{Register(u,p1)}$ is ordered before $\mbox{Register(u,p2)}$, and then both transactions read from the initial state (the empty prefix). Note that this trace has a cyclic happens-before which means that it is not allowed under serializability. 

%Robustness against substituting \serc{} with a weaker consistency model generally corresponds
\smallskip
\noindent
\textbf{Checking robustness \pcc{} vs \ccc{}.}
We reduce the problem of checking robustness against substituting \pcc{} with \ccc{} to the robustness problem against substituting \serc{} with \ccc{} (the latter reduces to a reachability problem under \serc{}~\cite{DBLP:conf/concur/BeillahiBE19}). This reduction relies on a syntactic program transformation that rewrites \pcc{} behaviors of a given program $P$ to $\serc{}$ behaviors of another program $P'$. The program $P'$ is obtained 
by splitting each transaction $\atr$ of $P$ into two transactions: the first transaction performs all the reads in $\atr$ and the second performs all the writes in $\atr$ (the two are related by program order). Figure \ref{fig:over-litmus4} shows this transformation applied on Twitter. The trace in Figure \ref{fig:over-litmus5} is a serializable execution of the transformed Twitter which is ``observationally'' equivalent to the trace in Figure~\ref{fig:over-litmus3} of the original Twitter, i.e., each read of the shared state returns the same value and the writes on the shared state are applied in the same order (the acyclicity of the happens-before shows that this is a serializable trace). The transformed FusionTicket  coincides with the original version because it contains no transaction that both reads and writes on the shared state. 

We show that \pcc{} behaviors and \serc{} behaviors of the original and transformed program, respectively, are related by a bijection. In particular, we show that any \pcc{} vs. \ccc{} robustness violation of the original program manifests as a \serc{} vs. \ccc{} robustness violation of the transformed program, and vice-versa. For instance, the \ccc{} trace of the original Twitter in Figure~\ref{fig:over-litmus3} corresponds to the \ccc{} trace of the transformed Twitter in Figure \ref{fig:over-litmus5}, and the acyclicity of the latter (the fact that it is admitted by \serc{}) implies that the former is admitted by the original Twitter under \pcc{}. On the other hand, the trace in Figure \ref{fig:over-litmus1} is also a \ccc{} of the transformed FusionTicket and its cyclicity implies that it is not admitted by FusionTicket under \pcc{}, and thus, it represents a robustness violation.

%we show that the execution in Figure \ref{fig:over-litmus3} of the Twitter program is possible under \ccc{} iff the execution in Figure \ref{fig:over-litmus5} of the transformed Twitter program is possible under \ccc{} also that the execution in Figure \ref{fig:over-litmus3} of the Twitter program is possible under \pcc{} iff the execution in Figure \ref{fig:over-litmus5} of the transformed Twitter program is possible under \serc{}. Note that the happens-before of the execution in Figure \ref{fig:over-litmus5} is acyclic which means that this execution is possible under \serc{} therefore the execution in Figure \ref{fig:over-litmus3} is possible under \pcc{}. On the other hand, the transformed FusionTicket program coincides with the original FusionTicket program since each transaction is singleton thus since the happens-before of the execution in Figure \ref{fig:over-litmus1} is cyclic then this execution is not possible under \serc{} which implies that it is not possible under \pcc{}. 
%Thus, we obtain a reduction of robustness against substituting \pcc{} with \ccc{} to robustness against substituting \serc{} with \ccc{} which was shown in \cite{DBLP:conf/concur/BeillahiBE19} to be reducible to reachability under \serc{}.  
%%for the program in Figure \ref{fig:over-litmus0}, the transformed program coincide with original program since each transaction is singleton

\smallskip
\noindent
\textbf{Robustness \sic{} vs \pcc{}.}
We illustrate the robustness against substituting \sic{} with \pcc{} using Twitter and the Betting program in Figure \ref{fig:over-litmus6}. Twitter is \emph{not} robust against substituting \sic{} with \pcc{}, the trace in Figure \ref{fig:over-litmus3} being a witness violation. This trace is also a violation of the intended specification since one of the users registers a password that is overwritten in a concurrent transaction. 
% On the other hand, the execution in Figure \ref{fig:over-litmus3} is a witness that the Twitter program is not robust against substituting \sic{} with \pcc{} and it violates the program specification since the registration of one of the users is lost. 
This \pcc{} trace is not possible under \sic{} because $\mbox{Register(u,p1)}$ and $\mbox{Register(u,p2)}$ observe the same prefix of the commit order (i.e., an empty prefix), but they write to a common memory location $\mbox{Password[u]}$ which is not allowed under \sic{}. 

On the other hand, the Betting program in Figure \ref{fig:over-litmus6}, which manages a set of bets, is robust against substituting \sic{} with \pcc{}. The first two processes execute one transaction that places a bet of a value $\mbox{v}$ with a unique bet identifier $\mbox{id}$, assuming that the bet expiration time is not yet reached (bets are recorded in the map \texttt{Bets}). The third process contains a single transaction that settles the betting assuming that the bet expiration time was reached and at least one bet has been placed. This transaction starts by taking a snapshot of the \texttt{Bets} map into a local variable \texttt{Bets'}, and then selects a random non-null value (different from $\bot$) in the map to correspond to the winning bet. 
%The state of the Betting program in Figure \ref{fig:over-litmus6} consists of an array of the values of placed bets. 
%The program has three processes, each of the first two processes contains one transaction that places a bet of a value $\mbox{v}$ with a unique bet identifier $\mbox{id}$ assuming that the bet expiration time isn't reached yet. The third process contains a single transaction that settles the betting assuming that the bet expiration time was reached and at least one bet was placed. 
%The second transaction can only be executed by the Croupier alone who cannot place a bet and it selects a random element of the array that is not null ($\bot$) to correspond to the winning bet. 
The intended specification of this program is that the winning bet corresponds to a genuine bet that was placed. 
Figure \ref{fig:over-litmus6} pictures a \pcc{} trace of Betting where $\mbox{SettleBet}$ observes only the bet of the first process $\mbox{PlaceBet(1,2)}$. The $\hbo$ dependency towards the second process denotes a read-write dependency ($\mbox{SettleBet}$ reads a cell of the map \texttt{Bets} which is overwritten by the second process). This trace is allowed under \sic{} because no two transactions write to the same location. % $\mbox{PlaceBet}$ transactions write to different variables

%constituted where the third process (Croupier) settling the bet with the transaction $\mbox{SettleBet()}$ only observe the bet of the first process $\mbox{PlaceBet(1,2)}$. This execution is allowed under \sic{} since the two $\mbox{PlaceBet}$ transactions write to different variables thus they can observe the same prefix. 
\smallskip
\noindent
\textbf{Checking robustness \sic{} vs \pcc{}.}
We reduce robustness against substituting \pcc{} with \ccc{} to a reachability problem under \serc{}. This reduction is based on a characterization of happens-before cycles\footnote{Traces with an acyclic happens-before are not robustness violations because they are admitted under serializability, which implies that they are admitted under the weaker model \sic{} as well.} that are possible under \pcc{} but not \sic{}, and the transformation described above that allows to simulate the \pcc{} semantics of a program on top of \serc{}. The former is used to define an instrumentation (monitor) for the transformed program that reaches an error state iff the original program is not robust. 
%We give here a reduction of robustness against substituting \sic{} with \pcc{} to reachability under \serc{}. Our reduction builds on the work of \cite{DBLP:conf/concur/0002G16} which gives a characterization of cycles in the happens-before relation permissible by the prefix consistency and \sic{}. In particular, we give a more precise characterization of the happens-before cycles under prefix consistency. Then, we show that the set of traces that are prefix consistent but not allowed under \sic{} is the difference between the two sets of traces defined by the characterization of cycles in the happens-before relation. More precisely, 
Therefore, we show that the happens-before cycles in \pcc{} traces that are not admitted by \sic{} must contain a transaction that (1) overwrites a value written by another transaction in the cycle and (2) reads a value overwritten by another transaction in the cycle. 
For instance, the trace of Twitter in Figure \ref{fig:over-litmus3} is not allowed under \sic{} because $\mbox{Register(u,p2)}$ overwrites a value written by $\mbox{Register(u,p1)}$ (the password) and reads a value overwritten by $\mbox{Register(u,p1)}$ (checking whether the username $u$ is registered). 
%Based on this characterization, we build an instrumentation of the original program to detect this kind of happens-before cycles such that the original program is not robustness iff the instrumentation reaches an error states under \serc{}. 
The trace of Betting in Figure \ref{fig:over-litmus6} is allowed under \sic{} because its happens-before is acyclic.
% doesn't include a transaction that overwrites a value written by another transaction and reads a value overwritten by another transaction.

\smallskip
\noindent
\textbf{Checking robustness using commutativity arguments.}
Based on the reductions above, we propose an approximated method for proving robustness based on the concept of mover in Lipton's reduction theory~\cite{DBLP:journals/cacm/Lipton75}. A transaction is a left (resp., right) mover if it commutes to the left (resp., right) of another transaction (by a different process) while preserving the computation. We use the notion of mover to characterize the data-flow dependencies in the happens-before. Roughly, there exists a data-flow dependency between two transactions in some execution if one doesn't commute to the left/right of the other one.

We define a commutativity dependency graph which summarizes the happens-before dependencies in all executions of a transformed program (obtained by splitting the transactions of the original program as explained above), and derive a proof method for robustness which inspects paths in this graph. Two transactions $\atr_1$ and $\atr_2$ are linked by a directed edge iff $\atr_1$ \emph{cannot} move to the right of $\atr_2$ (or $\atr_2$ cannot move to the left of $\atr_1$), or if they are related by the program order. Moreover, two transactions $\atr_1$ and $\atr_2$ are linked by an undirected edge iff they are the result of splitting the same transaction. 

A program is robust against substituting \pcc{} with \ccc{} if roughly, its commutativity dependency graph does \emph{not} contain a \emph{simple} cycle of directed edges with two distinct transactions $\atr_1$ and $\atr_2$, such that $\atr_1$ does not commute left because of another transaction $\atr_3$ in the cycle that reads a variable that $\atr_1$ writes to, and $\atr_2$ does not commute right because of another transaction $\atr_4$ in the cycle ($\atr_3$ and $\atr_4$ can coincide) that writes to a variable that $\atr_2$ either reads from or writes to\footnote{The transactions $\atr_1$, $\atr_2$, $\atr_3$, and $\atr_4$ correspond to $\atr_1$, $\atr_i$, $\atr_n$, and $\atr_{i+1}$, respectively, in Theorem~\ref{them:MovRobCcPc}.}. 
For instance, Figure \ref{fig:over-litmus8} shows the commutativity dependency graph of the transformed Betting program, which coincides with the original Betting because $\mbox{PlaceBet(1,2)}$ and $\mbox{PlaceBet(2,3)}$ are write-only transactions and $\mbox{SettleBet()}$ is a read-only transaction. Both simple cycles in Figure \ref{fig:over-litmus8} contain just two transactions and therefore do not meet the criterion above which requires at least 3 transactions. Therefore, Betting is robust against substituting \pcc{} with \ccc{}. 

A program is robust against substituting \sic{} with \pcc{}, if roughly, its commutativity dependency graph does \emph{not} contain a \emph{simple} cycle with two successive transactions $\atr_1$ and $\atr_2$ that are linked by an undirected edge, such that $\atr_1$ does not commute left because of another transaction $\atr_3$ in the cycle that writes to a variable that $\atr_1$ writes to, and $\atr_2$ does not commute right because of another transaction $\atr_4$ in the cycle ($\atr_3$ and $\atr_4$ can coincide) that writes to a variable that $\atr_2$ reads from\footnote{The transactions $\atr_1$, $\atr_2$, $\atr_3$, and $\atr_4$ correspond to $\atr_1$, $\atr_2$, $\atr_n$, and $\atr_3$, respectively, in Theorem~\ref{them:MovRobPcSi}.}. Betting is also robust against substituting \sic{} with \pcc{} for the same reason (simple cycles of size 2).
\vspace{-10pt}
\section{Consistency Models}
\vspace{-5pt}
\label{sec:consistency}

\begin{figure}[t]
  {\footnotesize
  \setlength{\grammarindent}{7em}
  \begin{grammar}
  <prog>  ::= \plog{program} <process>$^{*}$
  
  <process> ::= \plog{process} <pid> \plog{regs} <reg>$^{*}$ <txn>$^{*}$
  
  %<stmt> ::= <txn> | \plog{assume} <bexpr>";" <stmt> | \plog{repeat} <stmt>
  <txn> ::=  \plog{begin} <read>$^{*}$ <test>$^{*}$ <write>$^{*}$ \plog{commit}
  
  <read> ::= <label>":" <reg> ":=" <var>";" \plog{goto} <label>";"

  <test> ::= <label>":" \plog{assume} <bexpr>";" \plog{goto} <label>";"

  <write> ::= <label>":" <var> ":=" <reg-expr>";" \plog{goto} <label>";"
  \end{grammar}}
  \vspace{-10pt}
  \caption{The syntax of our programming language. $a^{*}$ indicates zero or more occurrences of $a$.  $\langle pid\rangle$, $\langle reg\rangle$, $\langle label \rangle$, and $\langle var\rangle$ represent a process identifier, a register, a label, and a shared variable, respectively. $\langle reg\text{-}expr \rangle$ is an expression over registers while $\langle bexpr \rangle$ is a Boolean expression over registers, or the non-deterministic choice $*$.
  }
  \label{Figure:syntax}
  \vspace{-13pt}
\end{figure}

\noindent
{\bf Syntax.}
We present our results in the context of the simple programming language, defined in Figure~\ref{Figure:syntax}, where a program is a parallel composition of \emph{processes} distinguished using a set of identifiers $\mathbb{P}$. 
A process is a sequence of \emph{transactions} and each transaction is a sequence of \emph{labeled instructions}.
A transaction starts with a \plog{begin} instruction and finishes with a \plog{commit} instruction. 
Instructions include assignments to a process-local \emph{register} from a set $\mathbb{R}$ or to a \emph{shared variable} from a set $\mathbb{V}$, or an \plog{assume}.
%statement. In Figure \ref{Figure:syntax}, we give the syntax of our simple programming language.  
The assignments use values from a data domain $\mathbb{D}$.
An assignment to a register $\langle reg\rangle := \langle var\rangle$ is called a \emph{read} of the shared-variable $\langle var\rangle$ and an assignment to a shared variable $\langle var\rangle := \langle reg\rangle$ is called a \emph{write} to the shared-variable $\langle var\rangle$.
The \plog{assume} $\langle bexpr\rangle$ blocks the process if the Boolean expression $\langle bexpr\rangle$ over registers is false. It can be used to model conditionals. The \plog{goto} statement transfers the control to the program location (instruction) specified by a given label. Since multiple instructions can have the same label, \plog{goto} statements can be used to mimic imperative constructs like loops and conditionals inside transactions. 

%The \plog{repeat} $\langle stmt\rangle$ allows to execute $\langle stmt\rangle$ an unbounded number of times.

We assume w.l.o.g. that every transaction is written as a sequence of reads or {\tt assume} statements followed by a sequence of writes (a single {\tt goto} statement from the sequence of read/{\tt assume} instructions transfers the control to the sequence of writes). In the context of the consistency models we study in this paper, every program can be equivalently rewritten as a set of transactions of this form. 

To simplify the technical exposition, programs contain a bounded number of processes and each process executes a bounded number of transactions. A transaction may execute an unbounded number of instructions but these instructions concern a bounded number of variables, which makes it impossible to model SQL (select/update) queries that may access tables with a statically unknown number of rows. Our results can be extended beyond these restrictions as explained in Remark~\ref{rem:robustness} and Remark~\ref{rem:comgraph}.

%The restriction to bounded programs and simple read/write instructions is only . Our results apply to programs with unbounded number of processes, and each process can call unbounded of transactions, e.g., call an arbitrary number of times a statically known set of transactions. Our results apply as well to SQL (select/update) queries with instructions that read/write a set of locations in one shot (Remarques \ref{rem:robustness} and \ref{rem:comgraph} in Sections~\ref{sec:robustness} and \ref{sec:commutativitygraph}). In particular, each row in a database table can be interpreted as memory locations identified by primary keys in unbounded domains, e.g., integers. The possibly unbounded sets of locations can be represented symbolically using the conditions in the SQL queries. Also, the experiments reported in Section~\ref{sec:experiments} consider programs with SQL transactions.
%%In every transaction of our language in Figure \ref{Figure:syntax}, reads proceed first then the writes occur afterward. 
%%It is important to note that this is not a limitation since it is natural that transaction reads first then it proceeds to writes 
%%afterward based on the values that were read.
%%To simplify the technical exposition, our syntax includes simple read/write instructions. However, our results apply as well to instructions that include SQL (select/update) queries. The experiments reported in Section~\ref{sec:Exper:paper} consider programs with SQL based transactions.
%% (\ccc{}), prefix consistency (\pcc{}), snapshot isolation (\sic{}), and serializability (\serc{}). 

\noindent
{\bf Semantics.}
We describe the semantics of a program under four consistency models, i.e., causal consistency\footnote{We consider a variation known as causal convergence~\cite{DBLP:journals/ftpl/Burckhardt14,DBLP:conf/popl/BouajjaniEGH17}} (\ccc{}), prefix consistency (\pcc{}), snapshot isolation (\sic{}), and serializability (\serc{}).

%. Our results apply to the other variations defined in~\cite{DBLP:conf/popl/BouajjaniEGH17}. We elaborate on this in the Appendix

%In particular, for causal consistency model, we consider the semantics of causal convergence variation~\cite{DBLP:journals/ftpl/Burckhardt14}. But, all the results presented in this %paper do hold for the other two variations of causal consistency namely causal memory and weak causal consistency~\cite{DBLP:conf/ppopp/PerrinMJ16,DBLP:conf/popl/BouajjaniEGH17}. 

In the semantics of a program under \ccc{}, shared variables are replicated across each process, each process maintaining its own local valuation of these variables. During the execution of a transaction in a process, its writes are stored in a \emph{transaction log} that can be accessed only by the process executing the transaction and that is broadcasted to all the other processes at the end of the transaction. To read a shared variable $\anaddr$, a process $\apr$ first accesses its transaction log and takes the last written value on $\anaddr$, if any, and then its own valuation of the shared variable, if  $\anaddr$ was not written during the current transaction. Transaction logs are delivered to every process in an order consistent with the \emph{causal} relation between transactions, i.e., the transitive closure of the union of the \emph{program order} (the order in which transactions are executed by a process), and the \emph{read-from} relation (a transaction $\atr_1$ reads-from a transaction $\atr_2$ iff $\atr_1$ reads a value that was written by $\atr_2$). When a process receives a transaction log, it immediately applies it on its shared-variable valuation. 

In the semantics of a program under \pcc{} and \sic{}, shared variables are stored in a central memory and each process keeps a local valuation of these variables. When a process starts a new transaction, it fetches a consistent snapshot of the shared variables from the central memory and stores it in its local valuation of these variables. 
During the execution of a transaction in a process, writes to shared variables are stored in the local valuation  of these variables, and in a transaction log. To read a shared variable, a process takes its own valuation of the shared variable. A process commits a transaction by applying the updates in the transaction log on the central memory in an atomic way (to make them visible to all processes). Under \sic{}, when a process applies the writes in a transaction log on the central memory, it must ensure that there were no concurrent writes that occurred after the last fetch from the central memory to a shared variable that was written during the current transaction. Otherwise, the transaction is aborted and its effects discarded.

In the semantics of a program under \serc{}, we adopt a simple operational model where we keep a single shared-variable valuation in a central memory (accessed by all processes) with the standard interpretation of read and write statements. Transactions execute serially, one after another.

%Given a semantics $\textsf{X}$ and a program $\aprog$, the set of executions of $\aprog$ under $\textsf{X}$ is denoted by $\executionsconf_{\textsf{X}}(\aprog)$. Intuitively, an execution $\rho \in \executionsconf_{\textsf{X}}(\aprog)$ is a sequence of transactions $\atr \in \mathbb{T}$.

We use a standard model of executions of a program called \emph{trace}. A trace represents the order between transactions in the same process, and the data-flow in an execution using standard happens-before relations between transactions. We assume that each transaction in a program is identified uniquely using a transaction identifier from a set $\mathbb{T}$. 
Also, $\amap: \mathbb{T} \rightarrow 2^{\mathbb{S}}$ is a mapping that associates each transaction in $\mathbb{T}$ with a sequence of read and write events from the set 
\begin{align*}
  \mathbb{S} = \{\readact(\atr,\anaddr,\aval), \writeact(\atr,\anaddr,\aval): \atr\in \mathbb{T}, \anaddr\in \mathbb{V}, \aval\in \mathbb{D}\}
\end{align*}
  where $\readact(\atr,\anaddr,\aval)$ is a read of $\anaddr$ returning $\aval$, and $\writeact(\atr,\anaddr,\aval)$ is a write of $\aval$ to $\anaddr$.
%abstracts away the order in which shared-variables are accessed inside a transaction and the order between transactions accessing different variables.

\begin{definition}
%  Given an execution $\rho \in \executionsconf_{\textsf{X}}(\aprog)$, its execution trace (trace for short) is represented by 
  A \emph{trace} is a tuple $\atrace = (\rho,\amap,\tor,\po,\rfo,\sto,\cfo)$ where $\rho\subseteq \mathbb{T}$ is a set of transaction identifiers, and
\begin{itemize}
\item $\tor$ is a mapping giving the order between events in each transaction, i.e., it associates each transaction $\atr$ in $\rho$ with a total order $\tor(\atr)$ on $\amap(\atr) \times \amap(\atr)$.
\item  $\po$ is the program order relation, a strict partial order on $\rho \times \rho$ that orders every two transactions issued by the same process.
\item  $\rfo$ is the read-from relation between distinct transactions $(\atr1, \atr2) \in \rho \times \rho$ %that occurs in this order in $\rho$ 
representing the fact that $\atr2$ reads a value written by $\atr1$.
\item  $\sto$ is the store order relation on $\rho \times \rho$ between distinct transactions that write to the same shared variable. 
\item $\cfo$ is the conflict order relation between distinct transactions, defined by $\cfo = \rfo^{-1};\sto$ ($;$ denotes the sequential composition of two relations).
\end{itemize}
\end{definition}

For simplicity, for a trace $\atrace = (\rho,\amap,\tor,\po,\rfo,\sto,\cfo)$, we write $t\in \atrace$ instead of $t\in\rho$.
We also assume that each trace contains a fictitious transaction that writes the initial values of all shared variables, and which is ordered before any other transaction in program order.
Also, $\tracesconf_{\textsf{X}}(\aprog)$ is the set of traces representing executions of program $\aprog$ under a consistency model $\textsf{X}$. 

For each $\textsf{X}\in \{\ccc{},\pcc{},\sic{},\serc{}\}$, the set of traces $\tracesconf_{\textsf{X}}(\aprog)$ can be described using the set of properties in Table \ref{fig:defaxioms}.
%In Table \ref{fig:defaxioms}, we give the declarative definitions for the four consistency models we consider in this work. 
%
A trace $\atrace$ is possible under causal consistency iff there exist two relations $\viso$ a partial order (causal order) and 
$\arbo$ a total order (arbitration order) that includes $\viso$, such that the properties $\axpoco$, $\axcoarb$, and $\axretval$ hold \cite{DBLP:conf/concur/CeroneGY17,DBLP:conf/popl/BouajjaniEGH17}. 
$\axpoco$ guarantees that the program order and the read-from relation are included in the causal order, and 
$\axcoarb$ guarantees that the causal order and the store order are included in the arbitration order. 
$\axretval$ guarantees that a read returns the value written by the last write in the last transaction that contains a write to the same variable and that is ordered by $\viso$ before the read's transaction. 
We use $\axcc$ to denote the conjunction of these three properties.
A trace $\atrace$ is possible under prefix consistency iff there exist a causal order $\viso$ and an arbitration order
$\arbo$ such that $\axcc$ holds and the property $\axprefix$ holds as well \cite{DBLP:conf/concur/CeroneGY17}. 
$\axprefix$ guarantees that every transaction observes a prefix of transactions that are ordered by $\arbo$ before it. 
We use $\axpc$ to denote the conjunction of $\axcc$ and $\axprefix$. 
A trace $\atrace$ is possible under snapshot isolation iff there exist a causal order $\viso$ and an arbitration order
$\arbo$ such that $\axpc$ holds and the property $\axconflict$ holds \cite{DBLP:conf/concur/CeroneGY17}. 
$\axconflict$ guarantees that if two transactions write to the same variable then one of them must observe the other. 
We use $\axsi$ to  denote the conjunction of $\axpc$ and $\axconflict$. 
A trace $\atrace$ is serializable iff there exist a causal order $\viso$ and an arbitration order
$\arbo$ such that the property $\axser$ holds which implies that the two relations $\viso$ and $\arbo$ coincide. 
Note that for any given program $\aprog$, $\tracesconf_{\serc{}}(\aprog)\subseteq \tracesconf_{\sic{}}(\aprog)\subseteq \tracesconf_{\pcc{}}(\aprog)\subseteq \tracesconf_{\ccc{}}(\aprog)$. Also, the four consistency models we consider disallow anomalies such as dirty and phantom reads. 
%\subsection{Axiomatic Definitions}

\begin{table}[t]

    \begin{tabular}{l|l}
    $\axpoco$ & $\viso_{0}^{+} \subseteq \viso$ \\
     $\axcoarb$ & $\arbo_{0}^+ \subseteq \arbo$\\
     $\axcc$ & $\axretval \wedge \axpoco \wedge \axcoarb$  \\
     $\axprefix$ & $\arbo ; \viso \subseteq \viso$ \\
     $\axpc$ &  $ \axprefix \wedge \axcc$ \\
     $\axconflict$ & $\sto \subseteq \viso$\\
     $\axsi$ & $\axconflict \wedge \axpc$ \\
     $\axser$ & $\axretval \wedge \axpoco \wedge \axcoarb \wedge \viso = \arbo  $  \\
    \end{tabular}
    
    \vspace{1mm}
    %\hspace{-5mm}
    where 

    $\viso_{0} = \po \cup \rfo$ and 
    $\arbo_{0} = \po \cup \rfo \cup \sto$

%\vspace{2mm}
%\hspace{-6cm}
$\axretval$ = $\forall\ t\in \atrace.\ \forall\ \readact(\atr,\anaddr,\aval) \in \amap(\atr)$ we have that
    \begin{itemize}
        \item there exist a transaction $\atr_0 = Max_{\arbo}(\{\atr' \in \atrace\ |\ (\atr',\atr) \in \viso \wedge \exists\ \writeact(\atr',\anaddr,\cdot)\in\amap(\atr')\})$ and an event $\writeact(\atr_0,\anaddr,\aval) = Max_{\tor(\atr_0)}(\{\writeact(\atr_0,\anaddr,\cdot) \in \amap(\atr_0)\})$.
    \end{itemize}
    \caption{Declarative definitions of consistency models. For an order relation $\leq$, $a = Max_{\leq}(A)$ iff $a \in A \wedge \forall\ b \in A.\ b \leq a$.}
    \label{fig:defaxioms}
    \vspace{-25pt}
\end{table}
%

%
%An important relation which characterizes traces is the happens before order which for a given trace  corresponds to 
For a given trace $\atrace=(\rho,\amap,\tor, \po, \rfo, \sto, \cfo)$, the happens before order is the transitive closure of the union of all the relations in the trace, i.e., $\hbo = (\po \cup \rfo \cup \sto \cup \cfo)^{+}$. 
A classic result states that a trace $\atrace$ is serializable iff $\hbo$ is acyclic~\cite{Adya99,DBLP:journals/toplas/ShashaS88}. 
Note that $\hbo$ is acyclic implies that $\sto$ is a total order between transactions that write to the same variable, and 
$(\po \cup \rfo)^{+}$ and $(\po \cup \rfo \cup \sto)^{+}$ are acyclic. 

%We say that a trace contains a cycle when its $\hbo$ is cyclic. 
%For a given trace $\atrace$ when its $\hbo$ is cyclic which says that the trace contains a cycle.
%The following lemma characterizes serializable traces using the happens before order.

%\begin{lemma}[\cite{Adya99,DBLP:journals/toplas/ShashaS88}]\label{th:acyclicity}
% A trace $\atrace$ is serializable iff $\hbo$ is acyclic.
%\end{lemma}

\vspace{-10pt}
\subsection{Robustness}\label{subsec:rob}

In this work, we investigate the problem of checking whether a program $\aprog$ under a semantics $\textsf{Y} \in \{\pcc{},\ \sic{}\}$ produces the same set of traces as under a weaker semantics $\textsf{X} \in \{\ccc{},\ \pcc{}\}$. When this holds, we say that $\aprog$ is \emph{robust} against $\textsf{X}$ relative to $\textsf{Y}$. 
% for a given two semantics $\textsf{X}$ and $\textsf{Y}$ such that the semantics $\textsf{Y}$ is stronger than the semantics $\textsf{X}$ the problem of checking whether for the semantics $\textsf{X}$ of a program ($\textsf{X} \in \{\ccc{},\ \pcc{}\}$) produces the same set of traces the semantics $\textsf{Y}$ ($\textsf{Y} \in \{\pcc{},\ \sic{}\}$).

\begin{definition}
A program $\aprog$ is called \emph{robust} against a semantics $\textsf{X} \in \{\ccc{},\ \pcc{},\ \sic{}\}$ relative to a semantics $\textsf{Y} \in \{\pcc{},\ \sic{},\ \serc{}\}$ such that $\textsf{Y}$ is stronger than $\textsf{X}$ iff $\tracesconf_{\textsf{X}}(\aprog)=\tracesconf_{\textsf{Y}}(\aprog)$.
\end{definition}

If $\aprog$ is not robust against $\textsf{X}$ relative to $\textsf{Y}$ then there must exist a trace $\atrace \in \tracesconf_{\textsf{X}}(\aprog) \setminus \tracesconf_{\textsf{Y}}(\aprog)$. %(since $\tracesconf_{\textsf{Y}}(\aprog)$ is a subset of $\tracesconf_{\textsf{X}}(\aprog)$). 
We say that $\atrace$ is a robustness violation trace. 

\tikzset{transactionBlock/.style={draw=black!0}}
\begin{wrapfigure}{r}{0.39\textwidth}
\vspace{-20pt}
\lstset{basicstyle=\ttfamily\scriptsize}
\begin{subfigure}{55mm}
%\centering
\scalebox{0.71}
{
\begin{tikzpicture}[->,>=stealth',shorten >=1pt,auto,node distance=4cm,
  semithick, transform shape,every text node part/.style={align=left}]
 \node[shape=rectangle ,draw=none,font=\large, label={left:$\atr_1$}] at (-2.5,0)  (m)  {$[x := 1]$};
 \node[shape=rectangle ,draw=none,font=\large, label={left:$\atr_2$}] at (-2.5,-2)       (n)           {$[r1 := y]\ //0$};
 \node[shape=rectangle ,draw=none,font=\large, label={right:$\atr_3$}] at (0,0) (p){$[y := 1]$};
 \node[shape=rectangle ,draw=none,font=\large, label={right:$\atr_4$}] at (0,-2) (l) {$[r2 := x]\ //0$};
 \begin{scope}[ every edge/.style={draw=black,very thick}]
 \path[->] (m) edge[left] node {$\po$} (n);
 \path[->] (n) edge[left] node {$\cfo$} (p);
 \path[->] (p) edge[right] node {$\po$} (l);
 \path[->] (l) edge[right] node {$\cfo$} (m);
\end{scope}
\end{tikzpicture}}
\caption{Store Buffering ($\mathsf{SB}$).}
\label{fig:litmus1}
\end{subfigure}
\hspace{2mm}
\begin{subfigure}{75mm}
%\centering
\scalebox{0.71}
{
\begin{tikzpicture}[->,>=stealth',shorten >=1pt,auto,node distance=4cm,
  semithick, transform shape,every text node part/.style={align=left}]
 \node[shape=rectangle ,draw=none,font=\large, label={left:$\atr_1$}] at (-3,0)  (m)  {$[r1 := x\ \ //0$ \\ $\ x := r1 + 1]$};
 \node[shape=rectangle ,draw=none,font=\large, label={right:$\atr_2$}] at (0,0) (p){$[r2 := x\ \  //0$ \\ $\ x := r2 + 1]$};
 \begin{scope}[ every edge/.style={draw=black,very thick}]
 \path[->] (m) edge[bend right,above] node {$\sto$} (p);
 %\path (m) edge[above] node {$\sto$} (p);
 \path[->] (p) edge[bend right,above] node {$\cfo$} (m);
\end{scope}
\end{tikzpicture}}
\caption{Lost Update ($\mathsf{LU}$).}
\label{fig:litmus2}
\end{subfigure}
    
\begin{subfigure}{65mm}
\scalebox{0.71}
{
\begin{tikzpicture}[->,>=stealth',shorten >=1pt,auto,node distance=4cm,
  semithick, transform shape,every text node part/.style={align=left}]
 \node[shape=rectangle ,draw=none,font=\large, label={left:$\atr_1$}] at (-3,0)  (m)  {$[r1 := x\ \ //0$ \\ $\ y := 1]$};
 \node[shape=rectangle ,draw=none,font=\large, label={right:$\atr_2$}] at (0,0) (p){$[r2 := y\ \ //0$ \\ $\ x := 1]$};
 \begin{scope}[ every edge/.style={draw=black,very thick}]
 \path[->] (m) edge[bend right,above] node {$\cfo$} (p);
 \path[->] (p) edge[bend right,above] node {$\cfo$} (m);
\end{scope}
\end{tikzpicture}}
\caption{Write Skew (WS).}
\label{fig:litmus3}
\end{subfigure}
\hfill
\begin{subfigure}{60mm}
  \scalebox{0.71}
  {
\begin{tikzpicture}[->,>=stealth',shorten >=1pt,auto,node distance=4cm,
 semithick, transform shape,every text node part/.style={align=left}]
\node[shape=rectangle ,draw=none,font=\large, label={left:$\atr_1$}] at (-2.5,0)  (m)  {$[x := 1]$};
\node[shape=rectangle ,draw=none,font=\large, label={left:$\atr_2$}] at (-2.5,-2) (n)  {$[y := 1]$};
\node[shape=rectangle ,draw=none,font=\large, label={right:$\atr_3$}] at (0,0) (p){$[r1 := y]\ //1$};
\node[shape=rectangle ,draw=none,font=\large, label={right:$\atr_4$}] at (0,-2) (l) {$[r2 := x]\ //1$};
\begin{scope}[ every edge/.style={draw=black,very thick}]
\path[->] (m) edge[left] node {$\po$} (n);
\path[->] (n) edge[left] node {$\rfo$} (p);
\path[->] (p) edge[right] node {$\po$} (l);
\path[->] (m) edge[right] node {$\rfo$} (l);
\end{scope}
\end{tikzpicture}}
\caption{Message Passing (MP).}
\label{fig:litmus4}
\end{subfigure}
\vspace{-5pt}
\caption{Litmus programs}\label{fig:litmustests}
\vspace{-20pt}
\end{wrapfigure}
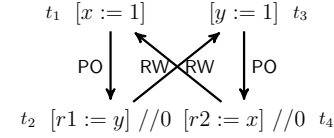
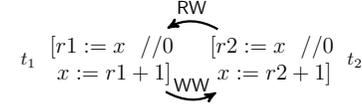
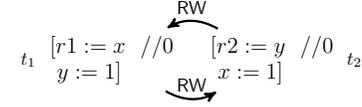
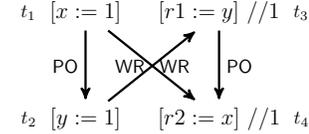

We illustrate the notion of robustness on the programs in Figure \ref{fig:litmustests}, which are commonly used in the literature. 
In all programs, transactions of the same process are aligned vertically and ordered from top to bottom.
%if transactions are on the same column, it implies they are executed by the same process top down.
Each read instruction is commented with the value it reads in some execution.
%Also, we use the annotation //u beside a read instruction to denote the value that was read.

%TODO FOR EACH OF THESE PROGRAMS CONCLUDE WITH SOMETHING ABOUT ROBUSTNESS.

The store buffering ($\mathsf{SB}$) program in Figure \ref{fig:litmus1} contains four transactions that are issued by two distinct processes. 
We emphasize an execution where $\atr_2$ reads $0$ from $y$ and $\atr_4$ reads $0$ from $x$. 
This execution is allowed under \ccc{} since the two writes by $\atr_1$ and $\atr_3$ are not causally dependent. 
Thus, $\atr_2$ and $\atr_4$ are executed without seeing the writes from $\atr_3$ and $\atr_1$, respectively. 
However, this execution is not feasible under \pcc{} (which implies that it is not feasible under both \sic{} and \serc{}). 
In particular, we cannot have neither $(\atr_1,\atr_3) \in \arbo$ nor $(\atr_3,\atr_1) \in \arbo$ which contradicts the fact that $\arbo$ is total order. 
For example, if $(\atr_1,\atr_3) \in \arbo$, then $(\atr_1,\atr_4) \in \viso$ (since $\arbo;\viso \subset  \viso$) which contradicts the fact 
that $\atr_4$ does not see $\atr_1$. Similarly, $(\atr_3,\atr_1) \in \arbo$ implies that $(\atr_3,\atr_2) \in \viso$ which contradicts the fact 
that $\atr_2$ does not see $\atr_3$. Thus, $\mathsf{SB}$ is not robust against \ccc{} relative to \pcc{}.

The lost update ($\mathsf{LU}$) program in Figure \ref{fig:litmus2} has two transactions that are issued by two distinct processes. 
We highlight an execution where both transactions read $0$ from $x$. 
This execution is allowed under \pcc{} since both transactions are not causally dependent and can be executed in parallel by the two processes. However, it is not allowed under \sic{} since both transactions write to a common variable (i.e., $x$). 
Thus, they cannot be executed in parallel and one of them must see the write of the other. Thus, $\mathsf{SB}$ is not robust against \pcc{} relative to \sic{}.

The write skew ($\mathsf{WS}$) program in Figure \ref{fig:litmus3} has two transactions that are issued by two distinct processes. We highlight an execution where $\atr_1$ reads $0$ from $x$ and $\atr_2$ reads $0$ from $y$. This execution is allowed under \sic{} since both transactions are not causally dependent, do not write to a common variable, and can be executed in parallel by the two processes. However, this execution is not allowed under \serc{} since one of the two transactions must see the write of the other. 
Thus, $\mathsf{WS}$ is not robust against \sic{} relative to \serc{}. 

The message passing  ($\mathsf{MP}$) program in Figure \ref{fig:litmus4} has four transactions issued by two processes. 
Because $\atr_1$ and $\atr_2$ are causally dependent, under any semantics $\textsf{X} \in \{\ccc{},\ \pcc{},\ \sic{},\ \serc{}\}$ we only have three possible executions of $\mathsf{MP}$,
which correspond to either $\atr_3$ and $\atr_4$ not observing the writes of $\atr_1$ and $\atr_2$, or $\atr_3$ and $\atr_4$ observe the writes of both $\atr_1$ and $\atr_2$, or $\atr_4$ observes the write of $\atr_1$ (we highlight the values read in the second case in Figure \ref{fig:litmus4}). 
Therefore, the executions of this program under the four consistency models coincide. Thus, $\mathsf{MP}$ is robust against \ccc{} relative to any other model.

\vspace{-10pt}
\section{Robustness Against \ccc{} Relative to \pcc{}}
\label{sec:CCPCrobustness}
\vspace{-5pt}
We show that checking robustness against \ccc{} relative to \pcc{} can be reduced to checking robustness against \ccc{} relative to \serc{}. 
The crux of this reduction is a program transformation that allows to simulate the \pcc{} semantics of a program $\aprog$ using the \serc{} semantics of a program $\aprog_\pcinstr$. 
Checking robustness against \ccc{} relative to \serc{} can be reduced in polynomial time to reachability under \serc{}~\cite{DBLP:conf/concur/BeillahiBE19}.

Given a program $\aprog$ with a set of transactions $\trsaprog{\aprog}$, we define a program $\aprog_\pcinstr$ %, an instrumentation of $\aprog$, 
such 
that every transaction $\atr\in \trsaprog{\aprog}$ is split into a transaction $\atrrd{\atr}$ that contains all the read/{\tt assume} statements in $\atr$ (in the same order) and 
another transaction $\atrwr{\atr}$ that contains all the write statements in $\atr$ (in the same order). 
In the following, we establish the following result:
%Having the instrumented program $\aprog_\pcinstr$ allows us to reduce the problem of robustness 
%against \ccc{} relative to \pcc{} for $\aprog$ to robustness against \ccc{} relative to \serc{} for $\aprog_\pcinstr$ which was shown in \cite{DBLP:conf/concur/BeillahiBE19} that it can be reduced in polynomial time to reachability under \serc{}.

\begin{theorem} \label{them:RobCcPc}
A program $\aprog$ is robust against \ccc{} relative to \pcc{} iff  $\aprog_\pcinstr$ is robust against \ccc{} relative to \serc{}.
\end{theorem}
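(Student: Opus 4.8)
The plan is to reduce the theorem to a single trace-level correspondence between $\aprog$ and its split version $\aprog_\pcinstr$, and then to read off both robustness properties from it. By definition a program is non-robust exactly when it admits a trace under the weaker model that is not admitted under the stronger one, and we always have $\tracesconf_{\pcc{}}(\aprog)\subseteq\tracesconf_{\ccc{}}(\aprog)$ and $\tracesconf_{\serc{}}(\aprog_\pcinstr)\subseteq\tracesconf_{\ccc{}}(\aprog_\pcinstr)$. Hence it suffices to exhibit a bijection $\gamma$ from $\tracesconf_{\ccc{}}(\aprog)$ onto $\tracesconf_{\ccc{}}(\aprog_\pcinstr)$ whose restriction is a bijection from $\tracesconf_{\pcc{}}(\aprog)$ onto $\tracesconf_{\serc{}}(\aprog_\pcinstr)$. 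Given such a $\gamma$, a witness $\atrace\in\tracesconf_{\ccc{}}(\aprog)\setminus\tracesconf_{\pcc{}}(\aprog)$ of non-robustness of $\aprog$ maps to a witness $\gamma(\atrace)\in\tracesconf_{\ccc{}}(\aprog_\pcinstr)\setminus\tracesconf_{\serc{}}(\aprog_\pcinstr)$, and, since $\gamma$ is onto, any such witness on the $\aprog_\pcinstr$ side is the image of one on the $\aprog$ side; this is precisely the claimed equivalence.

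First I would define $\gamma$ explicitly on traces. Given $\atrace=(\rho,\amap,\tor,\po,\rfo,\sto,\cfo)$, the trace $\gamma(\atrace)$ contains, for each $\atr\in\rho$, two transactions $\atrrd{\atr}$ and $\atrwr{\atr}$, with $\amap$ distributing the read/{\tt assume} events of $\atr$ to $\atrrd{\atr}$ and its write events to $\atrwr{\atr}$ (keeping the internal order $\tor$ on each part). The relations lift in the only way compatible with the split: each edge $\atr_1\rfo\atr_2$ becomes $\atrwr{\atr_1}\rfo\atrrd{\atr_2}$, each $\atr_1\sto\atr_2$ becomes $\atrwr{\atr_1}\sto\atrwr{\atr_2}$, program order is extended by $\atrrd{\atr}\po\atrwr{\atr}$ while keeping the original order between transactions, and $\cfo$ is recomputed as $\rfo^{-1};\sto$ (yielding edges $\atrrd{\atr_2}\cfo\atrwr{\atr_3}$). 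I would check that $\gamma(\atrace)$ is a genuine trace of $\aprog_\pcinstr$: here the standing assumption that every transaction is a block of reads/{\tt assume}s followed by a block of writes is exactly what makes the split sound, since the reads of $\atr$ never depend on its own writes and the returned values are therefore unchanged. The inverse $\gamma^{-1}$ merges $\atrrd{\atr}$ and $\atrwr{\atr}$ back into $\atr$; it is well defined on $\tracesconf_{\ccc{}}(\aprog_\pcinstr)$ because program order forces $\atrrd{\atr}$ and $\atrwr{\atr}$ to be the consecutive read/write halves of a single original transaction, so the pairing is canonical, and because acyclicity of $(\po\cup\rfo)^{+}$ under $\axcc$ rules out the $\rfo$-cycles that could otherwise obstruct merging. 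This yields the bijection.

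Next I would prove the two equivalences that make $\gamma$ respect the consistency models, using the declarative characterizations of Table~\ref{fig:defaxioms}. For \ccc{}: $\atrace\models\axcc$ iff $\gamma(\atrace)\models\axcc$. Given $\viso,\arbo$ witnessing $\axcc$ on $\atrace$, I would transfer them to $\gamma(\atrace)$ by placing $\atrrd{\atr}$ immediately before $\atrwr{\atr}$ at the position of $\atr$; the inclusions $\axpoco$, $\axcoarb$ and the return-value axiom $\axretval$ carry over directly because the split preserves both which write each read observes and the store order, and the converse is symmetric. The delicate step, and the main obstacle, is the \pcc{}/\serc{} equivalence: $\atrace\models\axpc$ iff $\gamma(\atrace)\models\axser$, i.e.\ iff the happens-before of $\gamma(\atrace)$ is acyclic. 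Here the prefix axiom $\axprefix$ ($\arbo;\viso\subseteq\viso$) on $\atrace$ is precisely what lets the arbitration order of $\atrace$, read on the write-parts, be interleaved with the read-parts into a single total order on the transaction set of $\gamma(\atrace)$ that linearizes $\hbo$; conversely a serialization of $\gamma(\atrace)$ induces a commit order on the write-parts under which every read-part observes exactly a prefix, which is $\axprefix$. Concretely, the only edges leaving a read-part $\atrrd{\atr}$ are $\po$ and $\cfo$ edges into write-parts, so I expect the argument to reduce to showing that an $\hbo$ cycle in $\gamma(\atrace)$ exists iff the snapshots of $\atrace$ cannot be oriented as prefixes of one common commit order, i.e.\ iff $\axprefix$ fails while $\axcc$ holds. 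Faithfully tracking how an $\hbo$ cycle through the read- and write-parts of $\gamma(\atrace)$ translates into a prefix-property violation in $\atrace$ and back, including the handling of the fictitious initializing transaction, is where the real care is needed; the \ccc{} direction and the bookkeeping establishing that $\gamma$ is a bijection are comparatively routine.
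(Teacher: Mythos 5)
Your proposal follows the paper's decomposition exactly: your bijection $\gamma$ is the paper's split map $\atrace\mapsto\atrace_\pcinstr$ (and its merge inverse), your \ccc{}-preservation claim is Lemma~\ref{lem:CcCc}, your ``$\axpc$ iff $\axser$'' claim is Lemma~\ref{lem:PcSer}, and the final step of reading off robustness from a witness trace is verbatim the paper's two-line proof of the theorem. The one place where you take a genuinely different route is inside the step you correctly single out as the hard one, namely ``$\atrace$ is \pcc{} implies $\atrace_\pcinstr$ is \serc{}''. The paper proves this via a cycle-shape lemma (Lemma~\ref{lem:pccycles}: every $\hbo$ cycle of a \pcc{} trace contains two successive $\cfo$ edges or a $\sto$ followed by a $\cfo$) and then observes that splitting breaks exactly such cycles, since the outgoing $\cfo$ edge of the pivot transaction leaves its read part while the incoming $\sto$/$\cfo$ edge enters its write part. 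You instead propose to exhibit a serialization of $\atrace_\pcinstr$ directly by inserting each $\atrrd{\atr}$ into the arbitration order immediately after the $\arbo$-prefix that $\atr$ observes; this does work ($\axprefix$ guarantees the visible set is a downward-closed prefix of $\arbo$, and $\axretval$ forces every $\cfo$-successor of $\atrrd{\atr}$ to lie outside that prefix), and it is arguably more direct --- though the paper's shape lemma is not wasted, as it is reused in Theorem~\ref{corol:pcsi}. One caveat: dismissing the \ccc{} converse (merging) as ``symmetric'' undersells it. Naively merging a causal-order witness of $\atrace_\pcinstr$ can create spurious cycles: a partial order on the split trace may contain both $\atrwr{\atr_1}\rightarrow\atrrd{\atr_2}$ and $\atrwr{\atr_2}\rightarrow\atrrd{\atr_1}$, which collapse to a two-cycle after merging. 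The paper avoids this by arguing on the forced relations only, via the acyclicity characterization of Lemma~\ref{lem:cycles} and the observation that cycles in $\arbo_0^+$ and $\viso_0^+;\cfo$ survive splitting; your argument needs the same restriction to $\viso_0$ and $\arbo_0$ (which your aside about $(\po\cup\rfo)^{+}$-acyclicity gestures at) to go through.
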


Intuitively, under \pcc{}, processes can execute concurrent transactions that fetch the same consistent snapshot of the shared variables from the central memory and subsequently commit their writes. Decoupling the read part of a transaction from the write part allows to simulate such behaviors even under \serc{}.

%TODO THE MERGING BUSINESS IS NOT WELL DEFINED.

The proof of this theorem relies on several intermediate results concerning the relationship between traces of $\aprog$ and $\aprog_\pcinstr$. 
%We first define a mapping between the traces of these two programs. Therefore, 
Let $\atrace= (\rho, \po, \rfo, \sto, \cfo) \in \tracesconf_{\textsf{X}}(\aprog)$ be a trace of a program $\aprog$ under a semantics $\textsf{X}$. We define the trace $\atrace_\pcinstr= (\rho_\pcinstr, \po_\pcinstr, \rfo_\pcinstr, \sto_\pcinstr, \cfo_\pcinstr)$ where every transaction $\atr \in \atrace$ is split into two  
transactions $\atrrd{\atr}\in \atrace_\pcinstr$ and $\atrwr{\atr} \in \atrace_\pcinstr$, and the dependency relations are straightforward adaptations, i.e., 
\begin{itemize}[topsep=3pt]
	\item $\po_\pcinstr$ is the smallest transitive relation that includes $(\atrrd{\atr},\atrwr{\atr})$ for every $\atr$, and $(\atrwr{\atr},\atrrd{\atr'})$ if $(\atr,\atr')\in \po$,
	\item %	$\po_\pcinstr$ is transitive, $(\atrrd{\atr},\atrwr{\atr}) \in \po_\pcinstr$, and applying the necessary changes to the dependency relations, i.e., $(\atr',\atr) \in \po$, 
	$(\atrwr{\atr'},\atrrd{\atr}) \in \rfo_\pcinstr$, $(\atrwr{\atr'},\atrwr{\atr}) \in \sto_\pcinstr$, and $(\atrrd{\atr'},\atrwr{\atr}) \in \cfo_\pcinstr$ if 
	$(\atr',\atr) \in \rfo$, $(\atr',\atr) \in \sto$, and $(\atr',\atr) \in \cfo$, respectively. 
\end{itemize}
%The resulting trace is denoted by . 

\begin{wrapfigure}{r}{0.45\textwidth}
    \vspace{-20pt}
    \lstset{basicstyle=\ttfamily\scriptsize}
    \centering
    \scalebox{0.70}
    {
    \begin{tikzpicture}[->,>=stealth',shorten >=1pt,auto,node distance=4cm,
      semithick, transform shape,every text node part/.style={align=left}]
     \node[shape=rectangle ,draw=none,font=\large, label={left:$\atrrd{\atr_1}$}] at (-3,0)  (m)  {$[r1 = x]\ //0$};
     \node[shape=rectangle ,draw=none,font=\large, label={left:$\atrwr{\atr_1}$}] at (-3,-1.5) (m1)  {$[x = r1 + 1]$};
     \node[shape=rectangle ,draw=none,font=\large, label={right:$\atrrd{\atr_2}$}] at (0,0) (p){$[r2 = x]\ //0$};
     \node[shape=rectangle ,draw=none,font=\large, label={right:$\atrwr{\atr_2}$}] at (0,-1.5) (p1){$[x = r2 + 1]$};
     \begin{scope}[ every edge/.style={draw=black,very thick}]
     \path[->] (m1) edge[below] node {$\sto$} (p1);
     \path[->] (m) edge[left] node {$\po$} (m1);
     \path[->] (m) edge[below] node {$\cfo$} (p1);
     %\path (m) edge[above] node {$\sto$} (p);
     \path[->] (p) edge[above] node {$\cfo$} (m1);
     \path[->] (p) edge[left] node {$\po$} (p1);
    \end{scope}
    \end{tikzpicture}}
    \vspace{-7pt}
    \caption{A trace of the transformed LU program ($\mathsf{LU}_{\pcinstr}$).}
    \label{fig:litmus2Instr}
    \vspace{-20pt}
\end{wrapfigure}
For instance, Figure \ref{fig:litmus2Instr} pictures the trace $\atrace_\pcinstr$ for the $\mathsf{LU}$ trace $\atrace$ given in Figure \ref{fig:litmus2}. 
For traces $\atrace$ of programs that contain singleton transactions, e.g., $\mathsf{SB}$ in Figure \ref{fig:litmus1}, $\atrace_\pcinstr$ coincides with $\atrace$.

Conversely, for a given trace $\atrace_\pcinstr= (\rho_\pcinstr, \po_\pcinstr, \rfo_\pcinstr, \sto_\pcinstr, \cfo_\pcinstr) \in \tracesconf_{\textsf{X}}(\aprog_\pcinstr)$ of a program $\aprog_\pcinstr$ under a semantics $\textsf{X}$, we define the trace $\atrace= (\rho, \po, \rfo, \sto, \cfo)$ where every two components $\atrrd{\atr}$ and $\atrwr{\atr}$ are merged into a transaction $\atr \in \atrace$. The dependency relations are defined in a straightforward way, e.g., if $(\atrwr{\atr'},\atrwr{\atr}) \in \sto_\pcinstr$ then  $(\atr',\atr) \in \sto$.

The following lemma shows that for any semantics $\textsf{X} \in \{\ccc,\ \pcc{},\ \sic{}\}$, if $\atrace \in \tracesconf_{\textsf{X}}(\aprog)$ for a program $\aprog$, then $\atrace_\pcinstr$ is a valid trace of $\aprog_\pcinstr$ under $\textsf{X}$, i.e., $\atrace_\pcinstr \in \tracesconf_{\textsf{X}}(\aprog_\pcinstr)$. Intuitively, this lemma shows that splitting transactions in a trace and defining dependency relations appropriately cannot introduce cycles in these relations and preserves the validity of the different consistency axioms.  

The proof of this lemma relies on constructing a causal order $\viso_\pcinstr$ and an arbitration order $\arbo_\pcinstr$ for the trace $\atrace_\pcinstr$ starting from the analogous relations in $\atrace$. In the case of $\ccc$, these are the smallest transitive relations such that:
\begin{itemize}[topsep=3pt]
    \item $\po_\pcinstr\subseteq \viso_\pcinstr\subseteq \arbo_\pcinstr$, and
    \item if $(\atr_{1},\atr_{2}) \in \viso$ then $(\atrwr{\atr_{1}},\atrrd{\atr_{2}}) \in \viso_\pcinstr$, and if $(\atr_{1},\atr_{2}) \in \arbo$ then $(\atrwr{\atr_{1}},\atrrd{\atr_{2}}) \in \arbo_\pcinstr$.
\end{itemize}
For \pcc{} and \sic{}, $\viso_\pcinstr$ must additionally satisfy: if $(\atr_{1},\atr_{2}) \in \arbo$, then $(\atrwr{\atr_{1}},\atrwr{\atr_{2}}) \in \viso_\pcinstr$. This is required in order to satisfy the axiom $\axprefix$, i.e., $\arbo_\pcinstr;\viso_\pcinstr \subset \viso_\pcinstr$, when $(\atrwr{\atr_{1}},\atrrd{\atr_{2}}) \in \arbo_\pcinstr$ and $(\atrrd{\atr_{2}},\atrwr{\atr_{2}}) \in \viso_\pcinstr$.

This construction ensures that $\viso_\pcinstr$ is a partial order and $\arbo_\pcinstr$ is a total order because $\viso$ is a partial order and $\arbo$ is a total order.
Also, based on the above rules, we have that: if $(\atrwr{\atr_{1}},\atrrd{\atr_{2}}) \in \viso_\pcinstr$ then $(\atr_{1},\atr_{2}) \in \viso$, and similarly, if $(\atrwr{\atr_{1}},\atrrd{\atr_{2}}) \in \arbo_\pcinstr$ then $(\atr_{1},\atr_{2}) \in \arbo$.

\begin{lemma} \label{lem:Transform}
If $\atrace \in \tracesconf_{\textsf{X}}(\aprog)$, then 
$\atrace_\pcinstr \in \tracesconf_{\textsf{X}}(\aprog_\pcinstr)$.
\end{lemma}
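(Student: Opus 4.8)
The plan is to exhibit, for the transformed trace $\atrace_\pcinstr$, the causal order $\viso_\pcinstr$ and arbitration order $\arbo_\pcinstr$ defined just above the statement, and to check that the triple $(\atrace_\pcinstr,\viso_\pcinstr,\arbo_\pcinstr)$ satisfies the axioms characterizing $\textsf{X}$ in Table~\ref{fig:defaxioms}. Before the axioms I would record that $\atrace_\pcinstr$ is structurally a trace of $\aprog_\pcinstr$: since (w.l.o.g.) every transaction of $\aprog$ is a block of reads/assumes followed by a block of writes, the read-part $\atrrd{\atr}$ and write-part $\atrwr{\atr}$ carry exactly the read and write events of $\atr$ in their original $\tor$ order, and $\po_\pcinstr,\rfo_\pcinstr,\sto_\pcinstr,\cfo_\pcinstr$ are legitimate data-flow relations of an execution of $\aprog_\pcinstr$. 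It then remains to verify that $\viso_\pcinstr$ is a strict partial order, that $\arbo_\pcinstr$ is a total order containing it, and that the relevant subset of $\{\axpoco,\axcoarb,\axretval,\axprefix,\axconflict\}$ holds.

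The technical engine is a precise characterization of which pairs lie in $\viso_\pcinstr$ and $\arbo_\pcinstr$, obtained by analyzing paths in the defining generators. The key observation is that a read-part $\atrrd{\atr}$ has a single outgoing generator edge, namely to its own write-part $\atrwr{\atr}$, so every generator path may be read off at the transaction level: projecting $\atrrd{\atr}$ and $\atrwr{\atr}$ to $\atr$ maps each generator edge into $\po$, $\viso$, or $\arbo$, hence (as $\viso \subseteq \arbo$ and $\arbo$ is transitive) into $\arbo$ up to self-loops. This yields the backward implications announced in the text, e.g. $(\atrwr{\atr_1},\atrrd{\atr_2}) \in \viso_\pcinstr \Rightarrow (\atr_1,\atr_2) \in \viso$ and $(\atrwr{\atr_1},\atrwr{\atr_2}) \in \viso_\pcinstr \Rightarrow (\atr_1,\atr_2) \in \arbo$ for \pcc{}/\sic{}, together with the analogues for $\arbo_\pcinstr$; the one delicate point is that a write-to-read path splits as a write-to-write prefix (projecting into $\arbo$) followed by a final $\viso$ edge, so reconstituting a single $\viso$ step requires the original $\axprefix$, i.e. $\arbo ; \viso \subseteq \viso$. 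The same projection gives acyclicity for free: a cycle in $\viso_\pcinstr$ or $\arbo_\pcinstr$ would project to a nontrivial cycle in $\arbo$, contradicting that $\arbo$ is total; and because $\arbo$ is total, the parts of any two distinct transactions are $\arbo_\pcinstr$-comparable, so $\arbo_\pcinstr$ is itself total.

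With this characterization in hand, the axioms follow by translating each one to the transaction level. $\axpoco$ and $\axcoarb$ for $\atrace_\pcinstr$ reduce to $\rfo \subseteq \viso$ and $\sto \subseteq \arbo$ in $\atrace$, together with $\viso_\pcinstr \subseteq \arbo_\pcinstr$ and transitivity. For $\axretval$, a read event now sits in some $\atrrd{\atr}$; by the backward implication and the fact that writes are preserved, its $\viso_\pcinstr$-visible writers to $\anaddr$ are exactly the write-parts $\atrwr{\atr'}$ with $(\atr',\atr)\in\viso$ and $\atr'$ writing $\anaddr$, and the $\arbo_\pcinstr$-maximal such write-part is $\atrwr{\atr_0}$ for the original maximizer $\atr_0$; since the writes inside $\atr_0$ keep their $\tor$ order, the returned value is unchanged and $\axretval$ transfers verbatim. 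For \pcc{} and \sic{} I would then check $\axprefix$, $\arbo_\pcinstr ; \viso_\pcinstr \subseteq \viso_\pcinstr$, by a short case split on the endpoints using the characterization and the original $\axprefix$; the critical case is $(\atrwr{\atr_1},\atrrd{\atr_2}) \in \arbo_\pcinstr$ followed by $(\atrrd{\atr_2},\atrwr{\atr_2}) \in \viso_\pcinstr$, which forces $(\atrwr{\atr_1},\atrwr{\atr_2}) \in \viso_\pcinstr$ and is exactly what the extra write-to-write generator was introduced to guarantee. Finally $\axconflict$ for \sic{} is immediate: $(\atr',\atr) \in \sto$ gives $(\atr',\atr) \in \viso \subseteq \arbo$, whence $(\atrwr{\atr'},\atrwr{\atr}) \in \viso_\pcinstr$.

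I expect the main obstacle to be the verification of $\axprefix$ for \pcc{}/\sic{}: it is where the nonstandard generator $(\atr_1,\atr_2)\in\arbo \Rightarrow (\atrwr{\atr_1},\atrwr{\atr_2})\in\viso_\pcinstr$ is indispensable, and one must simultaneously confirm that adding this generator does not create a cycle in $\viso_\pcinstr$ — which is precisely what the projection argument secures, since the new edges still project into $\arbo$. The $\axretval$ step is the other place demanding care, as it requires matching not merely the set of visible writers but also the $\arbo_\pcinstr$-maximal one and the last intra-transaction write.
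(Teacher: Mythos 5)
Your proposal is correct and follows essentially the same route as the paper's proof: construct $\viso_\pcinstr$ and $\arbo_\pcinstr$ from the generators given before the lemma, establish the backward implications from split-level edges to $\viso$/$\arbo$ in $\atrace$, and then verify $\axpoco$, $\axcoarb$, $\axretval$ (by matching the sets of visible writers and their $\arbo$-maximal element), $\axprefix$ via a case split on endpoints, and $\axconflict$ directly. The only difference is presentational: you make explicit, via the projection-of-generator-paths argument, the well-definedness claims (partial/total order, acyclicity, and the backward implications) that the paper asserts without proof in the text preceding the lemma.
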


Before presenting a strengthening of Lemma \ref{lem:Transform} when $\textsf{X}$ is \ccc{}, we give an important characterization of \ccc{} traces. This characterization is stated in terms of  acyclicity properties.

% that characterizes traces under \ccc{} consistency model based on whether their dependency relations form cycles of certain shapes using the relationship between the dependency relations and the causal and arbitration orders associated with \ccc{} given in Table \ref{fig:defaxioms}. 

\begin{lemma} \label{lem:cycles} 
$\atrace$ is a trace under \ccc{} iff $\arbo_{0}^{+}$ and $\viso_{0}^{+};\cfo$ are acyclic ($\arbo_{0}$ and $\viso_{0}$ are defined in Table \ref{fig:defaxioms}). 
    %\item \pcc{} iff $\arbo_{0}^{+}$ and $\arbo_{0}^{+}?;\viso_{0};\cfo$ are acyclic.
    %\item \sic{} iff $\arbo_{0}^{+}$ and $\arbo_{0}^{+};\cfo$ are acyclic. 
\end{lemma}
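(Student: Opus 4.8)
The plan is to prove both directions by reducing the statement to the existence of the witnessing orders in $\axcc$. Recall that $\atrace \in \tracesconf_{\ccc{}}(\aprog)$ iff there is a partial order $\viso \supseteq \viso_{0}^{+}$ and a total order $\arbo \supseteq \viso \cup \sto$ satisfying $\axretval$. A larger $\viso$ only adds obligations to $\axretval$ (it enlarges the set of visible writers that a read must dominate), so the canonical witness to try is the minimal causal order $\viso = \viso_{0}^{+}$. I would therefore show that, with this choice, a valid total arbitration exists exactly when the two acyclicity conditions hold, which gives both implications at once.

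For the forward direction, assume $\atrace$ is a \ccc{} trace with witnesses $\viso,\arbo$. Acyclicity of $\arbo_{0}^{+}$ is immediate: by $\axcoarb$, $\arbo_{0}^{+}\subseteq\arbo$, and $\arbo$ is a strict total order, hence irreflexive, so the subrelation $\arbo_{0}^{+}$ is acyclic. For $\viso_{0}^{+};\cfo$, the crux is that a cycle must produce a configuration excluded by $\axretval$. Suppose $(\atr_{1},\atr_{2})\in\cfo$; then there is $w$ with $(w,\atr_{1})\in\rfo$ on some variable $\anaddr$ and $(w,\atr_{2})\in\sto$. By $\axpoco$, $(w,\atr_{1})\in\viso$, so $\axretval$ forces $w=Max_{\arbo}$ among the $\anaddr$-writers $\viso$-visible to $\atr_{1}$. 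I would then show $(\atr_{2},\atr_{1})\notin\viso$: otherwise $\atr_{2}$ is an $\anaddr$-writer visible to $\atr_{1}$ with $(w,\atr_{2})\in\sto\subseteq\arbo$, contradicting maximality of $w$. Since $\viso_{0}^{+}\subseteq\viso$, no $\cfo$-edge can close a $\viso_{0}^{+}$-path onto itself, which rules out the offending cycle in $\viso_{0}^{+};\cfo$.

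For the backward direction, assume $\arbo_{0}^{+}$ and $\viso_{0}^{+};\cfo$ are acyclic. I would set $\viso=\viso_{0}^{+}$, a strict partial order because $\viso_{0}^{+}\subseteq\arbo_{0}^{+}$ and the latter is acyclic. By the order-extension principle (Szpilrajn), $\arbo_{0}^{+}$ extends to a total order $\arbo$; then $\viso=\viso_{0}^{+}\subseteq\arbo_{0}^{+}\subseteq\arbo$ and $\sto\subseteq\arbo_{0}\subseteq\arbo$, so $\axpoco$ and $\axcoarb$ hold by construction. It remains to check $\axretval$. For a read $\readact(\atr,\anaddr,\aval)$, let $w$ be the transaction with $(w,\atr)\in\rfo$ on $\anaddr$; by well-formedness $\aval$ is $w$'s last write to $\anaddr$, and $(w,\atr)\in\rfo\subseteq\viso$. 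Any other $\anaddr$-writer $\atr'$ visible to $\atr$ is $\sto$-comparable to $w$; if $(w,\atr')\in\sto$ then $(\atr,\atr')\in\cfo$ while $(\atr',\atr)\in\viso=\viso_{0}^{+}$, yielding a cycle in $\viso_{0}^{+};\cfo$ and contradicting the hypothesis, so $(\atr',w)\in\sto\subseteq\arbo$ and $w$ dominates $\atr'$. Hence $w=Max_{\arbo}$ of the visible $\anaddr$-writers, so $\axretval$ holds and $\atrace\in\tracesconf_{\ccc{}}(\aprog)$.

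The step I expect to be delicate is making the $\axretval$/$\cfo$ correspondence exact, and this is also where I would scrutinize the precise form of the acyclicity condition. One must argue that the only way $\axretval$ can fail under $\viso=\viso_{0}^{+}$ is the self-closing pattern $(\atr',\atr)\in\viso_{0}^{+}$ together with $(\atr,\atr')\in\cfo$; handling the fictitious initial transaction (which is an $\sto$-source for every variable) and multiple writes to $\anaddr$ inside a single transaction (where $\tor$ selects the last write) is what makes this careful rather than routine. A secondary point is that a single total order $\arbo$ must discharge $\axretval$ for all reads simultaneously; the argument is arranged precisely so that the only constraints it imposes beyond $\arbo_{0}^{+}$ reduce to $\sto$, which already sits inside $\arbo_{0}^{+}$, so no further refinement of the chosen $\arbo$ is ever required.
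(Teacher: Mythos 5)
Your proof is correct and follows essentially the same route as the paper's: the forward direction extracts the $\axretval$ contradiction from a $\cfo$ edge whose target is $\viso$-visible to its source, and the backward direction instantiates $\viso=\viso_{0}^{+}$, extends $\arbo_{0}^{+}$ to a total order, and discharges $\axretval$ by showing that any competing visible writer would close a $\viso_{0}^{+};\cfo$ cycle. The only differences are cosmetic (you argue from the read's $\rfo$-source dominating the other visible writers, while the paper argues that the $\arbo$-maximal visible writer must coincide with the $\rfo$-source), and you share with the paper the implicit reading of acyclicity of $\viso_{0}^{+};\cfo$ as the non-existence of a $\viso_{0}^{+}$-path closed by a single $\cfo$ edge.
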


Next we show that a trace $\atrace$ of a program $\aprog$ is \ccc{} iff the corresponding trace $\atrace_\pcinstr$ of $\aprog_\pcinstr$ is \ccc{} as well. This result is based on the observation that cycles in $\arbo_{0}^{+}$ or $\viso_{0}^{+};\cfo$ cannot be broken by splitting transactions. 

\begin{lemma} \label{lem:CcCc}
    A trace $\atrace$ of $\aprog$ is \ccc{} iff the corresponding trace $\atrace_\pcinstr$ of $\aprog_\pcinstr$ is \ccc{}.
\end{lemma}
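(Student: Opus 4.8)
The plan is to reduce both directions to the acyclicity characterization of Lemma~\ref{lem:cycles}, transferring cycles between $\atrace$ and $\atrace_\pcinstr$ rather than manipulating $\viso$/$\arbo$ witnesses directly. The forward implication is immediate: ``if $\atrace$ is \ccc{} then $\atrace_\pcinstr$ is \ccc{}'' is exactly Lemma~\ref{lem:Transform} instantiated with $\textsf{X}=\ccc{}$ (indeed Lemma~\ref{lem:CcCc} is announced as the \emph{strengthening} of Lemma~\ref{lem:Transform} to an equivalence). So the real work is the converse, which I would prove by contraposition: assuming $\atrace$ is not \ccc{}, Lemma~\ref{lem:cycles} gives a cycle either in $\arbo_{0}^{+}=(\po\cup\rfo\cup\sto)^{+}$ or in $\viso_{0}^{+};\cfo=(\po\cup\rfo)^{+};\cfo$, and I would lift it to a cycle of the same kind in $\atrace_\pcinstr$, so that $\atrace_\pcinstr$ is not \ccc{} by Lemma~\ref{lem:cycles} applied to $(\po_\pcinstr\cup\rfo_\pcinstr\cup\sto_\pcinstr)^{+}$ and $(\po_\pcinstr\cup\rfo_\pcinstr)^{+};\cfo_\pcinstr$.

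The key device is a per-edge \emph{lifting} that routes each original edge through the write-parts of the split transactions, exploiting the asymmetry of the transformation: in $\atrace_\pcinstr$ the relations $\po_\pcinstr,\rfo_\pcinstr,\sto_\pcinstr$ all emanate from a write-part, whereas $\cfo_\pcinstr$ is the only relation emanating from a read-part and it always lands on a write-part. Concretely I would record three observations: (i) if $(\atr_1,\atr_2)\in\po\cup\rfo$ then $(\atrwr{\atr_1},\atrrd{\atr_2})\in\po_\pcinstr\cup\rfo_\pcinstr$, and since $(\atrrd{\atr_2},\atrwr{\atr_2})\in\po_\pcinstr$ this gives $(\atrwr{\atr_1},\atrwr{\atr_2})\in(\po_\pcinstr\cup\rfo_\pcinstr)^{+}$; (ii) if $(\atr_1,\atr_2)\in\sto$ then $(\atrwr{\atr_1},\atrwr{\atr_2})\in\sto_\pcinstr$; and (iii) if $(\atr_1,\atr_2)\in\cfo$ then $(\atrrd{\atr_1},\atrwr{\atr_2})\in\cfo_\pcinstr$. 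Observations (i) and (ii) together yield that $(\atr_1,\atr_2)\in\arbo_{0}$ implies $(\atrwr{\atr_1},\atrwr{\atr_2})\in(\po_\pcinstr\cup\rfo_\pcinstr\cup\sto_\pcinstr)^{+}$.

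With these in hand the two cases are concatenations. For a cycle $\atr_1\to\cdots\to\atr_m\to\atr_1$ in $\arbo_{0}$, mapping $\atr_i\mapsto\atrwr{\atr_i}$ and applying (i)--(ii) edge-by-edge yields $(\atrwr{\atr_1},\atrwr{\atr_1})\in(\po_\pcinstr\cup\rfo_\pcinstr\cup\sto_\pcinstr)^{+}$. For a cycle in $\viso_{0}^{+};\cfo$, written as rounds $u_0\xrightarrow{\viso_{0}^{+}}v_1\xrightarrow{\cfo}u_1\xrightarrow{\viso_{0}^{+}}\cdots\xrightarrow{\cfo}u_k=u_0$, I lift each $\viso_{0}^{+}$ segment $w_0\to\cdots\to w_\ell$ (with $w_0=u_{i-1}$, $w_\ell=v_i$) to the path $\atrwr{w_0}\to\atrrd{w_1}\to\atrwr{w_1}\to\cdots\to\atrrd{w_\ell}$ in $(\po_\pcinstr\cup\rfo_\pcinstr)^{+}$, stopping at the read-part $\atrrd{v_i}$, and then lift the edge $v_i\xrightarrow{\cfo}u_i$ to $(\atrrd{v_i},\atrwr{u_i})\in\cfo_\pcinstr$ via (iii); concatenating the rounds closes up at $\atrwr{u_0}$ and gives a cycle in $((\po_\pcinstr\cup\rfo_\pcinstr)^{+};\cfo_\pcinstr)^{+}$. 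The main obstacle is exactly this second case: one must verify that the lifted walk alternates correctly so that every $\cfo_\pcinstr$ edge departs from a read-part (the endpoint $\atrrd{v_i}$ of the preceding lifted $\viso_{0}^{+}$ segment) and arrives at a write-part $\atrwr{u_i}$ from which the next segment resumes, so that the concatenation is genuinely of the form $(\po_\pcinstr\cup\rfo_\pcinstr)^{+};\cfo_\pcinstr$ and closes into a cycle. The fact that the only edge internal to a split transaction is $\atrrd{\atr}\to\atrwr{\atr}$, never its reverse, is what guarantees this alternation cannot be short-circuited, and it is also why no cycle of $\atrace_\pcinstr$ can be confined to a single split transaction.
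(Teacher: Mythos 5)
Your proposal is correct and follows essentially the same route as the paper's proof: the only-if direction is discharged by Lemma~\ref{lem:Transform}, and the if direction is proved by contraposition via Lemma~\ref{lem:cycles}, lifting any cycle in $\arbo_{0}^{+}$ or $\viso_{0}^{+};\cfo$ of $\atrace$ to a cycle of the same kind in $\atrace_\pcinstr$. The paper organizes the lifting as a per-vertex case analysis on the incoming/outgoing edges at each transaction of the cycle, while you phrase it as a per-edge lifting with explicit endpoints, but the underlying argument --- that $\viso_0$-edges land on read-parts, $\sto$/$\cfo$-edges land on write-parts, $\cfo$-edges depart from read-parts, and the internal $\po_\pcinstr$ edge $\atrrd{\atr}\to\atrwr{\atr}$ reconnects everything --- is identical.
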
 
The following lemma shows that a trace $\atrace$ is \pcc{} iff the corresponding trace $\atrace_\pcinstr$ is \serc{}. 
The if direction in the proof is based on constructing a causal order $\viso$ and an arbitration order $\arbo$ for the trace $\atrace$ from the arbitration order $\arbo_\pcinstr$ in $\atrace_\pcinstr$ (since $\atrace_\pcinstr$ is a trace under serializability $\viso_\pcinstr$ and $\arbo_\pcinstr$ coincide). These are the smallest transitive relations such that:
% using the following rules (note that under \serc{}, $\viso_\pcinstr$ and $\arbo_\pcinstr$ of $\atrace_\pcinstr$ coincide):  
\begin{itemize}[topsep=3pt]
%    \item $\arbo$ and $\viso$ are transitives, 
    \item if $(\atrwr{\atr_{1}},\atrrd{\atr_{2}}) \in \arbo_\pcinstr$ then $(\atr_{1},\atr_{2}) \in \viso$,
    \item if $(\atrwr{\atr_{1}},\atrwr{\atr_{2}}) \in \arbo_\pcinstr$ then $(\atr_{1},\atr_{2}) \in \arbo$\footnote{If $\atrwr{\atr_{1}}$ is empty ($\atr_1$ is read-only), then we set $(\atr_{1},\atr_{2}) \in \arbo$ if $(\atrrd{\atr_{1}},\atrwr{\atr_{2}}) \in \viso_\pcinstr$. If $\atrwr{\atr_{2}}$ is empty, then $(\atr_{1},\atr_{2}) \in \arbo$ if $(\atrwr{\atr_{1}},\atrrd{\atr_{2}}) \in \viso_\pcinstr$. If both $\atrwr{\atr_{1}}$ and $\atrwr{\atr_{2}}$ are empty, then $(\atr_{1},\atr_{2}) \in \arbo$ if $(\atrrd{\atr_{1}},\atrrd{\atr_{2}}) \in \viso_\pcinstr$.}.
\end{itemize}
%Note that from its construction, $\viso$ is a partial order since $\viso_\pcinstr$ is a total order.
%Also, $\arbo$ is a well defined total order since $\viso_\pcinstr$ is a total order. 
%On the other hand, 
The only-if direction is based on the fact that any cycle in the dependency relations of $\atrace$ that is admitted under \pcc{} (characterized in Lemma \ref{lem:pccycles}) is ``broken'' by splitting transactions. Also, splitting transactions cannot introduce new cycles that do not originate in $\atrace$.
\vspace{-3pt}
\begin{lemma}\label{lem:PcSer}
    A trace $\atrace$ is \pcc{} iff $\atrace_\pcinstr$ is \serc{}
\end{lemma}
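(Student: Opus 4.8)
The plan is to reduce everything to the characterization, stated in the excerpt, that $\atrace_\pcinstr$ is serializable iff its happens-before $\hbo_\pcinstr = (\po_\pcinstr\cup\rfo_\pcinstr\cup\sto_\pcinstr\cup\cfo_\pcinstr)^{+}$ is acyclic, and then to relate cycles of $\hbo_\pcinstr$ to the existence of PC witnesses for $\atrace$. The backbone is a case analysis on how splitting $\atr$ into $\atrrd{\atr}$ (reads/assumes) followed by $\atrwr{\atr}$ (writes) redirects each dependency edge. Concretely, an incoming $\rfo$ or $\po$ edge lands on $\atrrd{\atr}$, whereas an incoming $\sto$ or $\cfo$ edge lands on $\atrwr{\atr}$; dually, an outgoing $\cfo$ edge leaves from $\atrrd{\atr}$, whereas outgoing $\rfo$, $\sto$, $\po$ edges leave from $\atrwr{\atr}$. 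Since the only intra-transaction edge is $\atrrd{\atr}\to\atrwr{\atr}$ and never its reverse, a cycle of $\atrace$ can traverse $\atr$ after splitting in every combination except the one that enters $\atrwr{\atr}$ and must exit through $\atrrd{\atr}$: the pattern of an incoming edge in $\{\sto,\cfo\}$ immediately followed by an outgoing $\cfo$ edge. I will call a cycle of $\atrace$ \emph{surviving} if it avoids this pattern at every transaction, equivalently if every $\cfo$ edge on it is immediately preceded by a $\viso_{0}$ edge (a $\po$ or $\rfo$ edge).

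For the only-if direction ($\atrace$ under \pcc{} implies $\atrace_\pcinstr$ under \serc{}) I argue by contraposition. Any cycle of $\hbo_\pcinstr$ projects, by merging each pair $\atrrd{\atr},\atrwr{\atr}$ back to $\atr$, to a closed walk in the dependency relations of $\atrace$; because edges are only redirected and never manufactured, this simultaneously shows that splitting introduces no genuinely new cycles, and by the redirection table the projected walk is necessarily surviving (any cycle must span at least two original transactions, so the projection is a real cycle). Grouping each $\cfo$ with its mandatory $\viso_{0}$ predecessor, and leaving the remaining $\po,\rfo,\sto$ edges as single steps, exhibits the walk as a cycle of $\arbo_{0}\cup(\viso_{0};\cfo)$, which is exactly the cyclic structure that \pcc{} forbids (Lemma~\ref{lem:pccycles}); hence $\atrace$ is not under \pcc{}. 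Reading the chain backwards is precisely the statement that the \pcc{}-admissible cycles of $\atrace$ are broken by splitting.

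For the if direction ($\atrace_\pcinstr$ under \serc{} implies $\atrace$ under \pcc{}) I build witnesses $\viso,\arbo$ for $\atrace$ from the serialization order of $\atrace_\pcinstr$. Since $\atrace_\pcinstr$ is serializable, $\viso_\pcinstr$ and $\arbo_\pcinstr$ coincide and form a total order extending $\hbo_\pcinstr$; following the recipe stated just before the lemma, I set $(\atr_{1},\atr_{2})\in\viso$ when $(\atrwr{\atr_{1}},\atrrd{\atr_{2}})\in\arbo_\pcinstr$ and $(\atr_{1},\atr_{2})\in\arbo$ when $(\atrwr{\atr_{1}},\atrwr{\atr_{2}})\in\arbo_\pcinstr$, with the footnoted adjustments when a write-part is empty. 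Totality of $\arbo$ is inherited from $\arbo_\pcinstr$; the inclusion $\viso\subseteq\arbo$ and the axiom $\axprefix$ (i.e. $\arbo;\viso\subseteq\viso$) both reduce to transitivity of $\arbo_\pcinstr$ together with $(\atrrd{\atr},\atrwr{\atr})\in\po_\pcinstr$; and $\axpoco$, $\axcoarb$ follow because $\po,\rfo$ map to $\atrwr{}\to\atrrd{}$ edges and $\sto$ maps to $\atrwr{}\to\atrwr{}$ edges of $\arbo_\pcinstr$.

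I expect the delicate step to be the return-value axiom $\axretval$: I must show that for each read in $\atr$ the write it is paired with in $\atrace$ is the $\arbo$-maximal $\viso$-predecessor writing the same variable, transferring the corresponding property that $\atrace_\pcinstr$ enjoys as a serializable trace. This mirrors the $\axretval$ reasoning already used for Lemma~\ref{lem:Transform}: one checks that the candidate maximal writer is preserved under the correspondence between $\arbo_\pcinstr$ and $(\viso,\arbo)$, using $\sto\subseteq\arbo$ to rule out an intervening overwrite. The other genuine subtlety, on the only-if side, is justifying that the decomposition of a surviving walk into $\arbo_{0}$ steps and $\viso_{0};\cfo$ pairs is always possible; this rests on the observation that two $\cfo$ edges can never be consecutive on a surviving walk, so each $\cfo$ claims a distinct $\viso_{0}$ predecessor. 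Everything else is routine bookkeeping over the redirection table above.
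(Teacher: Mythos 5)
Your proposal is correct and follows essentially the same route as the paper: the if direction uses the identical construction of $\viso$ and $\arbo$ from $\arbo_\pcinstr$ with the same axiom checks, and the only-if direction rests on the same two facts --- that splitting manufactures no new cycles and that every \pcc{}-admissible happens-before cycle (per Lemma~\ref{lem:pccycles}) contains a $(\sto\cup\cfo);\cfo$ pattern that is severed because the incoming edge lands on $\atrwr{\atr}$ while the outgoing $\cfo$ must leave from $\atrrd{\atr}$. Your contrapositive packaging via ``surviving'' cycles and the explicit decomposition into $\arbo_0$ steps and $\viso_0;\cfo$ pairs is just a reorganization of the paper's argument, not a different proof.
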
 
\vspace{-3pt}
The lemmas above are used to prove Theorem \ref{them:RobCcPc} as follows:

\medskip
\noindent
\textsc{Proof} of Theorem \ref{them:RobCcPc}:
For the if direction, assume by contradiction that $\aprog$ is not robust against \ccc{} relative to \pcc{}. 
Then, there must exist a trace $\atrace \in \tracesconf_{\ccc{}}(\aprog) \setminus \tracesconf_{\pcc{}}(\aprog)$. Lemmas~\ref{lem:CcCc} and~\ref{lem:PcSer} imply that the corresponding trace $\atrace_\pcinstr$ of $\aprog_\pcinstr$ is \ccc{} and not \serc{}. Thus, $\aprog_\pcinstr$ is not robust against \ccc{} relative to \serc{}. The only-if direction is proved similarly.
\hfill $\Box$
%\end{proof}

\medskip
Robustness against \ccc{} relative to \serc{} has been shown to be reducible in polynomial time to the reachability problem under \serc{}~\cite{DBLP:conf/concur/BeillahiBE19}. Given a program $\aprog$ and a control location $\ell$, the reachability problem under \serc{} asks whether there exists an execution of $\aprog$ under \serc{} that reaches $\ell$.
Therefore, as a corollary of Theorem \ref{them:RobCcPc}, we obtain the following:
%we show that robustness against \ccc{} relative to \pcc{} can be reduced in polynomial time to reachability under \serc{}. 
% Therefore,
%Since it was shown in \cite{DBLP:conf/concur/BeillahiBE19} that a program $\aprog$ is  iff the program resulting from instrumenting $\aprog$, denoted by $\sem{\aprog}$, reaches a given configuration under \serc{}. 
%The instrumented program $\sem{\aprog}$ uses auxiliary variables in order to simulate traces that violate robustness. 

%\begin{lemma}\cite{DBLP:conf/concur/BeillahiBE19}\label{lem:RobCcSer}
%The robustness against \ccc{} relative to \serc{} can reduced to a reachability problem under \serc{}.
%\end{lemma}

%Combining Theorem \ref{them:RobCcPc} and Lemma \ref{lem:RobCcSer}, we obtain the following results which state that robustness against \ccc{} relative to \pcc{} can be reduced in polynomial time to reachability under SC and the complexity of checking robustness for programs with fixed number of variables and bounded data domain against \ccc{} relative to \pcc{}. 
\vspace{-5pt}
\begin{corollary}\label{thm:RobViolCcPc}
 Checking robustness against \ccc{} relative to \pcc{} is reducible to the reachability problem under \serc{} in polynomial time.
%    A program $\aprog$ is robust against \ccc{} relative to \pcc{} iff $\sem{\aprog_\pcinstr}$ reaches an error state under SC.
\end{corollary}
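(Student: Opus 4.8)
The plan is to obtain the result by composing two polynomial-time reductions: the equivalence established in Theorem~\ref{them:RobCcPc}, and the known reduction of robustness against \ccc{} relative to \serc{} to \serc{}-reachability from~\cite{DBLP:conf/concur/BeillahiBE19}. Since the composition of two polynomial-time reductions is again polynomial-time, the only genuine thing I need to check is that the intermediate transformation $\aprog \mapsto \aprog_\pcinstr$ is itself computable in polynomial time and yields a program of polynomial size.

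First I would spell out the complexity of the syntactic transformation. Given $\aprog$ with transaction set $\trsaprog{\aprog}$, each transaction $\atr$ is replaced by two transactions $\atrrd{\atr}$ and $\atrwr{\atr}$, where $\atrrd{\atr}$ retains exactly the read/{\tt assume} instructions of $\atr$ in their original order and $\atrwr{\atr}$ retains exactly its write instructions in their original order, with $\atrrd{\atr}$ placed immediately before $\atrwr{\atr}$ in program order. This is a purely local rewriting: it visits each instruction of $\aprog$ once, at most doubles the number of \plog{begin}/\plog{commit} delimiters, and does not duplicate any read or write instruction. Hence $|\aprog_\pcinstr| = O(|\aprog|)$ and $\aprog_\pcinstr$ can be produced in linear time. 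In particular, the number of processes is unchanged and the number of transactions per process is only doubled, so $\aprog_\pcinstr$ still satisfies the boundedness restrictions assumed for our programs, and every control location of $\aprog$ has a well-defined image in $\aprog_\pcinstr$.

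Next I would invoke Theorem~\ref{them:RobCcPc} to conclude that $\aprog$ is robust against \ccc{} relative to \pcc{} if and only if $\aprog_\pcinstr$ is robust against \ccc{} relative to \serc{}. Composing the linear-time map $\aprog \mapsto \aprog_\pcinstr$ with the polynomial-time reduction of~\cite{DBLP:conf/concur/BeillahiBE19} from robustness against \ccc{} relative to \serc{} to \serc{}-reachability then produces, from the input $\aprog$, a program together with a control location whose reachability under \serc{} decides robustness of $\aprog$ against \ccc{} relative to \pcc{}. The whole pipeline runs in polynomial time in $|\aprog|$.

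The proof has no hard analytic core: the substantive content lives in Theorem~\ref{them:RobCcPc} (and in Lemmas~\ref{lem:CcCc} and~\ref{lem:PcSer} feeding it) and in the cited reduction. The one point I would not gloss over is the complexity bookkeeping: I would make explicit that the cited reduction's polynomial bound is stated in terms of $|\aprog_\pcinstr| = O(|\aprog|)$, so that the composed bound remains polynomial in $|\aprog|$. This is the step most likely to hide a subtlety, since the cited reduction is only guaranteed polynomial under the same boundedness assumptions on the program; the observation in the previous paragraph that $\aprog_\pcinstr$ inherits these assumptions is precisely what closes that gap.
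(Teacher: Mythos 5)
Your proposal is correct and matches the paper's argument: the corollary is obtained exactly by composing the linear-size transformation $\aprog \mapsto \aprog_\pcinstr$, the equivalence of Theorem~\ref{them:RobCcPc}, and the polynomial-time reduction of robustness against \ccc{} relative to \serc{} to \serc{}-reachability from~\cite{DBLP:conf/concur/BeillahiBE19}. Your explicit bookkeeping that $|\aprog_\pcinstr| = O(|\aprog|)$ and that the transformed program preserves the boundedness assumptions is a welcome (if routine) elaboration of what the paper leaves implicit.
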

\vspace{-3pt}
In the following we discuss the complexity of this problem in the case of finite-state programs (bounded data domain). The upper bound follows from Corollary~\ref{thm:RobViolCcPc} and 
standard results about the complexity of the reachability problem under sequential consistency, which extend to \serc{}, with a bounded~\cite{DBLP:conf/focs/Kozen77} or parametric number of processes~\cite{DBLP:journals/tcs/Rackoff78}. For the lower bound, given an instance $(\aprog,\ell)$ of the reachability problem under sequential consistency, we construct a program $\aprog'$ where each statement $s$ of $\aprog$ is executed in a different transaction that guards\footnote{That is, the transaction is of the form [lock; $s$; unlock]} the execution of $s$ using a global lock (the lock can be implemented in our programming language as usual, e.g., using a busy wait loop for locking), and where reaching the location $\ell$ enables the execution of a ``gadget'' that corresponds to the $\mathsf{SB}$ program in Figure~\ref{fig:litmus1}. Executing each statement under a global lock ensures that every execution of $\aprog'$ under $\ccc$ is serializable, and faithfully represents an execution of $\aprog$ under sequential consistency. Moreover, $\aprog$ reaches $\ell$ iff $\aprog'$ contains a robustness violation, which is due to the $\mathsf{SB}$ execution.
 
%we construct a program $\aprog'$ from $\aprog$ such that we add the $\mathsf{SB}$ program such its transactions can only execute when the state $\astate$ is reached by using assume and in $\aprog'$ we instrument the transactions of $\aprog$ with additional variables/assumes modeling locks to ensure that they execute synchronously under \ccc{} as if they were executed under \serc{}. Then, we obtain that $\aprog$ reaches the state $\astate$ iff $\aprog'$ is not robust against \ccc{} relative to \pcc{}. 
%Thus, we get the following.
\vspace{-5pt}
\begin{corollary}\label{corol:PCRobcomplexity}
Checking robustness of a program with a fixed number of variables and bounded data domain against \ccc{} relative to \pcc{} is PSPACE-complete when the number of processes is bounded and EXPSPACE-complete, otherwise.
\end{corollary}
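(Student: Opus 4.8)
The plan is to derive both the membership and the hardness claims from the polynomial reduction to \serc{} reachability established in Corollary~\ref{thm:RobViolCcPc}, combined with the classical complexity of reachability for finite-state shared-memory systems.

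For the upper bounds, I would first observe that, under the finite-state assumption (a fixed number of shared variables over a bounded data domain, and finitely many registers per process), every process is a finite-state machine and the shared memory is a finite object. Under \serc{}, transactions execute atomically one after another, so a configuration is just the tuple of local process states together with the shared valuation; reachability under \serc{} therefore has exactly the complexity of control-state reachability in a product of finite automata communicating through finite shared memory. For a bounded number of processes this is PSPACE (the configuration space is exponential but admits a nondeterministic polynomial-space search, cf.~\cite{DBLP:conf/focs/Kozen77}), and for a parametric number of finitely many kinds of processes it is EXPSPACE, by encoding the system as a vector addition system and invoking Rackoff's bound~\cite{DBLP:journals/tcs/Rackoff78}. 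Since Corollary~\ref{thm:RobViolCcPc} reduces robustness against \ccc{} relative to \pcc{} to \serc{} reachability in polynomial time, and both PSPACE and EXPSPACE are closed under polynomial-time reductions, robustness lies in PSPACE for a bounded number of processes and in EXPSPACE otherwise.

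For the matching lower bounds I would reduce from reachability under sequential consistency, which is PSPACE-hard for a bounded number of finite-state processes and EXPSPACE-hard for a parametric number of them. Given an instance $(\aprog,\ell)$, I construct the program $\aprog'$ sketched before the statement: each atomic statement $s$ of $\aprog$ is placed in its own transaction of the form [lock; $s$; unlock], using a global lock implemented with a busy-waiting loop. Guarding every statement by a single global lock forces all these transactions to be totally ordered, so every \ccc{} execution of this part of $\aprog'$ is serializable and faithfully simulates an SC execution of $\aprog$; in particular it generates no robustness violation on its own. I then attach a gadget isomorphic to the $\mathsf{SB}$ program of Figure~\ref{fig:litmus1}, wired so that its two processes can start only after control reaches $\ell$ (e.g., guarded by a flag set at $\ell$).

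The correctness argument, which I expect to be the delicate part, establishes that $\aprog$ reaches $\ell$ under SC iff $\aprog'$ is \emph{not} robust against \ccc{} relative to \pcc{}. If $\ell$ is reachable, the simulation drives $\aprog'$ to enable the gadget, which then exhibits the $\mathsf{SB}$ anomaly; as shown earlier this trace is admitted under \ccc{} but not under \pcc{}, hence it is a robustness violation. Conversely, if $\ell$ is unreachable the gadget never fires, and since every remaining transaction is a singleton executed under the global lock, each \ccc{} execution of $\aprog'$ is serializable and therefore also a \pcc{} execution, so $\aprog'$ is robust. The only subtlety is to verify that the global lock indeed admits no \ccc{} interleaving that breaks serializability of the simulated statements, and that the gadget cannot be triggered spuriously before $\ell$; once this is checked, the equivalence yields PSPACE-hardness for a bounded number of processes and EXPSPACE-hardness for a parametric number, matching the upper bounds and proving completeness in both regimes.
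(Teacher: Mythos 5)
Your proposal is correct and follows essentially the same route as the paper: the upper bounds come from the polynomial-time reduction of Corollary~\ref{thm:RobViolCcPc} to reachability under \serc{} together with the standard complexity results for bounded~\cite{DBLP:conf/focs/Kozen77} and parametric~\cite{DBLP:journals/tcs/Rackoff78} numbers of finite-state processes, and the lower bounds come from the same global-lock construction that serializes every statement of the source program and appends an $\mathsf{SB}$ gadget enabled only upon reaching $\ell$. The extra care you take in spelling out both directions of the equivalence (no spurious firing of the gadget, and serializability of all lock-guarded \ccc{} executions) is exactly the argument the paper asserts more tersely.
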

\vspace{-5pt}
    %!TEX root = draft.tex
\vspace{-15pt}
\section{Robustness Against \pcc{} Relative to \sic{}}
\label{sec:robustness}
\vspace{-5pt}
In this section, we show that checking robustness against \pcc{} relative to \sic{} can be reduced in polynomial time to a reachability problem under the \serc{} semantics. We reuse the program transformation from the previous section that allows to simulate \pcc{} behaviors on top of \serc{}, and additionally, we provide a characterization of traces that distinguish the \pcc{} semantics from \sic. We use this characterization to define an instrumentation (monitor) that is able to detect if a program under \pcc{} admits such traces.

We show that the happens-before cycles in a robustness violation (against \pcc{} relative to \sic{}) must contain a $\sto$ dependency followed by a $\cfo$ dependency, and they should not contain two successive $\cfo$ dependencies. This follows from the fact that every happens-before cycle in a \pcc{} trace must contain either two successive $\cfo$ dependencies, or a $\sto$ dependency followed by a $\cfo$ dependency. Otherwise, the happens-before cycle would imply a cycle in the arbitration order. Then, any trace under \pcc{} where all its simple happens-before cycles contain two successive $\cfo$ dependencies is possible under \sic{}. 
%
%This is because that all admissible happens-before cycles in a trace under \pcc{} must contain either two successive $\cfo$ or a $\sto$ followed by a $\cfo$. Otherwise, the happens-before cycle will imply a cycle in the arbitration order. 
%Also, any trace under \pcc{} where all its happens-before cycles contain two successive $\cfo$ is possible under \sic{}. 
%Then, we use this characterization to develop a reduction to reachability under \serc{}.
%
For instance, the trace of the non-robust $\mathsf{LU}$ execution in Figure~\ref{fig:litmus2} contains $\sto$ dependency followed by a $\cfo$ dependency and does not contain two successive $\cfo$ dependencies which is disallowed \sic{}, while the trace of the robust $\mathsf{WS}$ execution in Figure~\ref{fig:litmus3} contains two successive $\cfo$ dependencies.  
As a first step, we prove the following theorem characterizing traces that are allowed under both \pcc{} and \sic{}. 

\vspace{-2pt}
\begin{theorem} \label{them:RobPcSiLem}
A program $\aprog$ is robust against \pcc{} relative to \sic{} iff every happens-before cycle in a trace of $\aprog$ under \pcc{} contains two successive $\cfo$ dependencies.
\end{theorem}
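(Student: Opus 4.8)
The plan is to prove the stronger trace-level statement and then lift it to programs. Since $\tracesconf_{\serc{}}(\aprog)\subseteq\tracesconf_{\sic{}}(\aprog)\subseteq\tracesconf_{\pcc{}}(\aprog)$ always holds, $\aprog$ is robust against \pcc{} relative to \sic{} iff $\tracesconf_{\pcc{}}(\aprog)\subseteq\tracesconf_{\sic{}}(\aprog)$, i.e.\ iff every \pcc{} trace of $\aprog$ is also an \sic{} trace. Hence it suffices to show that a \pcc{} trace $\atrace$ is \sic{} iff every simple happens-before cycle of $\atrace$ contains two successive $\cfo$ dependencies. First I would record the purely structural observation that, for a \pcc{} trace (where $\arbo_{0}^{+}$ is acyclic, so every $\hbo$-cycle has at least one $\cfo$ edge), a simple $\hbo$-cycle avoids two successive $\cfo$ edges iff it is a cycle of the relation $\arbo_{0}^{+};\cfo$: reading such a cycle cyclically, each $\cfo$ edge is both preceded and followed by a nonempty run of edges from $\po\cup\rfo\cup\sto=\arbo_{0}$, so the cycle has the shape $(\arbo_{0}^{+};\cfo)^{k}$ for some $k\geq 1$. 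Thus the cycle condition of the theorem is equivalent to acyclicity of $\arbo_{0}^{+};\cfo$, and the task reduces to proving ``$\atrace$ is \sic{} iff $\arbo_{0}^{+};\cfo$ is acyclic''.

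For the direction ``\sic{} $\Rightarrow$ acyclic'' (which gives the only-if part of the theorem by contraposition), the key step is the inclusion $\arbo_{0}^{+};\cfo\subseteq\arbo$ under \sic{}. Assume $(\atr_{1},\atr_{2})\in\arbo_{0}^{+}$ and $(\atr_{2},\atr_{3})\in\cfo$, the latter witnessed by some $\atr'$ with $(\atr',\atr_{2})\in\rfo$ and $(\atr',\atr_{3})\in\sto$. Since \sic{} satisfies $\axconflict$ (so $\sto\subseteq\viso$) together with $\axpoco$, we have $\arbo_{0}\subseteq\viso$, hence $(\atr_{1},\atr_{2})\in\viso$. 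If $(\atr_{3},\atr_{1})\in\arbo$ held, then $\axprefix$ ($\arbo;\viso\subseteq\viso$) would give $(\atr_{3},\atr_{2})\in\viso$; but then $\atr_{3}$ would be a writer of the read variable, visible to $\atr_{2}$ and strictly $\arbo$-after $\atr'$ (because $(\atr',\atr_{3})\in\sto\subseteq\arbo$), contradicting the maximality of $\atr'$ imposed by $\axretval$. Hence $(\atr_{1},\atr_{3})\in\arbo$, proving the inclusion. As $\arbo$ is a strict total order, any cycle of $\arbo_{0}^{+};\cfo$ collapses to some $(\atr,\atr)\in\arbo$, a contradiction; so $\arbo_{0}^{+};\cfo$ is acyclic. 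Consequently a \pcc{} trace possessing a simple $\hbo$-cycle without two successive $\cfo$ edges (such a cycle being a cycle of $\arbo_{0}^{+};\cfo$) cannot be \sic{}, and therefore witnesses non-robustness.

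For the converse ``acyclic $\Rightarrow$ \sic{}'', I would build the witnesses $\viso$ and $\arbo$ from scratch: let $\arbo$ be a linear extension of $\arbo_{0}^{+}$ (acyclic since $\atrace$ is \pcc{}) chosen so that, for every transaction $\atr$, all of its $\rfo$-sources are placed $\arbo$-before all of its $\cfo$-successors (the overwriters of values it reads), and define $\viso$ by letting each $\atr$ see exactly the $\arbo$-prefix strictly below its $\arbo$-least $\cfo$-successor. Then $\viso\subseteq\arbo$ is downward closed in $\arbo$, giving $\axprefix$; $\arbo_{0}\subseteq\viso$ gives $\axpoco$, $\axcoarb$ and $\axconflict$; and placing read-sources before overwriters is exactly what makes each read return the value of its $\arbo$-maximal visible writer, i.e.\ $\axretval$. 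Thus $\atrace$ satisfies $\axsi$. The main obstacle is establishing that such a linear extension exists, i.e.\ that the added per-transaction constraints ``$\rfo$-source of $\atr$ before $\cfo$-successor of $\atr$'' are jointly consistent with $\arbo_{0}^{+}$. I would argue this reduces precisely to the acyclicity hypothesis: each constraint edge $a\to b$ arises from some $\atr$ with $(a,\atr)\in\rfo\subseteq\arbo_{0}$ and $(\atr,b)\in\cfo$, so any cycle mixing $\arbo_{0}^{+}$ edges with constraint edges, after expanding each constraint edge into $a\to^{\arbo_{0}}\atr\to^{\cfo}b$, becomes a cycle of $\arbo_{0}^{+};\cfo$, which is excluded. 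Verifying that the resulting $\viso,\arbo$ indeed validate all axioms (especially $\axretval$ across multiple variables) is the technical heart of this direction; here the characterization of \pcc{} cycles in Lemma~\ref{lem:pccycles} is convenient, since it guarantees that the only \pcc{} cycles not already of the permitted ``two successive $\cfo$'' form are of the $\sto$-then-$\cfo$ shape ruled out in the previous paragraph.
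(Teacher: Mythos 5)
Your high-level route coincides with the paper's: you reduce robustness to the trace-level equivalence ``a \pcc{} trace is \sic{} iff none of its happens-before cycles avoids two successive $\cfo$ edges'', obtain the only-if part from the \sic{} cycle characterization (the paper simply cites this as Lemma~\ref{lem:sicycles}; your direct derivation of $\arbo_{0}^{+};\cfo\subseteq\arbo$ from $\axconflict$, $\axprefix$ and $\axretval$ is correct), and obtain the if part by constructing witnessing orders $\viso,\arbo$, which is the content of the paper's Lemma~\ref{lem:pcsicycles}. The reformulation of the cycle condition as acyclicity of $\arbo_{0}^{+};\cfo$ is also sound for \pcc{} traces, where $\arbo_{0}^{+}$ is acyclic.

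The gap is in the construction for the converse direction. The only constraints you impose on the linear extension are ``$\rfo$-sources of $\atr$ before $\cfo$-successors of $\atr$'', i.e.\ essentially $\rfo;\cfo\subseteq\arbo$. This is enough for $\axretval$, but with $\viso$ defined as the prefix below the $\arbo$-least $\cfo$-successor, the claims $\po\subseteq\viso$ and $\sto\subseteq\viso$ (needed for $\axpoco$ and $\axconflict$, which you assert without argument) require every $\arbo_{0}$-predecessor of $\atr$ to be placed before every $\cfo$-successor of $\atr$, i.e.\ $\arbo_{0}^{+};\cfo\subseteq\arbo$ --- and your constraints do not force this. Concretely, take $(\atr_1,\atr_2)\in\po$, let $\atr_2$ read $x$ from the initial transaction, and let $b$ in another process write $x$, so $(\atr_2,b)\in\cfo$; the linear extension $\mathit{init},\,b,\,\atr_1,\,\atr_2$ satisfies all your constraints, yet the least $\cfo$-successor of $\atr_2$ is $b$, so $(\atr_1,\atr_2)\notin\viso$ and $\axpoco$ fails (the same example with $(\atr_1,\atr_2)\in\sto$ kills $\axconflict$). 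A second defect: your $\viso$ is not contained in $\arbo$ --- a transaction whose $\cfo$-successors are late or absent ``sees'' transactions placed $\arbo$-after itself --- so it is not antisymmetric and cannot serve as the causal order. Both problems are exactly what the paper's construction avoids: there $\arbo$ is a linear extension of $\arbo_{0}^{+}\cup\arbo_{0}^{+};\cfo;\arbo_{0}^{*}$ (its consistency again reduces to the cycle hypothesis, the ``two successive $\cfo$'' form being needed to exclude cycles passing through two such constraint tuples), and $\viso$ is the smallest transitive relation containing $\arbo_{0}^{+}$ and $\arbo;\arbo_{0}^{+}$, which by construction contains $\arbo_{0}$, is contained in $\arbo$, and is prefix-closed. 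Your skeleton is salvageable, but only after strengthening the ordering constraints and redefining $\viso$ along these lines.
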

\vspace{-2pt}
Before giving the proof of the above theorem, we state several intermediate results that characterize cycles in \pcc{} or \sic{} traces. First, we show that every \pcc{} trace in which all simple happens-before cycles contain two successive $\cfo$ is also a \sic{} trace.

%present a property for traces under \pcc{} which all their happens-before cycles contain two successive $\cfo$. In particular, we show that these traces are \sic{} by proving that any two transactions related by $\sto$ can be related by $\viso$ while the reads in the trace remain unaffected (i.e., $\viso;\cfo$ is acyclic).  
\vspace{-3pt}
\begin{lemma}\label{lem:pcsicycles}
If a trace $\atrace$ is \pcc{} and all happens-before cycles in $\atrace$ contain two successive $\cfo$ dependencies, then $\atrace$ is \sic{}.
\end{lemma}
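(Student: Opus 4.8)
The plan is to exhibit an explicit \sic{} witness $(\viso',\arbo')$ for $\atrace$, using the hypothesis only to guarantee that a suitable arbitration order exists. To promote a \pcc{} witness to \sic{} we must secure $\axconflict$, i.e. $\sto\subseteq\viso'$, \emph{without} breaking $\axretval$; the tension between ``every $\sto$-predecessor must be visible'' and ``the overwriter of a read must stay invisible'' is exactly what the ``two successive $\cfo$'' condition resolves.

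First I would define the relation $G = (\arbo_{0} \cup \arbo_{0};\cfo)^{+}$, where $\arbo_{0}=\po\cup\rfo\cup\sto$, and prove that $G$ is acyclic; this is the heart of the argument. Every generator of $G$ is a composition of happens-before edges, so a cycle of $G$ unfolds into a closed happens-before walk. Since $\cfo$ occurs in the generators only in the pattern $\arbo_{0};\cfo$, and every generator \emph{begins} with an $\arbo_{0}$-edge, each $\cfo$ edge in the unfolded walk is immediately preceded \emph{and} immediately followed by a $\po$, $\rfo$, or $\sto$ edge; hence no two $\cfo$ are consecutive. Extracting a simple happens-before cycle preserves this property, because re-closing a contiguous arc joins it at a transaction whose out-going edge is the $\arbo_{0}$-start of a generator, so no new $\cfo$--$\cfo$ adjacency is created. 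A simple cycle with no two successive $\cfo$ contradicts the hypothesis (and a cycle with \emph{no} $\cfo$ at all is a cycle of $\arbo_{0}^{+}$, impossible since $\atrace$ is \pcc{}, hence \ccc{}, by Lemma~\ref{lem:cycles}). Therefore $G$ is acyclic and admits a linearization $\arbo'$, a total order with $\arbo_{0}\subseteq\arbo'$ and $\arbo_{0};\cfo\subseteq\arbo'$.

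Next I would read the visibility relation off $\arbo'$ by truncating each transaction's $\arbo'$-prefix just before the first write overwriting one of its reads. Concretely, for a transaction $\atr$ let $\mathsf{ov}(\atr)=\{\atr'' : (\atr,\atr'')\in\cfo\}$ and set $\viso' = \{(\atr',\atr) : \atr'<_{\arbo'}\atr \text{ and } \atr'<_{\arbo'}\atr'' \text{ for all } \atr''\in\mathsf{ov}(\atr)\}$. Being an $\arbo'$-downward truncation, $\viso'$ is a strict partial order inside $\arbo'$ and is automatically $\arbo'$-prefix-closed, giving $\axprefix$; and $\axcoarb$ holds as $\arbo_{0}\subseteq\arbo'$. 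The remaining axioms follow from the second generator of $G$: for $\axconflict$, if $(\atr_{1},\atr)\in\sto$ and $\atr''\in\mathsf{ov}(\atr)$ then $(\atr_{1},\atr'')\in\sto;\cfo\subseteq\arbo'$, so $\atr_{1}<_{\arbo'}\atr''$ and thus $(\atr_{1},\atr)\in\viso'$; the same computation with $\po\cup\rfo$ in place of $\sto$ gives $\axpoco$. For $\axretval$, the $\rfo$-source $\atr_{0}$ of a read of $\atr$ is visible by $\axpoco$, and any other visible same-variable writer $\atr_{1}$ with $\atr_{0}<_{\arbo'}\atr_{1}$ would satisfy $(\atr_{0},\atr_{1})\in\sto$, hence $\atr_{1}\in\mathsf{ov}(\atr)$, contradicting $(\atr_{1},\atr)\in\viso'$; so $\atr_{0}$ is the $\arbo'$-greatest visible writer, and since $\atrace$ is \pcc{} this writer produces exactly the read value. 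Collecting these, $(\viso',\arbo')$ satisfies $\axsi$, so $\atrace$ is \sic{}.

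I expect the acyclicity of $G$ to be the main obstacle: the delicate point is the careful bookkeeping showing that unfolding a $G$-cycle and then extracting a \emph{simple} happens-before cycle keeps all $\cfo$ edges pairwise non-adjacent, so that the hypothesis is genuinely contradicted. Once $\arbo_{0};\cfo\subseteq\arbo'$ is available, the definition of $\viso'$ and the verification of the five \sic{} axioms are routine, as sketched above.
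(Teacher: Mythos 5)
Your proof is correct, and its load-bearing step is the same as the paper's: the hypothesis is used exactly once, to show that the relation generated by $\arbo_0$ and $\arbo_0;\cfo$ is acyclic and hence admits a linearization $\arbo'$ (the paper phrases this as the existence of a total order $\arbo_1$ containing $\arbo_0^+$ and $\arbo_0^+;\cfo;\arbo_0^*$, which is the same condition). Where you genuinely diverge is the visibility relation. The paper generates it from below, taking $\viso_1$ to be the transitive closure of $\arbo_0^+ \cup \arbo_1;\arbo_0^+$, and then verifies $\axretval$ indirectly by proving $\viso_1;\cfo$ acyclic (splitting into the cases $\arbo_0^+;\cfo$ cyclic and $\arbo_1;\arbo_0^+;\cfo$ cyclic). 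You instead define $\viso'$ from above, as the largest $\arbo'$-downward-closed relation that keeps every overwriter in $\mathsf{ov}(\atr)$ invisible to $\atr$, which makes $\axprefix$ immediate and lets you verify $\axretval$ by a direct maximality argument (any visible same-variable writer $\arbo'$-after the $\rfo$-source would be $\sto$-after it, hence in $\mathsf{ov}(\atr)$, hence not visible). Both verifications are routine once $\arbo_0;\cfo\subseteq\arbo'$ is in hand; yours is arguably cleaner for $\axretval$, the paper's avoids introducing the per-transaction truncation. One small caveat: your justification for extracting a \emph{simple} cycle without two adjacent $\cfo$ from the closed walk is not quite right as stated --- a repeated vertex can be an intermediate vertex of an $\arbo_0;\cfo$ generator, whose outgoing edge is a $\cfo$, so re-closing there could create a $\cfo$--$\cfo$ adjacency. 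The fix is easy (when splitting a closed walk at a repeated vertex, at most one of the two sub-walks can acquire a new $\cfo$--$\cfo$ adjacency, since both failing would force a $\cfo$--$\cfo$ adjacency already present at the first occurrence; recurse on the good one), and in any case the lemma as stated quantifies over all happens-before cycles, so the closed walk itself already contradicts the hypothesis --- the paper does not address simplicity at all.
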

\vspace{-3pt}

The proof of Theorem \ref{them:RobPcSiLem} also relies on the following lemma that characterizes happens-before cycles permissible under \sic{}.
\vspace{-3pt}
\begin{lemma}\label{lem:sicycles}\cite{DBLP:journals/tods/CahillRF09,DBLP:conf/concur/0002G16}
    If a trace $\atrace$ is \sic{}, then all its happens-before cycles must contain two successive $\cfo$ dependencies.
\end{lemma}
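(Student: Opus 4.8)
The plan is to prove the contrapositive of Lemma~\ref{lem:sicycles}. Assume $\atrace$ is \sic{}, so that there exist a causal order $\viso$ (a strict partial order) and an arbitration order $\arbo$ (a strict total order with $\viso\subseteq\arbo$) satisfying $\axpoco$, $\axcoarb$, $\axprefix$, $\axconflict$, and $\axretval$. I would show that no happens-before cycle of $\atrace$ avoids two successive $\cfo$ edges. It suffices to argue for \emph{simple} cycles, since a minimal cycle witnessing a failure of the claim is necessarily simple. The first, routine ingredient is the inclusion $\po\cup\rfo\cup\sto\subseteq\viso$: the first two conjuncts hold by $\axpoco$ (recall $\viso_{0}=\po\cup\rfo$), and $\sto\subseteq\viso$ is exactly $\axconflict$. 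Hence every happens-before edge that is \emph{not} a $\cfo$ edge already lies in $\viso$, and in turn in $\arbo$.

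The key step, which I expect to be the main obstacle, is the sub-lemma $\viso;\cfo\subseteq\arbo$. To prove it, suppose $(\atr_1,\atr_2)\in\viso$ and $(\atr_2,\atr_3)\in\cfo$. Since $\cfo=\rfo^{-1};\sto$, there is a transaction $\atr_0$ with $(\atr_0,\atr_2)\in\rfo$ and $(\atr_0,\atr_3)\in\sto$; thus $\atr_3$ writes the variable $\anaddr$ that $\atr_2$ reads from $\atr_0$, and $(\atr_0,\atr_3)\in\arbo$ because $\sto\subseteq\arbo$. Assume for contradiction the opposite $\arbo$-alternative $(\atr_3,\atr_1)\in\arbo$. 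Combined with $(\atr_1,\atr_2)\in\viso$, the prefix axiom $\axprefix$ ($\arbo;\viso\subseteq\viso$) yields $(\atr_3,\atr_2)\in\viso$. But then $\atr_3$ is a transaction that is $\viso$-visible to $\atr_2$, writes $\anaddr$, and satisfies $\atr_0 <_{\arbo}\atr_3$, contradicting $\axretval$, which forces $\atr_0$ to be the $\arbo$-maximal $\viso$-visible writer of $\anaddr$ for $\atr_2$. Hence $(\atr_1,\atr_3)\in\arbo$, establishing $\viso;\cfo\subseteq\arbo$. (Note this step uses only $\rfo\subseteq\viso$, $\sto\subseteq\arbo$, $\axprefix$, and $\axretval$.)

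Finally I would run the cycle-contraction argument. Take a simple happens-before cycle of $\atrace$ containing no two successive $\cfo$ edges, and contract every maximal run of non-$\cfo$ (hence $\viso$) edges into a single $\viso$ edge, using transitivity of $\viso$. If the resulting cycle contains no $\cfo$ edge, it is a $\viso$-cycle, i.e.\ $(\atr,\atr)\in\viso$ for some $\atr$, contradicting irreflexivity of $\viso$. Otherwise, since consecutive $\cfo$ edges are excluded by hypothesis, the contracted cycle strictly alternates between $\viso$ edges and single $\cfo$ edges, with an equal positive number of each. Pairing each $\viso$ edge with the $\cfo$ edge immediately following it and applying $\viso;\cfo\subseteq\arbo$ replaces each pair by one $\arbo$ edge, producing an $\arbo$-cycle and contradicting the fact that $\arbo$ is a strict total order. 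Either way we reach a contradiction, so every happens-before cycle of an \sic{} trace must contain two successive $\cfo$ dependencies.
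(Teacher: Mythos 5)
Your proof is correct, but there is nothing in the paper to compare it against: Lemma~\ref{lem:sicycles} is stated as an imported result, cited from Cahill et al.\ and from \cite{DBLP:conf/concur/0002G16}, and the paper gives no proof of it. What you have written is therefore a self-contained derivation of a black-box ingredient, and it is close in spirit to the paper's own proof of the neighbouring Lemma~\ref{lem:pccycles}: there the authors short-circuit a cycle using the inclusion $\viso_0;\cfo\subseteq\arbo$, which they again only cite from the literature, whereas you prove the stronger inclusion $\viso;\cfo\subseteq\arbo$ from first principles ($\axprefix$ together with $\axretval$ and totality of $\arbo$), and you additionally invoke $\axconflict$ to absorb $\sto$ edges into $\viso$ --- precisely the extra axiom separating \sic{} from \pcc{}, which is why you can exclude the $\sto$-followed-by-$\cfo$ pattern that Lemma~\ref{lem:pccycles} must still permit. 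Two minor loose ends. First, in the totality step you only consider the alternative $(\atr_3,\atr_1)\in\arbo$; trichotomy has a third branch $\atr_1=\atr_3$, which should be dispatched explicitly (it contradicts $\axretval$ directly, without even using $\axprefix$). Second, the opening reduction to simple cycles is both unnecessary and, as phrased, slightly shaky: splicing a non-simple cycle at a repeated vertex can create a \emph{new} adjacency of two $\cfo$ edges, so ``a minimal counterexample is simple'' needs more care than a one-line appeal. Fortunately your contraction argument never uses simplicity --- it applies verbatim to any closed walk in $\po\cup\rfo\cup\sto\cup\cfo$ --- so the cleanest fix is to delete that sentence.
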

\vspace{-3pt}
\noindent
\textsc{Proof} of Theorem \ref{them:RobPcSiLem}:
    For the only-if direction, if $\aprog$ is robust against \pcc{} relative to \sic{} then every trace $\atrace$ of $\aprog$ under \pcc{} is \sic{} as well. Therefore, by Lemma \ref{lem:sicycles}, all cycles in $\atrace$ contain two successive $\cfo$ which concludes the proof of this direction. 
    For the reverse, let $\atrace$ be a trace of $\aprog$ under \pcc{} such that all its happens-before cycles 
    contain two successive $\cfo$. Then, by Lemma \ref{lem:pcsicycles}, we have that $\atrace$ is \sic{}.  
    Thus, every trace $\atrace$ of $\aprog$ under \pcc{} is \sic{}.    
    %Since $\atrace$ is \pcc{}, then from Lemma \ref{lem:sicycles}, all cycles in $\atrace$ contain either two successive $\cfo$ edges or two adjacent edges, the first is $\sto$ and the second is $\cfo$. We obtain that $\atrace$ satisfies $(a \vee \neg b) \wedge (a \vee b)$ where $a$ corresponds to all cycles in $\atrace$ contain  two successive $\cfo$ edges  and $b$ corresponds to all cycles in $\atrace$ contain two adjacent edges, the first is $\sto$ and the second is $\cfo$. Since $(a \vee \neg b) \wedge (a \vee b) \equiv a$ thus we obtain that all cycles in $\atrace$ must contain two successive $\cfo$ edges. Then, from Lemma \ref{lem:pcsicycles}, we obtain that $\atrace$ is SI.
\hfill $\Box$
%\end{proof}

\medskip
Next, we present an important lemma that characterizes happens before cycles possible under the \pcc{} semantics. 
This is a strengthening of a result in~\cite{DBLP:conf/concur/0002G16} which shows that all happens before cycles under \pcc{} must have two successive dependencies in $\{\cfo,\sto\}$ and at least one $\cfo$. We show that the two successive dependencies cannot be $\cfo$ followed $\sto$, or two successive $\sto$.
\vspace{-3pt}
\begin{lemma} \label{lem:pccycles}
    If a trace $\atrace$ is \pcc{} then all happens-before cycles in $\atrace$ must contain either two successive $\cfo$ dependencies or a $\sto$ dependency followed by a $\cfo$ dependency.
\end{lemma}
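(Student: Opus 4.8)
The plan is to prove the statement by contradiction. Suppose $\atrace$ is \pcc{} and fix a happens-before cycle $C$ (a closed sequence of $\po$, $\rfo$, $\sto$, $\cfo$ edges) that contains neither two successive $\cfo$ edges nor a $\sto$ edge immediately followed by a $\cfo$ edge. I will derive a cycle in the arbitration order $\arbo$, contradicting the fact that $\arbo$ is a strict total order. Fix causal and arbitration orders $\viso,\arbo$ witnessing $\axpc$ for $\atrace$, so that $\viso\subseteq\arbo$, $\viso_{0}^{+}\subseteq\viso$ ($\axpoco$), $\arbo_{0}^{+}\subseteq\arbo$ ($\axcoarb$), and $\arbo;\viso\subseteq\viso$ ($\axprefix$).

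The crux is the containment, which is exactly where the prefix axiom $\axprefix$ is essential:
\[ \viso;\cfo\ \subseteq\ \arbo. \]
To prove it, take $(v,m)\in\viso$ and $(m,v')\in\cfo$. By $\cfo=\rfo^{-1};\sto$ there is a transaction $w$ and a variable such that $(w,m)\in\rfo$ (i.e.\ $m$ reads that variable from $w$) and $(w,v')\in\sto$ (i.e.\ $w$ and $v'$ both write it, $w$ before $v'$); note $v'\neq m$ as $\cfo$ relates distinct transactions. First, $(v',m)\notin\viso$: otherwise $v'$ would be a $\viso$-predecessor of $m$ writing the read variable, so by $\axretval$ the actual source $w$ is $\arbo$-maximal among such writers, giving $(v',w)\in\arbo$, contradicting $(w,v')\in\sto\subseteq\arbo$. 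Second, if $(v',v)\in\arbo$, then $\axprefix$ applied to $(v',v)\in\arbo$ and $(v,m)\in\viso$ gives $(v',m)\in\viso$, contradicting the previous point; since $v\neq v'$ (as $(v,m)\in\viso$ but $(v',m)\notin\viso$) and $\arbo$ is total, we conclude $(v,v')\in\arbo$.

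With this in hand I would finish as follows. Since $\arbo_{0}^{+}\subseteq\arbo$ is acyclic, $C$ must use at least one $\cfo$ edge. Reading $C$ cyclically, the edge immediately preceding each $\cfo$ edge is, by the assumption on $C$, neither $\cfo$ nor $\sto$, hence a $\po$ or $\rfo$ edge, i.e.\ a $\viso_{0}$ edge. Consequently, between two consecutive $\cfo$ edges lies a nonempty block of $\po\cup\rfo\cup\sto$ edges whose last edge is a $\viso_{0}$ edge; such a block is contained in $\viso$, because it lies in $\arbo_{0}^{*};\viso_{0}\subseteq(\mathit{id}\cup\arbo);\viso_{0}\subseteq\viso_{0}\cup(\arbo;\viso_{0})\subseteq\viso$, using $\viso_{0}\subseteq\viso$ and $\arbo;\viso\subseteq\viso$. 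Thus each block together with its succeeding $\cfo$ edge is a pair in $\viso;\cfo\subseteq\arbo$, and composing these $\arbo$ steps around $C$ yields an $\arbo$ cycle, the desired contradiction; hence $C$ contains two successive $\cfo$ or a $\sto$ followed by a $\cfo$.

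I expect the main obstacle to be the containment $\viso;\cfo\subseteq\arbo$, and in particular recognizing that it is the prefix axiom $\axprefix$, not merely the \ccc{} axioms, that makes it hold. Indeed the analogous claim is \emph{false} for \ccc{}: the $\mathsf{SB}$ trace has a happens-before cycle built only from $\po$ and $\cfo$ edges, so any correct proof must exploit a property separating \pcc{} from \ccc{}; this also explains why the present statement sharpens the earlier result by ruling out the $\cfo$-then-$\sto$ and $\sto$-then-$\sto$ cases. A secondary point requiring care is the bookkeeping showing that each maximal $\po\cup\rfo\cup\sto$ block ending in a $\viso_{0}$ edge lies in $\viso$, and that blocks and $\cfo$ edges genuinely alternate so the whole cycle collapses into $\arbo$.
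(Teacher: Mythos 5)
Your proof is correct and its core mechanism is the same as the paper's: assume a happens-before cycle with no $\cfo;\cfo$ and no $\sto;\cfo$ pattern, observe that every $\cfo$ edge is then immediately preceded by a $\viso_{0}$ edge, collapse each maximal non-$\cfo$ block together with its following $\cfo$ edge into a single $\arbo$ edge, and contradict the totality/irreflexivity of $\arbo$. Where you genuinely differ is in what is imported. The paper first invokes the prior characterization it cites (every \pcc{} happens-before cycle has two adjacent edges in $\{\cfo,\sto\}$ with at least one $\cfo$) and then takes $\viso_{0};\cfo\subseteq\arbo$ off the shelf as a known consequence of the \pcc{} axioms; you use neither, and instead prove the stronger containment $\viso;\cfo\subseteq\arbo$ from first principles --- ruling out $(v',m)\in\viso$ via $\axretval$ and the $\arbo$-maximality of the read-from source, then ruling out $(v',v)\in\arbo$ via $\axprefix$, and concluding by totality of $\arbo$. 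This buys a self-contained argument (which in passing re-derives the cited characterization as a by-product) at the cost of the extra work on the containment, and that work is sound; so is your bookkeeping that each $\arbo_{0}$-block ending in a $\viso_{0}$ edge lies in $\viso$ (via $\arbo;\viso\subseteq\viso$), and your remark that $\axprefix$ is exactly the axiom that fails under \ccc{}, as the $\mathsf{SB}$ trace shows, correctly locates why the lemma strengthens the \ccc{}-level statement.
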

\vspace{-5pt}
Combining the results of Theorem \ref{them:RobPcSiLem} and Lemmas~\ref{lem:PcSer} and~\ref{lem:pccycles}, we obtain the following characterization of traces which violate robustness against \pcc{} relative to \sic{}.

\begin{theorem} \label{corol:pcsi}
    A program $\aprog$ is not robust against \pcc{} relative to \sic{} iff there exists a trace $\atrace_\pcinstr$ of $\aprog_\pcinstr$ under \serc{} such that the trace $\atrace$ obtained by merging\footnote{This transformation has been defined at the beginning of Section~\ref{sec:CCPCrobustness}.} read and write transactions in $\atrace_\pcinstr$ contains a happens-before cycle that does not contain two successive $\cfo$ dependencies, and it contains a $\sto$ dependency followed by a $\cfo$ dependency.      
\end{theorem}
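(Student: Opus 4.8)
The plan is to obtain Theorem~\ref{corol:pcsi} by chaining together the cycle characterization of robustness (Theorem~\ref{them:RobPcSiLem}), the structural restriction on \pcc{} happens-before cycles (Lemma~\ref{lem:pccycles}), and the split/merge correspondence between \pcc{} traces of $\aprog$ and \serc{} traces of $\aprog_\pcinstr$ (Lemma~\ref{lem:PcSer}). Since all three ingredients are already established, the argument is essentially a logical composition, and I would prove the two directions of the \emph{iff} in tandem, using the bijection supplied by Lemma~\ref{lem:PcSer} to move between the two program versions.

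For the forward direction, I would first unfold the definition of non-robustness and invoke the contrapositive of Theorem~\ref{them:RobPcSiLem}: if $\aprog$ is not robust against \pcc{} relative to \sic{}, then there exists a \pcc{} trace $\atrace$ of $\aprog$ carrying a happens-before cycle that does \emph{not} contain two successive $\cfo$ dependencies. Because $\atrace$ is \pcc{}, Lemma~\ref{lem:pccycles} forces every happens-before cycle of $\atrace$ to contain either two successive $\cfo$ dependencies or a $\sto$ dependency immediately followed by a $\cfo$ dependency; combined with the absence of two successive $\cfo$ on the exhibited cycle, this yields that the cycle contains a $\sto$ dependency followed by a $\cfo$ dependency. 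Finally, Lemma~\ref{lem:PcSer} guarantees that the corresponding split trace $\atrace_\pcinstr$ is a \serc{} trace of $\aprog_\pcinstr$, and since merging is the inverse of splitting, merging $\atrace_\pcinstr$ recovers exactly $\atrace$ together with the same cycle.

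For the reverse direction, I would start from a \serc{} trace $\atrace_\pcinstr$ of $\aprog_\pcinstr$ whose merged trace $\atrace$ exhibits a happens-before cycle with no two successive $\cfo$ and with a $\sto$ followed by a $\cfo$. By Lemma~\ref{lem:PcSer}, $\atrace$ is a \pcc{} trace of $\aprog$, and it displays a happens-before cycle lacking two successive $\cfo$; applying the contrapositive of Theorem~\ref{them:RobPcSiLem} then shows that $\aprog$ is not robust. Note that here the $\sto$-followed-by-$\cfo$ hypothesis is not strictly required (it is implied by Lemma~\ref{lem:pccycles} once $\atrace$ is known to be \pcc{}), but retaining it keeps the statement a clean, self-contained characterization.

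The only genuinely delicate point is ensuring that the happens-before cycle, together with the \emph{labels} of its edges ($\cfo$ versus $\sto$), is transported faithfully across the split/merge correspondence, so that a cycle pattern asserted over the merged trace $\atrace$ is well-defined rather than an artifact of the transformation. This is where I would lean on the explicit definition of $\atrace_\pcinstr$ underlying Lemma~\ref{lem:PcSer}: the construction puts the $\sto$, $\rfo$, and $\cfo$ edges of $\atrace$ in one-to-one correspondence with the appropriate edges among the read/write copies in $\atrace_\pcinstr$, so the dependency types are preserved under merging. Once this bookkeeping is granted, everything else is a direct substitution of the earlier results.
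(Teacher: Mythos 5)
Your proposal is correct and follows essentially the same route as the paper, which obtains this theorem precisely by combining Theorem~\ref{them:RobPcSiLem}, Lemma~\ref{lem:pccycles}, and Lemma~\ref{lem:PcSer} (together with the split/merge correspondence between traces of $\aprog$ and $\aprog_\pcinstr$). Your observations that the $\sto$-followed-by-$\cfo$ condition is redundant in the reverse direction and that the dependency labels must be transported faithfully across the transformation are both accurate and consistent with the paper's construction.
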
 
\vspace{-3pt}  
The results above enable a reduction from checking robustness against \pcc{} relative to \sic{} to a reachability problem under the \serc{} semantics. For a program $\aprog$, we define an instrumentation denoted by $\sem{\aprog}$, such that $\aprog$ is not robust against \pcc{} relative to \sic{} iff $\sem{\aprog}$ violates an assertion under \serc{}. The instrumentation consists in rewriting every transaction of $\aprog$ as shown in Figure~\ref{Figure:Instr}. %In the following, we explain the code in Figure~\ref{Figure:Instr}.

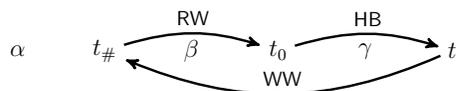
\begin{wrapfigure}{r}{0.5\textwidth}
  \vspace{-33pt}
\begin{center}
\scalebox{0.77}
{
\begin{tikzpicture}[->,>=stealth',shorten >=1pt,auto,node distance=4cm,
    semithick, transform shape,every text node part/.style={align=left}]
%   \node[shape=rectangle ,draw=none,font=\large] at (-1,0)  (p)  {$\atrace_{\instr} \ =$};
   \node[shape=rectangle ,draw=none,font=\large] at (0,0)  (m)  {$\alpha$};
   \node[shape=rectangle ,draw=none,font=\large] at (1.5,0)  (n) {$\atr_{\instr}$};
   \node[shape=rectangle ,draw=none,font=\large] at (3,0)  (n1) {$\beta$};
   \node[shape=rectangle ,draw=none,font=\large] at (4.5,0)  (n2) {$\atr_{0}$};
   \node[shape=rectangle ,draw=none,font=\large] at (6,0)  (n3) {$\gamma$};
   \node[shape=rectangle ,draw=none,font=\large] at (7.5,0)  (n4) {$\atr$};
   \begin{scope}[ every edge/.style={draw=black,very thick}]
   \path[->] (n) edge[bend left=17] node {$\cfo$} (n2);
   \path[->] (n2) edge[bend left=17] node {$\hbo$} (n4);
   \path[->] (n4) edge[bend left=23, above] node {$\sto$} (n);
    \end{scope}
\end{tikzpicture}}   
\end{center}
\vspace{-10pt}
\caption{Execution simulating a violation to robustness against \pcc{} relative to \sic{}.}
\label{fig:violationInstr}
\vspace{-15pt}
\end{wrapfigure}
The instrumentation $\sem{\aprog}$ running under $\serc$ simulates the \pcc{} semantics of $\aprog$ using the same idea of decoupling the execution of the read part of a transaction from the write part. It violates an assertion when it simulates a \pcc{} trace containing a happens-before cycle as in Theorem \ref{corol:pcsi}. The execution corresponding to this trace has the shape given in Figure~\ref{fig:violationInstr}, where $\atr_{\instr}$ is the transaction that occurs between the $\sto$ and the $\cfo$ dependencies, and every transaction executed after $\atr_{\instr}$ (this can be a full transaction in $\aprog$, or only the read or write part of a transaction in $\aprog$) is related by a happens-before path to $\atr_{\instr}$ (otherwise, the execution of this transaction can be reordered to occur before $\atr_{\instr}$). A transaction in $\aprog$ can have its read part included in $\alpha$ and the write part included in $\beta$ or $\gamma$. Also, $\beta$ and $\gamma$ may contain transactions in $\aprog$ that executed only their read part. It is possible that $\atr_{0} = \atr$, $\beta=\gamma=\epsilon$, and $\alpha = \epsilon$ (the $\mathsf{LU}$ program shown in Figure \ref{fig:litmus2} is an example where this can happen). The instrumentation uses auxiliary variables to track happens-before dependencies, which are explained below.

%We use $\atrace_{\instr}$ (instead of just $\atrace$) to denote the trace that is characterized in Theorem \ref{corol:pcsi} that implies non robustness. 
%Also, we use $\atr_{\instr}$ to denote a transaction in $\atrace_{\instr}$ that is part of the happens before cycle whose existence is implied by Theorem \ref{corol:pcsi} such that $\atr_{\instr}$ occurs between the $\sto$ and the $\cfo$ relations. 
%
%$\alpha$, $\beta$, and $\gamma$ denote sequences of transactions in $\atrace_{\instr}$. 
%Note that we may have $\atr_{0} = \atr$, $\beta=\gamma=\epsilon$, and $\alpha = \epsilon$ as in the $\mathsf{LU}$ program shown in Figure \ref{fig:litmus2}.

\lstset{basicstyle=\ttfamily\scriptsize,numbers=left,
      stepnumber=1,numberblanklines=false,mathescape=true,escapechar=$,morekeywords={assert,assume}}

\begin{figure}[!ht]
\footnotesize
\begin{minipage}{\linewidth}
Transaction ``\plog{begin} $\langle$read$\rangle^{*}$ $\langle$test$\rangle^{*}$ $\langle$write$\rangle^{*}$ \plog{commit}'' is rewritten to:
\end{minipage}
\begin{minipage}{0.655\linewidth}
%\lstset{numbers=none}
\begin{lstlisting}
if ( !done$_\#$ )
  if ( * )
    begin <read>$^{*}$ <test>$^{*}$ commit $\label{ln:part11}$
    if ( !done$_\#$ )
      begin <write>$^{*}$ commit $\label{ln:part12}$
    else 
      $\mathcal{I}$(begin) ($\mathcal{I}$(<write>))$^{*}$ $\mathcal{I}$(commit)$\label{ln:part31}$
  else
     begin ($\mathcal{I}_\#$(<read>))$^{*}$ <test>$^{*}$ ($\mathcal{I}_\#$(<write>))$^{*}$ $\mathcal{I}_\#$(commit)$\label{ln:part21}$
     assume false;
else if ( * )
  rdSet' := $\emptyset$;
  wrSet' := $\emptyset$;
  $\mathcal{I}$(begin) ($\mathcal{I}$(<read>))$^{*}$ <test>$^{*}$ $\mathcal{I}$(commit)$\label{ln:part32}$
  $\mathcal{I}$(begin) ($\mathcal{I}$(<write>))$^{*}$ $\mathcal{I}$(commit)$\label{ln:part33}$
\end{lstlisting}
%else
%   $\mathcal{I}$(begin) ($\mathcal{I}$(<read>))$^{*}$ <test>$^{*}$ ($\mathcal{I}$(<write>))$^{*}$ $\mathcal{I}$(commit)
\end{minipage}\hfill
\begin{minipage}{0.305\linewidth}
% \lstset{numbers=none}
$\mathcal{I}_\#$( r := x ):
\begin{lstlisting}[xleftmargin=2mm,firstnumber=16]
r := x; $\label{ln:delay1}$
hbR['x'] := 0;
rdSet := rdSet $\cup$ { 'x' };
\end{lstlisting}
$\mathcal{I}_\#$( x := e ):
\begin{lstlisting}[xleftmargin=2mm,firstnumber=19]
if ( varW == $\bot$ and * )
  varW := 'x';
\end{lstlisting}
$\mathcal{I}_\#$( commit ):
\begin{lstlisting}[xleftmargin=2mm,firstnumber=21]
assume ( varW != $\bot$ )
done$_\#$ := true $\label{ln:delay2}$
\end{lstlisting}
\end{minipage}

\vspace{1mm}
\begin{minipage}{0.5\linewidth}
%\lstset{numbers=none}
$\mathcal{I}$( begin ):
\begin{lstlisting}[xleftmargin=3mm,firstnumber=23]
begin
hb := $\bot$
if ( hbP != $\bot$ and hbP < 2 )
  hb := 0;
else if ( hbP = 2 )
  hb := 2;
\end{lstlisting}
$\mathcal{I}$(  commit ):
\begin{lstlisting}[xleftmargin=3mm,firstnumber=29]
assume ( hb != $\bot$ ) $\label{ln:assume}$
assert ( hb == 2 or varW $\not\in$ wrSet' ); $\label{ln:assert}$
if ( hbP == $\bot$ or hbP > hb ) $\label{ln:hbupdates1}$
  hbP = hb;
for each 'x' $\in$ wrSet'
  if ( hbW['x'] == $\bot$ or hbW['x'] > hb )
    hbW['x'] = hb; 
for each 'x' $\in$ rdSet'
  if ( hbR['x'] == $\bot$ or hbR['x'] > hb )
    hbR['x'] = hb;  $\label{ln:hbupdates2}$
rdSet := rdSet $\cup$ rdSet'; $\label{ln:rdSetUpdate}$
wrSet := wrSet $\cup$ wrSet'; $\label{ln:wrSetUpdate}$
commit
\end{lstlisting}
\end{minipage}
\hfill
\begin{minipage}{0.45\linewidth}
$\mathcal{I}$(  r := x ):
\begin{lstlisting}[xleftmargin=3mm,firstnumber=42]
r := x;
rdSet' := rdSet' $\cup$ { 'x' };
if ( 'x' $\in$ wrSet ) $\label{ln:rdHBCont}$
  if ( hbW['x'] != 2 ) 
    hb := 0 $\label{ln:rdHBConthb1}$
  else if ( hb == $\bot$ )
    hb := hbW['x'] $\label{ln:rdHBConthb2}$
\end{lstlisting}
$\mathcal{I}$(  x := e ):
\begin{lstlisting}[xleftmargin=3mm,firstnumber=49]
x := e;
wrSet' := wrSet' $\cup$ { 'x' };
if ( 'x' $\in$ wrSet ) $\label{ln:wrHBCont1}$
  if ( hbW['x'] != 2 )
    hb := 0
  else if ( hb == $\bot$ )
    hb := hbW['x']
if ( 'x' $\in$ rdSet ) $\label{ln:wrHBCont2}$
  if ( hb = $\bot$ or hb > hbR['x'] + 1 )
    hb := min(hbR['x'] + 1,2)
\end{lstlisting}
\end{minipage}
\vspace{-10pt}
   \normalsize
   \caption{A program instrumentation for checking robustness against \pcc{} relative to \sic{}. The auxiliary variables used by the instrumentation are shared variables, except for \texttt{hbP}, \texttt{rdSet'}, and \texttt{wrSet'}, which are process-local variables, and they are initially set to $\bot$. This instrumentation uses program constructs which can be defined as syntactic sugar from the syntax presented in Section~\ref{sec:consistency}, e.g., if-then-else statements (outside transactions).}
   \label{Figure:Instr}
   \vspace{-10pt}
   \end{figure}

The instrumentation executes (incomplete) transactions without affecting the auxiliary variables (without tracking happens-before dependencies) (lines~\ref{ln:part11} and \ref{ln:part12}) until a non-deterministically chosen point in time when it declares the current transaction as the candidate for $\atr_{\instr}$ (line~\ref{ln:part21}). Only one candidate for $\atr_{\instr}$ can be chosen during the execution. This transaction executes only its reads and it chooses non-deterministically a variable that it could write as a witness for the $\sto$ dependency (see lines~\ref{ln:delay1}-\ref{ln:delay2}). The name of this variable is stored in a global variable \texttt{varW} (see the definition of $\mathcal{I}_\#$( x := e )).
%The id of the process executing the current transaction is also recorded, in the variable \texttt{pW} (see the definition of $\mathcal{I}_\#$( x := e )). 
The writes are \emph{not} applied on the shared memory. Intuitively, $\atr_{\instr}$ should be thought as a transaction whose writes are delayed for later, after transaction $\atr$ in Figure~\ref{fig:violationInstr} executed. The instrumentation checks that $\atr_{\instr}$ and $\atr$ can be connected by some happens-before path that includes the $\cfo$ and $\sto$ dependencies, and that does not contain two consecutive $\cfo$ dependencies. If it is the case, it violates an assertion at the commit point of $\atr$. Since the write part of $\atr_{\instr}$ is intuitively delayed to execute after $\atr$, the process executing $\atr_{\instr}$ is disabled all along the execution (see the \texttt{assume false}).

After choosing the candidate for $\atr_{\instr}$, the instrumentation uses the auxiliary variables for tracking happens-before dependencies. Therefore, \texttt{rdSet} and \texttt{wrSet} record variables read and written, respectively, by transactions that are connected by a happens-before path to $\atr_{\instr}$ (in a trace of $\aprog$). This is ensured by the assume at line~\ref{ln:assume}. During the execution, the variables read or written by a transaction\footnote{These are stored in the local variables \texttt{rdSet'} and \texttt{wrSet'} while the transaction is running.} that writes a variable in \texttt{rdSet} (see line~\ref{ln:wrHBCont2}), or reads or writes a variable in \texttt{wrSet} (see lines~\ref{ln:rdHBCont} and~\ref{ln:wrHBCont1}), will be added to these sets (see lines~\ref{ln:rdSetUpdate} and~\ref{ln:wrSetUpdate}). 
%The \texttt{assume} at line~\ref{ln:assume} will pass anytime such a conflicting access happens. 
Since the variables that $\atr_{\instr}$ writes in $\aprog$ are not recorded in \texttt{wrSet}, these happens-before paths must necessarily start with a $\cfo$ dependency (from $\atr_{\instr}$). When the assertion fails (line~\ref{ln:assert}), the condition \texttt{varW} $\in$ \texttt{wrSet}' ensures that the current transaction has a $\sto$ dependency towards the write part of $\atr_{\instr}$ (the current transaction plays the role of $\atr$ in Figure~\ref{fig:violationInstr}).

%A subtle point of the instrumentation is that after $\atr_{\instr}$, transactions can execute in their entirety (as in $\aprog$) or in two phases, first their read part and then the write part (to simulate the \pcc{} semantics). The two scenarios lead to different effects on the auxiliary variables tracking happens-before dependencies. The reads of a transaction that executes in two phases are not recorded in \texttt{rdSet} if none of them is conflicting with a write in a previous transaction, as opposed to the case when the transaction executes in its entirety (this transaction could be connected by happens-before to $\atr_{\instr}$ because of its writes). This is complete because a transaction $\atr'$ that must execute in two phases is related by $\cfo$ to another transaction $\atr''$ which is in happens before itself (more precisely before its write part). Then, either the happens-before from $\atr''$ to $\atr'$ is $\sto$ and this scenario would be captured when $\atr'$ is chosen as a candidate for $\atr_{\instr}$, or it is $\cfo$ and the path from $\atr_{\instr}$ to $\atr'$ contains two consecutive 

The rest of the instrumentation checks that there exists a happens-before path from $\atr_{\instr}$ to $\atr$ that does not include two consecutive $\cfo$ dependencies, called a \sic{}$_{\neg}$ path. This check is based on the auxiliary variables whose name is prefixed by \texttt{hb} and which take values in the domain $\{\bot,0,1,2\}$ ($\bot$ represents the initial value). 
Therefore, 
\begin{itemize}[topsep=3pt]
	\item \texttt{hbR['x']} (resp., \texttt{hbW['x']}) is 0 iff there exists a transaction $\atr'$ that reads \texttt{x} (resp., writes to \texttt{x}), such that there exists a \sic{}$_{\neg}$ path from $\atr_{\instr}$ to $\atr'$ that ends with a dependency which is \emph{not} $\cfo$,
	\item \texttt{hbR['x']} (resp., \texttt{hbW['x']}) is 1 iff there exists a transaction $\atr'$ that reads \texttt{x} (resp., writes to \texttt{x}) that is connected to $\atr_{\instr}$ by a \sic{}$_{\neg}$  path, and \emph{every} \sic{}$_{\neg}$ path from $\atr_{\instr}$ to a transaction $\atr''$ that reads \texttt{x} (resp., writes to \texttt{x}) ends with an $\cfo$ dependency,
	\item \texttt{hbR['x']} (resp., \texttt{hbW['x']}) is 2 iff there exists no \sic{}$_{\neg}$ path from $\atr_{\instr}$ to a transaction $\atr'$ that reads \texttt{x} (resp., writes to \texttt{x}).
\end{itemize}
The local variable \texttt{hbP} has the same interpretation, except that $\atr'$ and $\atr''$ are instantiated over transactions in the same process (that already executed) instead of transactions that read or write a certain variable. Similarly, the variable \texttt{hb} is a particular case where $\atr'$ and $\atr''$ are instantiated to the current transaction. The violation of the assertion at line~\ref{ln:assert} implies that \texttt{hb} is 0 or 1, which means that there exists a \sic{}$_{\neg}$ path from $\atr_{\instr}$ to $\atr$.

During each transaction that executes after $\atr_{\instr}$, the variable \texttt{hb} characterizing happens-before paths that end in this transaction is updated every time a new happens-before dependency is witnessed (using the values of the other variables). For instance, when witnessing a $\rfo$ dependency (line~\ref{ln:rdHBCont}), if there exists a \sic{}$_{\neg}$ path to a transaction that writes to \texttt{x}, then the path that continues with the $\rfo$ dependency towards the current transaction is also a \sic{}$_{\neg}$ path, and the last dependency of this path is not $\cfo$. Therefore, \texttt{hb} is set to 0 (see line~\ref{ln:rdHBConthb1}). Otherwise, if every path to a transaction that writes to \texttt{x} is not a \sic{}$_{\neg}$ path, then every path that continues to the current transaction (by taking the $\rfo$ dependency) remains a non \sic{}$_{\neg}$ path, and \texttt{hb} is set to the value of \texttt{hbW[`x`]}, which is 2 in this case (see line~\ref{ln:rdHBConthb2}). Before ending a transaction, the value of \texttt{hb} can be used to modify the \texttt{hbR}, \texttt{hbW}, and \texttt{hbP} variables, but only if those variables contain bigger values (see lines~\ref{ln:hbupdates1}--\ref{ln:hbupdates2}).

The correctness of the instrumentation is stated in the following theorem.

\begin{theorem}\label{them:RobPcSiInstr}
A program $\aprog$ is robust against \pcc{} relative to \sic{} iff the instrumentation in Figure~\ref{Figure:Instr} does not violate an assertion when executed under \serc{}.
\end{theorem}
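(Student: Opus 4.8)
The plan is to reduce the statement to the trace-level characterization of Theorem~\ref{corol:pcsi}: $\aprog$ is not robust against \pcc{} relative to \sic{} iff $\aprog_\pcinstr$ has a \serc{} trace $\atrace_\pcinstr$ whose merge $\atrace$ contains a happens-before cycle with a $\sto$ dependency immediately followed by a $\cfo$ dependency and with no two successive $\cfo$ dependencies. Accordingly I would show that the \serc{} executions of $\sem{\aprog}$ that reach the assertion at line~\ref{ln:assert} correspond exactly to such traces. The first step fixes the simulation: stripping away the bookkeeping, every \serc{} execution of $\sem{\aprog}$ is a \serc{} execution of $\aprog_\pcinstr$ in which one transaction $\atr_{\instr}$ (selected nondeterministically at line~\ref{ln:part21}) executes only its read part, records a candidate write variable in \texttt{varW}, and has its write part deferred to the end of the serialization. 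By Lemma~\ref{lem:PcSer} the \serc{} traces of $\aprog_\pcinstr$ are in bijection with the \pcc{} traces of $\aprog$, so it suffices to prove that the assertion fails iff the corresponding \pcc{} trace has a cycle of the required shape passing through $\atr_{\instr}$.

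The heart of the proof is an invariant showing that the auxiliary variables faithfully track \emph{\sic{}$_{\neg}$-path reachability} from $\atr_{\instr}$, a \sic{}$_{\neg}$ path being a happens-before path with no two consecutive $\cfo$ edges. I would prove by induction on the execution (after $\atr_{\instr}$ is chosen) that: (i) \texttt{rdSet} and \texttt{wrSet} hold precisely the variables read, resp.\ written, by the transactions that are happens-before reachable from $\atr_{\instr}$ and have committed so far---and, crucially, since $\atr_{\instr}$ contributes only reads to \texttt{rdSet} and nothing to \texttt{wrSet}, every such path must begin with a $\cfo$ edge out of $\atr_{\instr}$; and (ii) each \texttt{hb}-prefixed variable equals $0$, $1$, or $2$ exactly according to its three-case specification ($0$ = a \sic{}$_{\neg}$ path to a witness ending in a non-$\cfo$ edge, $1$ = a \sic{}$_{\neg}$ path exists but all end in $\cfo$, $2$ = no \sic{}$_{\neg}$ path). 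The delicate inductive step is the edge updates inside a transaction. For a $\rfo$ or $\sto$ edge (lines~\ref{ln:rdHBCont} and~\ref{ln:wrHBCont1}) the new edge is non-$\cfo$, so any \sic{}$_{\neg}$ path to the source extends to one ending in a non-$\cfo$ edge, justifying \texttt{hb := 0}; for a $\cfo$ edge (line~\ref{ln:wrHBCont2}) the update \texttt{hb := min(hbR['x']+1,2)} records that a path ending in a non-$\cfo$ edge (value~$0$) extends to one ending in $\cfo$ (value~$1$), whereas a path already ending in $\cfo$ (value~$1$) cannot be extended without creating two consecutive $\cfo$ (value~$2$). The begin/commit code (lines~\ref{ln:hbupdates1}--\ref{ln:hbupdates2}) then propagates \texttt{hb} into \texttt{hbP}, \texttt{hbR}, and \texttt{hbW} by a pointwise minimum, preserving (ii).

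Granting the invariant, the only-if direction is obtained by replay. Given a non-robust $\aprog$, take the witnessing trace of Theorem~\ref{corol:pcsi}, let $\atr_{\instr}$ be the cycle transaction lying between the $\sto$ and the $\cfo$ edges, and let $\atr$ be the transaction contributing the $\sto$ edge into $\atr_{\instr}$. Scheduling $\sem{\aprog}$ to pick this $\atr_{\instr}$ at line~\ref{ln:part21} with \texttt{varW} the variable realizing the $\sto$, the remainder of the cycle is a \sic{}$_{\neg}$ path $\atr_{\instr} \xrightarrow{\cfo} \cdots \to \atr$; by the invariant \texttt{hb} $\in \{0,1\}$ at the commit of $\atr$, while \texttt{varW} $\in$ \texttt{wrSet'} since $\atr$ writes that variable, so the assertion at line~\ref{ln:assert} fails. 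Conversely, an assertion-violating execution exhibits a transaction $\atr$ whose commit has \texttt{hb} $\in \{0,1\}$ and \texttt{varW} $\in$ \texttt{wrSet'}; the former yields (invariant (ii)) a \sic{}$_{\neg}$ path $\atr_{\instr} \xrightarrow{\cfo} \cdots \to \atr$ that begins with a $\cfo$ edge (invariant (i)), and the latter yields a $\sto$ edge $\atr \xrightarrow{\sto} \atr_{\instr}$; composing them closes a happens-before cycle with a $\sto$ immediately followed by a $\cfo$ and no two successive $\cfo$, which is precisely a witness of Theorem~\ref{corol:pcsi} in the underlying \serc{} trace of $\aprog_\pcinstr$, so $\aprog$ is not robust. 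The main obstacle is establishing the invariant: one must check that the $\{0,1,2\}$ min-propagation is simultaneously sound and complete for \sic{}$_{\neg}$-path existence under all four edge types, and that deferring the writes of $\atr_{\instr}$ neither suppresses any happens-before edge needed to realize the cycle nor introduces a spurious one.
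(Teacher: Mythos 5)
Your proposal is correct and follows essentially the same route as the paper: it reduces to the trace characterization of Theorem~\ref{corol:pcsi}, views a \serc{} run of $\sem{\aprog}$ as a \serc{} run of $\aprog_\pcinstr$ with the write part of one nondeterministically chosen $\atr_{\instr}$ deferred, and establishes the correctness of the $\{0,1,2\}$-valued \texttt{hb} bookkeeping as tracking \sic{}$_{\neg}$-path reachability from $\atr_{\instr}$ --- which is precisely the argument the paper develops in its description of the instrumentation and of the violation shape in Figure~\ref{fig:violationInstr}. The invariant you propose to prove by induction is exactly the intended meaning the paper assigns to \texttt{rdSet}, \texttt{wrSet}, \texttt{hbR}, \texttt{hbW}, \texttt{hbP}, and \texttt{hb}, so your write-up is a faithful (and somewhat more explicit) formalization of the paper's own justification.
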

%We give the proof of Theorem \ref{them:RobPcSiInstr} in the supplementary materials. 

Theorem~\ref{them:RobPcSiInstr} implies the following complexity result for finite-state programs. The lower bound is proved similarly to the case \ccc{} vs \pcc{}.

%Similar to before, we obtain an upper bound of the robustness problem against \pcc{} relative to \sic{} based on the reachability problem under \serc{}. For the lower bound, we construct a program $\aprog'$ from $\aprog$ in the same manner as before just  instead of the $\mathsf{SB}$ program, we use the $\mathsf{LU}$ program. Then, we obtain that $\aprog$ reaches a state $\astate$ iff $\aprog'$ is not robust against \pcc{} relative to \sic{}. 

\begin{corollary}\label{corol:SIRobcomplexity}
Checking robustness of a program with a fixed number of variables and bounded data domain against \pcc{} relative to \sic{} is PSPACE-complete when the number of processes is bounded and EXPSPACE-complete, otherwise.
%    Checking robustness of a program with a fixed number of variables and bounded data domain against \pcc{} relative to \sic{} is PSPACE-complete when the number of processes is fixed and EXPSPACE-complete, otherwise.
\end{corollary}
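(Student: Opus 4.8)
The plan is to obtain both bounds from Theorem~\ref{them:RobPcSiInstr}, which reduces robustness against \pcc{} relative to \sic{} to the non-reachability of an assertion violation in the instrumented program $\sem{\aprog}$ under \serc{}, exactly as Corollary~\ref{corol:PCRobcomplexity} is obtained from Theorem~\ref{them:RobCcPc}.

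For the upper bound, I would first check that the instrumentation of Figure~\ref{Figure:Instr} preserves the relevant finite-state parameters. The instrumented program $\sem{\aprog}$ uses the same number of processes as $\aprog$, and augments the state with only finitely many auxiliary variables (the arrays \texttt{hbR} and \texttt{hbW}, the flags \texttt{hb} and \texttt{hbP}, the marker \texttt{varW}, the completion flag, and the sets \texttt{rdSet}, \texttt{wrSet} together with their primed local copies). Since the number of shared variables of $\aprog$ is fixed, each auxiliary variable ranges over a bounded domain: the happens-before markers take values in $\{\bot,0,1,2\}$, \texttt{varW} over the fixed set of variable names, and each set-valued variable over the constant-size powerset of that set. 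Hence $\sem{\aprog}$ is again a finite-state program with a fixed number of variables over a bounded data domain and with the same (bounded or parametric) number of processes. The assertion-reachability problem for $\sem{\aprog}$ under \serc{} is then an instance of reachability under \serc{}, whose complexity coincides with that of reachability under sequential consistency: PSPACE for a bounded number of processes~\cite{DBLP:conf/focs/Kozen77} and EXPSPACE for a parametric number~\cite{DBLP:journals/tcs/Rackoff78}. This establishes the two upper bounds.

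For the lower bound, I would mirror the reduction sketched before Corollary~\ref{corol:PCRobcomplexity}, replacing the $\mathsf{SB}$ gadget (a \ccc{}-vs-\pcc{} witness) by the $\mathsf{LU}$ program of Figure~\ref{fig:litmus2}, which is a witness of non-robustness against \pcc{} relative to \sic{}. Given an instance $(\aprog,\ell)$ of reachability under sequential consistency, I would construct $\aprog'$ by wrapping each statement $s$ of $\aprog$ in its own transaction $[\mathtt{lock};\, s;\, \mathtt{unlock}]$ guarded by a single global lock. The lock totally orders these singleton transactions, so every execution of $\aprog'$ serializes them and faithfully simulates a sequentially-consistent execution of $\aprog$; in particular every such execution is serializable, hence \sic{}, and cannot on its own witness a robustness violation. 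I would then attach the $\mathsf{LU}$ gadget so that its (unlocked) transactions become enabled only upon reaching $\ell$. Consequently $\aprog'$ admits a \pcc{} trace that is not \sic{} iff $\ell$ is reachable in $\aprog$ under sequential consistency. Since the latter problem is PSPACE-hard with a bounded number of processes and EXPSPACE-hard with a parametric number, the matching lower bounds follow, yielding PSPACE-completeness and EXPSPACE-completeness, respectively.

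The step that I expect to require the most care is the preservation claim in the upper-bound argument: one must argue that the set-valued variables \texttt{rdSet}/\texttt{wrSet} and the variable-indexed arrays \texttt{hbR}/\texttt{hbW} contribute only a \emph{constant} blow-up to the state, relying on the fact that the number of shared variables is a fixed constant rather than part of the input, so that the standard PSPACE/EXPSPACE reachability bounds transfer verbatim. On the lower-bound side, the delicate point is arguing that the global lock forces \emph{every} \pcc{} and \sic{} execution of the simulation part of $\aprog'$ to be serializable, so that no spurious violation can arise outside the $\mathsf{LU}$ gadget.
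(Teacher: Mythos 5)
Your proposal matches the paper's intended argument: the upper bound follows from the instrumentation of Theorem~\ref{them:RobPcSiInstr} (whose auxiliary variables range over bounded domains, so the standard PSPACE/EXPSPACE bounds for reachability under \serc{} apply), and the lower bound adapts the global-lock reduction used for Corollary~\ref{corol:PCRobcomplexity}, with the $\mathsf{LU}$ program of Figure~\ref{fig:litmus2} replacing $\mathsf{SB}$ as the gadget witnessing non-robustness against \pcc{} relative to \sic{}. The paper leaves these details implicit ("proved similarly to the case \ccc{} vs \pcc{}"), and you have filled them in correctly.
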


Checking robustness against \ccc{} relative to \sic{} can be also shown to be reducible (in polynomial time) to a reachability problem under \serc{} by combining the results of checking robustness against \ccc{} relative to \pcc{} and \pcc{} relative to \sic{}. 

\begin{theorem} \label{them:RobCcSi}
A program $\aprog$ is robust against \ccc{} relative to \sic{} iff $\aprog$ is robust against \ccc{} relative to \pcc{} and $\aprog$ is robust against \pcc{} relative to \sic{}.
\end{theorem}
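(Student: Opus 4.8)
The plan is to prove this equivalence purely from the inclusion chain
$\tracesconf_{\serc{}}(\aprog)\subseteq \tracesconf_{\sic{}}(\aprog)\subseteq \tracesconf_{\pcc{}}(\aprog)\subseteq \tracesconf_{\ccc{}}(\aprog)$
noted at the end of Section~\ref{sec:consistency}, together with the definition of robustness as trace-set equality from Section~\ref{subsec:rob}. No property specific to any individual consistency model is needed here; the statement is a sandwich (squeeze) argument on these four nested sets. First I would restate the three robustness notions involved in set-theoretic form: robustness against \ccc{} relative to \sic{} means $\tracesconf_{\ccc{}}(\aprog)=\tracesconf_{\sic{}}(\aprog)$, robustness against \ccc{} relative to \pcc{} means $\tracesconf_{\ccc{}}(\aprog)=\tracesconf_{\pcc{}}(\aprog)$, and robustness against \pcc{} relative to \sic{} means $\tracesconf_{\pcc{}}(\aprog)=\tracesconf_{\sic{}}(\aprog)$.

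For the if direction, I would assume both conjuncts, i.e.\ $\tracesconf_{\ccc{}}(\aprog)=\tracesconf_{\pcc{}}(\aprog)$ and $\tracesconf_{\pcc{}}(\aprog)=\tracesconf_{\sic{}}(\aprog)$, and simply chain the two equalities through the common middle term $\tracesconf_{\pcc{}}(\aprog)$ to conclude $\tracesconf_{\ccc{}}(\aprog)=\tracesconf_{\sic{}}(\aprog)$, which is robustness against \ccc{} relative to \sic{}.

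For the only-if direction, I would assume $\tracesconf_{\ccc{}}(\aprog)=\tracesconf_{\sic{}}(\aprog)$ and insert this equality into the inclusion chain, obtaining $\tracesconf_{\sic{}}(\aprog)\subseteq\tracesconf_{\pcc{}}(\aprog)\subseteq\tracesconf_{\ccc{}}(\aprog)=\tracesconf_{\sic{}}(\aprog)$. Since the two extremes coincide, every inclusion in the chain collapses to an equality, so in particular $\tracesconf_{\pcc{}}(\aprog)=\tracesconf_{\ccc{}}(\aprog)$ and $\tracesconf_{\sic{}}(\aprog)=\tracesconf_{\pcc{}}(\aprog)$, which are exactly the two desired robustness properties.

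I do not expect any genuine obstacle: the entire content is subsumed by the monotonicity of $\tracesconf_{\textsf{X}}(\aprog)$ in the strength of $\textsf{X}$. The only point requiring a moment of care is the bookkeeping of which ordered pair of models each robustness notion refers to, so that the middle term $\tracesconf_{\pcc{}}(\aprog)$ is correctly identified as the shared term in the if direction and as the squeezed term in the only-if direction; once the definitions are unfolded, the argument is immediate.
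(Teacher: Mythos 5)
Your proposal is correct and follows essentially the same route as the paper's proof: both directions are handled by the same squeeze argument on the inclusion chain $\tracesconf_{\sic{}}(\aprog)\subseteq\tracesconf_{\pcc{}}(\aprog)\subseteq\tracesconf_{\ccc{}}(\aprog)$, with the if direction chaining the two equalities through $\tracesconf_{\pcc{}}(\aprog)$ and the only-if direction collapsing the sandwiched inclusions once the two extremes coincide. Nothing is missing.
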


\begin{remark} \label{rem:robustness}
  Our reductions of robustness checking to reachability apply to an extension of our programming language where the number of processes is unbounded and each process can execute an arbitrary number of times a statically known set of transactions. This holds because the instrumentation in Figure~\ref{Figure:Instr} and the one in~\cite{DBLP:conf/concur/BeillahiBE19} (for the case \ccc{} vs. \serc{}) consist in adding a set of instructions that manipulate a fixed set of process-local or shared variables, which do not store process or transaction identifiers. These reductions extend also to SQL queries that access unbounded size tables. Rows in a table can be interpreted as memory locations (identified by primary keys in unbounded domains, e.g., integers), and SQL queries can be interpreted as instructions that read/write a set of locations in one shot. These possibly unbounded sets of locations can be represented symbolically using the conditions in the SQL queries (e.g., the condition in the WHERE part of a SELECT). The instrumentation in Figure 6 needs to be adapted so that read and write sets are updated by adding sets of locations for a given instruction (represented symbolically as mentioned above). 
\end{remark}
%\begin{figure}[t]
%    \begin{multicols}{2}
%    \begin{algorithmic}[1]
%      %\REQUIRE $n \geq 0 \vee x \neq 0$
%      %\ENSURE $y = x^n$
%      \STATE $y \Leftarrow 1$
%      %\IF{$\hb{} = \perp$}
%      %\STATE $X \Leftarrow 1 / x$
%      %\STATE $N \Leftarrow -n$
%      %\ELSE
%      %\STATE $X \Leftarrow x$
%      %\STATE $N \Leftarrow n$
%      %\ENDIF
%      %\WHILE{$N \neq 0$}
%      %\IF{$N$ is even}
%      %\STATE $X \Leftarrow X \times X$
%      %\STATE $N \Leftarrow N / 2$
%      %\ELSE[$N$ is odd]
%      %\STATE $y \Leftarrow y \times X$
%      %\STATE $N \Leftarrow N - 1$
%      %\ENDIF
%      %\ENDWHILE
%    \end{algorithmic}
%    \end{multicols}
%    \caption{x}
%    \label{alg1}
%\end{figure}
    
    %\input{robustness-cc-si}
    %!TEX root = draft.tex
\vspace{-15pt}
\section{Proving Robustness Using Commutativity Dependency Graphs}
\label{sec:commutativitygraph}

We describe an approximated technique for proving robustness, which leverages the concept of left/right mover in Lipton's reduction theory~\cite{DBLP:journals/cacm/Lipton75}. This technique reasons on the \emph{commutativity dependency graph}~\cite{DBLP:conf/cav/BeillahiBE19} associated to the transformation $\aprog_\pcinstr$ of an input program $\aprog$ that allows to simulate the \pcc{} semantics under serializability (we use a slight variation of the original definition of this class of graphs).
We characterize robustness against \ccc{} relative to \pcc{} and \pcc{} relative to \sic{} in terms of certain properties that (simple) cycles in this graph must satisfy. 

We recall the concept of movers and the definition of commutativity dependency graphs. 
Given a program $\aprog$ and a trace $\atrace = \atr_1\cdot \ldots\cdot \atr_n \in \tracesconf_{\serc{}}(\aprog)$ of  $\aprog$ under serializability, we say that $\atr_i \in \atrace$ \emph{moves right (resp., left)} in $\atrace$ if $\atr_1\cdot \ldots\cdot \atr_{i-1}\cdot \atr_{i+1}\cdot \atr_i\cdot \atr_{i+2}\cdot \ldots\cdot \atr_n$ (resp., $\atr_1\cdot\ldots\cdot\atr_{i-2}\cdot\atr_i\cdot\atr_{i-1}\cdot\atr_{i+1}\cdot\ldots\cdot\atr_n$) is also a valid execution of $\aprog$, $\atr_i$ and $\atr_{i+1}$ (resp., $\atr_{i-1}$) are executed by distinct processes, and both traces reach the same end state. A transaction $\atr \in \trsaprog{\aprog}$ is not a right (resp., left) mover iff there exists a trace $\atrace \in \tracesconf_{\serc{}}(\aprog)$ such that $\atr \in \atrace$ and $\atr$ doesn't move right (resp., left) in $\atrace$. Thus, when a transaction $\atr$ is \emph{not} a right mover then there must exist another transaction $\atr' \in \atrace$ which caused $\atr$ to not be permutable to the right (while preserving the end state). Since $\atr$ and $\atr'$ do not commute, then this must be because of either a write-read, write-write, or a read-write dependency relation between the two transactions. We say that $\atr$ is not a right mover because of $\atr'$ and a dependency relation that is either write-read, write-write, or read-write. Notice that when $\atr$ is not a right mover because of $\atr'$ then $\atr'$ is not a left mover because of $\atr$. 

We define $\mrfo$ as a binary relation between transactions such that $(\atr,\atr') \in \mrfo$ when $\atr$ is \emph{not} a right mover because of $\atr'$ and a write-read dependency ($t'$ reads some value written by $t$). We define the relations $\msto$ and $\mcfo$ corresponding to write-write and read-write dependencies in a similar way. 
%We denote $(\atr,\atr') \in \msto$ when $\atr$ is not a right mover because $\atr'$ and write-write dependency. Similarly, we denote $(\atr,\atr') \in \mcfo$ when $\atr$ is not a right mover because $\atr'$ and a read-write dependency between the two transactions. 
We call $\mrfo$, $\msto$, and $\mcfo$, \emph{non-mover} relations.

The \emph{commutativity dependency graph} of a program $\aprog$ is a graph where vertices represent transactions in $\aprog$. Two vertices are linked by a program order edge if the two transactions are executed by the same process. The other edges in this graph represent the ``non-mover'' relations $\mrfo$, $\msto$, and $\mcfo$. 
Two vertices that represent the two components $\atrwr{\atr}$ and $\atrrd{\atr}$ of the same transaction $\atr$ (already linked by $\po$ edge) are also linked by an undirected edge labeled by $\sametro$ (same-transaction relation). %To model an arbitrary instantiation of the same transaction $\atr$, we take a symbolic integer $m > 0$ and add a sequence of $m$ vertices representing $m$ instantiations of $\atr$ in the commutativity graph. We relate every two successive vertices by a program order edge. Also, similar to before for every vertex in the sequence, we draw the incoming and outgoing edges using the ``non-mover'' relations $\mrfo$, $\msto$, and $\mcfo$. 

\begin{wrapfigure}{r}{0.53\textwidth}
    \vspace{-25pt}
    \lstset{basicstyle=\ttfamily\scriptsize}
    \centering
    \resizebox{!}{2.2cm}{
    \begin{tikzpicture}[->,>=stealth',shorten >=1pt,auto,node distance=4cm,
      semithick, transform shape,every text node part/.style={align=left}]
     \node[shape=rectangle ,draw=none,font=\large, label={left:$\atrwr{\atr 1}$}] at (-3.2,0)  (m)  {$[x = 1]$};
     \node[shape=rectangle ,draw=none,font=\large, label={left:$\atrwr{\atr 2}$}] at (-3.2,-2) (m1)  {$[y = 1]$};
     \node[shape=rectangle ,draw=none,font=\large, label={right:$\atrrd{\atr 3}$}] at (0,0) (p){$[r1 = y]$};
     \node[shape=rectangle ,draw=none,font=\large, label={right:$\atrrd{\atr 4}$}] at (0,-2) (p1){$[r2 = x]$};
     \begin{scope}[ every edge/.style={draw=black,very thick}]
     \path[->] (m) edge[left] node {$\po$} (m1);
     \path[->] (p1) edge[bend left, above] node[xshift=2.9mm,yshift=-1.3mm] {$\mcfo$} (m);
     \path[->] (m) edge[bend left, below] node {$\mrfo$} (p1);
     \path[->] (m1) edge[bend left, below] node[xshift=2.5mm,yshift=0.5mm] {$\mrfo$} (p);
     \path[->] (p) edge[bend left, above] node[xshift=-1.2mm,yshift=0.3mm] {$\mcfo$} (m1);
     \path[->] (p) edge[right] node {$\po$} (p1);
    \end{scope}
    \end{tikzpicture}}
    \vspace{-3pt}
    \caption{The commutativity dependency graph of the $\mathsf{MP}_{\pcinstr}$ program.}
    \label{fig:litmus4CDG}
    \vspace{-15pt}
\end{wrapfigure}
Our results about the robustness of a program $\aprog$ are stated over a slight variation of the commutativity dependency graph of $\aprog_\pcinstr$ (where a transaction is either read-only or write-only). This graph contains additional undirected edges that link every pair of transactions $\atrrd{\atr}$ and $\atrwr{\atr}$ of $\aprog_\pcinstr$ that were originally components of the same transaction $\atr$ in $\aprog$.
Given such a commutativity dependency graph, the robustness of $\aprog$ is implied by the absence of cycles of specific shapes. These cycles can be seen as an abstraction of potential robustness violations for the respective semantics (see Theorem~\ref{them:MovRobCcPc} and Theorem~\ref{them:MovRobPcSi}).
% we ask whether this graph has cycles of some specific shapes based the characterization of robustness violation traces for the respective semantics. 
%The corresponding program is robust when the graph doesn't have these types of cycles 
Figure \ref{fig:litmus4CDG} pictures the commutativity dependency graph for the $\mathsf{MP}$ program. Since every transaction in $\mathsf{MP}$ is singleton, the two programs $\mathsf{MP}$ and $\mathsf{MP}_{\pcinstr}$ coincide.  

Using the characterization of robustness violations against \ccc{}  relative to \serc{} from~\cite{DBLP:conf/concur/BeillahiBE19} and the reduction in Theorem~\ref{them:RobCcPc}, we obtain the following result concerning the robustness against \ccc{} relative to \pcc{}.

\begin{theorem} \label{them:MovRobCcPc}
    Given a program $\aprog$, if the commutativity dependency graph of the program $\aprog_\pcinstr$ does not contain a simple cycle formed by $\atr_1$ $\cdots$ $\atr_i$ $\cdots$ $\atr_n$ such that:
    \begin{itemize}[topsep=3pt]
        \item $(\atr_n,\atr_1) \in \mcfo$;
        \item $(\atr_j, \atr_{j+1}) \in (\po \cup \rfo)^{*}$, for $j \in [1,i-1]$;
        \item $(\atr_i,\atr_{i+1}) \in (\mcfo \cup \msto)$;
        \item $(\atr_j,\atr_{j+1}) \in (\mcfo \cup \msto \cup \mrfo \cup \po)$, for $j \in [i+1,n-1]$.
    \end{itemize}
    then $\aprog$ is robust against \ccc{} relative to \pcc{}.
\end{theorem}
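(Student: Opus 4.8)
The plan is to prove the contrapositive, reusing the reduction of Theorem~\ref{them:RobCcPc} together with the known characterization of \ccc{} versus \serc{} robustness violations. Suppose $\aprog$ is \emph{not} robust against \ccc{} relative to \pcc{}. By Theorem~\ref{them:RobCcPc}, the transformed program $\aprog_\pcinstr$ is not robust against \ccc{} relative to \serc{}, so there is a trace $\atrace_\pcinstr \in \tracesconf_{\ccc{}}(\aprog_\pcinstr)\setminus\tracesconf_{\serc{}}(\aprog_\pcinstr)$. The goal is to extract from such a trace a simple cycle of the stated shape in the commutativity dependency graph of $\aprog_\pcinstr$.

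First I would analyze the shape of the happens-before cycle in $\atrace_\pcinstr$. Since $\atrace_\pcinstr$ is \ccc{} but not \serc{}, Lemma~\ref{lem:cycles} gives that $\arbo_{0}^{+}$ and $\viso_{0}^{+};\cfo$ are acyclic while $\hbo$ is cyclic. Acyclicity of $\arbo_{0}^{+}=(\po\cup\rfo\cup\sto)^{+}$ forces every happens-before cycle to use at least one $\cfo$ edge, and acyclicity of $\viso_{0}^{+};\cfo$ forbids a cycle consisting of a single $\cfo$ edge closed by a pure $\viso_{0}=(\po\cup\rfo)$ path. Reading a minimal cycle starting right after one $\cfo$ edge, which I call $\atr_n\to\atr_1$, these two constraints entail that the cycle decomposes as: a maximal $(\po\cup\rfo)^{*}$ segment $\atr_1\cdots\atr_i$, followed by an edge $\atr_i\to\atr_{i+1}$ that is necessarily $\cfo$ or $\sto$ (by maximality it is not $\viso_{0}$, and if the causal prefix closed the cycle directly we would obtain a cycle in $\viso_{0}^{+};\cfo$), followed by an arbitrary happens-before segment $\atr_{i+1}\cdots\atr_n$.

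Next I would translate each dependency edge of this critical cycle into the corresponding non-mover relation, which is exactly the reduction-theoretic argument imported from~\cite{DBLP:conf/concur/BeillahiBE19}: the transactions lying off the cycle are movers and can be commuted away, so that for each consecutive pair on the cycle one obtains a serializable execution in which the two transactions are adjacent and fail to commute, yielding $\mrfo$, $\msto$, or $\mcfo$ according to whether the underlying dependency is $\rfo$, $\sto$, or $\cfo$. The closing $\cfo$ edge $\atr_n\to\atr_1$ becomes $(\atr_n,\atr_1)\in\mcfo$; the marked edge $\atr_i\to\atr_{i+1}$ becomes $(\atr_i,\atr_{i+1})\in\mcfo\cup\msto$; the causal prefix becomes $(\atr_j,\atr_{j+1})\in(\po\cup\rfo)^{*}$ (where each $\rfo$ edge is realized as the write--read non-mover relation $\mrfo$); and the remaining segment becomes $(\atr_j,\atr_{j+1})\in(\mcfo\cup\msto\cup\mrfo\cup\po)$. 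Choosing a minimal violation whose cycle is simple, the resulting cycle in the commutativity dependency graph of $\aprog_\pcinstr$ is simple and satisfies precisely the four listed properties, contradicting their absence.

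I expect the main obstacle to be this last translation step, i.e. justifying that the trace-level happens-before cycle guaranteed by Lemma~\ref{lem:cycles} really projects onto mover relations in the graph. The delicate points are (i) ensuring that the off-cycle transactions can be reduced out while preserving the non-commutativity witnessing each cycle edge, so that every dependency is realized as a genuine non-mover in \emph{some} serializable run, and (ii) maintaining simplicity of the cycle and the correct placement of the two marked edges $\atr_n\to\atr_1$ and $\atr_i\to\atr_{i+1}$ after projection. Both are supplied by the characterization of~\cite{DBLP:conf/concur/BeillahiBE19} applied to $\aprog_\pcinstr$, whose read-only/write-only transaction structure does not affect the generic shape of the cycle; the only bookkeeping specific to $\aprog_\pcinstr$ is that the vertices $\atrrd{\atr}$ and $\atrwr{\atr}$ of a split transaction are linked by $\po$, which is already subsumed by the $\po$ edges permitted in the segments, and that the $\sametro$ edges play no role in this case. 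Note also that only this direction is needed: the converse (a cycle implying non-robustness) is not claimed, since the graph criterion is a sound over-approximation of actual violations.
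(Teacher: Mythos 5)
Your proposal is correct and follows essentially the same route as the paper's proof: argue the contrapositive, apply Theorem~\ref{them:RobCcPc} to get a \ccc{} vs.\ \serc{} violation of $\aprog_\pcinstr$, invoke the characterization of such violations from~\cite{DBLP:conf/concur/BeillahiBE19} to obtain a violation trace whose cycle decomposes into a $(\po\cup\rfo)^{*}$ prefix, a $\sto\cup\cfo$ edge, an arbitrary happens-before segment, and a closing $\cfo$ edge, and then read off each dependency as the corresponding non-mover relation from a serializable run in which the two transactions are adjacent. The only cosmetic difference is that you re-derive the cycle shape from Lemma~\ref{lem:cycles}, whereas the paper takes it directly from the cited characterization of minimal violation traces.
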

Next we give the characterization of commutativity dependency graphs required for proving robustness against \pcc{} relative to \sic{}.

\begin{theorem} \label{them:MovRobPcSi}
    Given a program $\aprog$, if the commutativity dependency graph of the program $\aprog_\pcinstr$ does not contain a simple cycle formed by $\atr_1$ $\cdots$ $\atr_n$ such that:
    \begin{itemize}[topsep=3pt]
        \item $(\atr_n,\atr_1) \in \msto$, $(\atr_1,\atr_2) \in \sametro$, and  $(\atr_2,\atr_3) \in \mcfo$;
        \item $(\atr_j,\atr_{j+1}) \in (\mcfo \cup \msto \cup \mrfo \cup \po \cup \sametro)^{*}$, for $j \in [3,n-1]$;
        \item $\forall\ j \in [2,n-2].$
        \begin{itemize}
            \item $\mbox{if }(\atr_j,\atr_{j+1}) \in \mcfo\mbox{ then }(\atr_{j+1},\atr_{j+2}) \in (\mrfo \cup \po \cup \msto)$;
            \item $\mbox{if }(\atr_{j+1},\atr_{j+2}) \in \mcfo\mbox{ then }(\atr_{j},\atr_{j+1}) \in (\mrfo \cup \po)$.
        \end{itemize}
        \item $\forall\ j \in [3,n-3]. \mbox{ if }(\atr_{j+1},\atr_{j+2}) \in \sametro\mbox{ and }(\atr_{j+2},\atr_{j+3}) \in \mcfo \mbox{ then }(\atr_{j},\atr_{j+1}) \in \msto$.
    \end{itemize}
    then $\aprog$ is robust against \pcc{} relative to \sic{}.
\end{theorem}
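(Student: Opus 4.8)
I would argue by contraposition, following closely the pattern of Theorem~\ref{them:MovRobCcPc}. Assume $\aprog$ is \emph{not} robust against \pcc{} relative to \sic{}. By Theorem~\ref{corol:pcsi} there is a \serc{} trace $\atrace_\pcinstr$ of $\aprog_\pcinstr$ whose merged trace $\atrace$ contains a happens-before cycle that has a $\sto$ dependency followed by a $\cfo$ dependency and has no two successive $\cfo$ dependencies. I would fix a shortest such witnessing cycle and write it, as in Figure~\ref{fig:violationInstr}, in the form $\atr_{\instr} \xrightarrow{\cfo} \atr_0 \xrightarrow{\hbo} \cdots \xrightarrow{\hbo} \atr \xrightarrow{\sto} \atr_{\instr}$, where $\atr_{\instr}$ is the transaction sitting between the distinguished $\sto$ and $\cfo$ edges and the path from $\atr_0$ to $\atr$ is built from $\po,\rfo,\sto,\cfo$ dependencies with no two consecutive $\cfo$. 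The goal is to exhibit a simple cycle of exactly the forbidden shape in the commutativity dependency graph of $\aprog_\pcinstr$, contradicting the hypothesis.

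The main device is the lift of this merged cycle to the split program $\aprog_\pcinstr$. Each merged transaction splits into its read and write parts, and each merged dependency is lifted to a dependency between the appropriate components: $\rfo$ leaves a write part and enters a read part, $\sto$ goes write part to write part, and $\cfo$ (being $\rfo^{-1};\sto$) leaves a read part and enters a write part. Whenever two consecutive merged dependencies force a passage from the read part to the write part of the same transaction, I would route it through the $\po$ edge $(\atrrd{\atr},\atrwr{\atr})$, reserving the undirected $\sametro$ edge solely for the distinguished vertex $\atr_{\instr}$. Since $\atrace_\pcinstr$ is a valid \serc{} execution of $\aprog_\pcinstr$, each component-level dependency produced this way is witnessed by a sub-execution of $\atrace_\pcinstr$ in which the two endpoints are adjacent and fail to commute; hence each lifted dependency is an actual non-mover edge ($\mrfo$, $\msto$, or $\mcfo$) of the commutativity dependency graph, exactly as in the proof of Theorem~\ref{them:MovRobCcPc}.

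It then remains to check that the lifted cycle matches the prohibited shape. The pattern ``$\sto$ into $\atr_{\instr}$, then $\cfo$ out of $\atr_{\instr}$'' lifts to the triple $(\atr_n,\atr_1)\in\msto$, $(\atr_1,\atr_2)\in\sametro$, $(\atr_2,\atr_3)\in\mcfo$, with $\atr_1=\atrwr{\atr_{\instr}}$ and $\atr_2=\atrrd{\atr_{\instr}}$: the $\sto$ lands on the write part, the $\cfo$ leaves the read part, and the two parts are joined by $\sametro$. The crux is translating ``no two successive $\cfo$'' into the neighbour conditions on $\mcfo$. A lifted $\mcfo$ edge always \emph{enters} a write part and \emph{leaves} a read part, so the edge following an $\mcfo$ can only be an edge out of a write part, i.e.\ $\mrfo$, $\po$, or $\msto$, while the edge preceding an $\mcfo$ can only be an edge into a read part, i.e.\ $\mrfo$ or $\po$. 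A second $\cfo$ adjacent to the first would require travelling from a write part back to a read part of the same transaction, which $\po$ does not permit in this orientation; this is precisely what the two implication clauses forbid. Finally, since interior read-to-write passages use $\po$ and $\sametro$ occurs only at $\atr_{\instr}$, the premise of the last clause (an interior $\sametro$ followed by $\mcfo$) never triggers, so that clause holds vacuously; minimality of the chosen cycle then yields simplicity.

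I expect the main obstacle to be this last verification step: carefully matching the ``no two successive $\cfo$'' condition of the merged cycle to the interior constraints on $\mcfo$ neighbours and on the $\sametro$/$\mcfo$ pattern, and confirming that the inserted $\po$ links create neither a chord that breaks simplicity nor a second $\sto$-then-$\cfo$ occurrence. Establishing that each lifted edge is genuinely a non-mover edge, by extracting for every merged dependency a \serc{} sub-execution of $\aprog_\pcinstr$ witnessing non-commutativity, is routine once the case analysis of Theorem~\ref{them:MovRobCcPc} is in hand; the real work is the bookkeeping of the orientation of each edge through the split and the argument that the distinguished vertex $\atr_{\instr}$ is the unique one traversed in the $\sto$-then-$\cfo$ orientation.
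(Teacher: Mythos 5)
Your overall strategy---contraposition, extraction of a \serc{} witness trace of $\aprog_\pcinstr$ whose merged happens-before cycle contains a $\sto$ followed by a $\cfo$ and no two successive $\cfo$, and an edge-by-edge lift of that cycle to non-mover edges of the commutativity dependency graph---is essentially the paper's. (The paper starts from the instrumentation error trace of Theorem~\ref{them:RobPcSiInstr} rather than from Theorem~\ref{corol:pcsi}, but both deliver the same witness shape, and your identification $\atr_1=\atrwr{\atr_{\instr}}$, $\atr_2=\atrrd{\atr_{\instr}}$ with the $\msto$/$\sametro$/$\mcfo$ prefix is exactly the paper's.)

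There is, however, a genuine gap in your last step: you claim that every interior read-to-write passage can be routed through $\po$, that $\sametro$ occurs only at the distinguished vertex $\atr_{\instr}$, and hence that the fourth clause of the theorem holds vacuously. This is false. The characterization you start from forbids only two \emph{consecutive} $\cfo$ dependencies; it does not forbid additional \emph{interior} occurrences of a $\sto$ immediately followed by a $\cfo$, and minimality of the chosen cycle does not eliminate them (the short-circuiting in Lemma~\ref{lem:pccycles} uses $\viso_0;\cfo\subseteq\arbo$ and does not apply to $\sto;\cfo$). At such an interior occurrence the $\sto$ edge enters the \emph{write} part $\atrwr{\atr'}$ while the $\cfo$ edge leaves the \emph{read} part $\atrrd{\atr'}$, so the lifted cycle must traverse from $\atrwr{\atr'}$ to $\atrrd{\atr'}$---against the orientation of $\po_\pcinstr$---and the only available edge is the undirected $\sametro$ edge. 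This is precisely why the paper's proof notes that interior pairs not connected by $\po$, $\rfo$, $\sto$, or $\cfo$ must be the read and write components of one original transaction, linked by $\sametro$, and why the fourth clause (an interior $\sametro$ followed by $\mcfo$ must be preceded by $\msto$) appears in the statement at all. As written, your lift produces no cycle of the forbidden shape for witness cycles containing an interior $\sto;\cfo$ pattern, so the contrapositive is not established in that case; you need to admit interior $\sametro$ edges and check that they satisfy, rather than vacuously avoid, the fourth clause.
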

In Figure \ref{fig:litmus4CDG}, we have three simple cycles in the graph: 
\begin{itemize}[topsep=3pt]
    \item $(\atrwr{\atr 1}, \atrrd{\atr 4}) \in  \mrfo$ and $(\atrrd{\atr 4}, \atrwr{\atr 1}) \in  \mcfo$,
    \item $(\atrwr{\atr 2}, \atrrd{\atr 3}) \in  \mrfo$ and $(\atrrd{\atr 3}, \atrwr{\atr 2}) \in  \mcfo$,
    \item $(\atrwr{\atr 1}, \atrwr{\atr 2}) \in  \po$, $(\atrwr{\atr 2}, \atrrd{\atr 3}) \in  \mrfo$, $(\atrrd{\atr 3}, \atrrd{\atr 4}) \in  \po$, and $(\atrrd{\atr 4}, \atrwr{\atr 1}) \in  \mcfo$.
\end{itemize}
Notice that none of the cycles satisfies the properties in Theorems \ref{them:MovRobCcPc} and \ref{them:MovRobPcSi}. 
Therefore, $\mathsf{MP}$ is robust against \ccc{} relative to \pcc{} and against \pcc{} relative to \sic{}.

\begin{remark} \label{rem:comgraph}
For programs that contain an unbounded number of processes, an unbounded number of instantiations of a fixed number of process ``templates'', or unbounded loops with bodies that contain entire transactions, a sound robustness check consists in applying Theorem~\ref{them:MovRobCcPc} and Theorem~\ref{them:MovRobPcSi} to (bounded) programs that contain two copies of each process template, and where each loop is unfolded exactly two times. This holds because the mover relations are ``static'', they do not depend on the context in which the transactions execute, and each cycle requiring more than two process instances or more than two loop iterations can be short-circuited to a cycle that exists also in the bounded program. Every outgoing edge from a third instance/iteration can also be taken from the second instance/iteration. Two copies/iterations are necessary in order to discover cycles between instances of the same transaction (the cycles in Theorem~\ref{them:MovRobCcPc} and Theorem~\ref{them:MovRobPcSi} are simple and cannot contain the same transaction twice). These results extend easily to SQL queries as well because the notion of mover is independent of particular classes of programs or instructions.
%Note that the notion of mover used in commutativity dependency graphs is quite generic and independent of particular classes of programs or instructions. For instance, a mover check between two transactions that access sets of locations defined using symbolic expressions corresponds to checking whether the conjunction of the two symbolic expressions is satisfiable. 
\end{remark}

    %!TEX root = draft.tex
\vspace{-15pt}
\section{Experimental Evaluation}
\label{sec:experiments}

We evaluated our approach for checking robustness on 7 applications extracted from the literature on databases and distributed systems, and an application Betting designed by ourselves. Two applications were extracted from the OLTP-Bench benchmark~\cite{DBLP:journals/pvldb/DifallahPCC13}: a vote recording application (Vote) and a consumer review application (Epinions). Three applications were obtained from Github projects (used also in~\cite{DBLP:conf/cav/BeillahiBE19,DBLP:conf/pldi/BrutschyD0V18}): a distributed lock application for the Cassandra database (CassandraLock~\cite{lock}), an application for recording trade activities (SimpleCurrencyExchange~\cite{trade}), and a micro social media application (Twitter~\cite{twitter}). The last two applications are a movie ticketing application (FusionTicket)~\cite{DBLP:conf/cloud/HoltBZPOC16}, and a user subscription application inspired by the Twitter application (Subscription).  Each application consists of a set of SQL transactions that can be called an arbitrary number of times from an arbitrary number of processes. For instance, Subscription provides an AddUser transaction for adding a new user with a given username and password, and a RemoveUser transaction for removing an existing user. (The examples in Figure~\ref{fig:exmapleprograms} are particular variations of FusionTicket, Twitter, and Betting.) 
We considered five variations of the robustness problem: the three robustness problems we studied in this paper along with robustness against \sic{} relative to \serc{} and against \ccc{} relative to \serc{}. The artifacts are available in a GitHub repository \cite{githubExp}.

\begin{table}[!ht]
    \vspace{-10mm}
    \caption{Results of the experiments.
    The columns titled \textsf{X}-\textsf{Y} stand for the result of applications robustness against $\textsf{X}$ relative to $\textsf{Y}$.}
    \centering
    \begin{tabular}{|c|c|c|c|c|c|c|}
    \hline
    Application & Transactions & \multicolumn{5}{c|}{Robustness}\\
     & & \ccc{}-\pcc{} & \pcc{}-\sic{} & \ccc{}-\sic{}  & \sic{}-\serc{} & \ccc{}-\serc{}\\
    \hline
    Betting & 2 & yes & yes & yes & yes & yes\\
    \hline
    CassandraLock & 3 & yes & yes & yes & yes & yes\\
    \hline
    Epinions &  8 & no & yes & no & yes & no\\ 
    \hline
    FusionTicket & 3 & no & no & no & yes & no\\
    \hline
    SimpleCurrencyExchange & 4 & yes & yes & yes & yes & yes\\ 
    \hline
    Subscription & 2 & yes & no & no & yes & no \\
    \hline
    Twitter & 3 & no & no & no & yes & no\\
    \hline
    Vote & 1 & yes & yes & yes & no & no \\
    \hline
    \hline
    \end{tabular}
    \label{tab:exper}
    \vspace{-5mm}
\end{table}

In the first part of the experiments, we check for robustness violations in bounded-size executions of a given application.
% we can catch whether the application is robust against a given semantics relative to some other stronger semantics. 
For each application, we have constructed a client program with a fixed number of processes (2) and a fixed number of transactions of the corresponding application (at most 2 transactions per process). 
For each program and pair of consistency models, we check for robustness violations using the reductions to reachability under \serc{} presented in Section~\ref{sec:CCPCrobustness} and Section~\ref{sec:robustness} in the case of pairs of weak consistency models, and the reductions in~\cite{DBLP:conf/cav/BeillahiBE19,DBLP:conf/concur/BeillahiBE19} when checking for robustness relative to \serc{}.
We check for reachability (assertion violations) using the Boogie program verifier \cite{DBLP:conf/fmco/BarnettCDJL05}. We model tables as unbounded maps in Boogie and SQL queries as first-order formulas over these maps (that may contain existential or universal quantifiers). To model the uniqueness of primary keys we use Boogie linear types. 

Table~\ref{tab:exper} reports the results of this experiment (cells filled with ``no'')\footnote{The Twitter client in Table~\ref{tab:exper}, which is not \pcc{} vs \ccc{} robust, is different from the one described in Section~\ref{sec:overview}. This client program consists of two processes, each executing FollowUser and AddTweet.}. Five applications are not robust against at least one of the semantics relative to some other stronger semantics. 
The runtimes (wall-clock times) for the robustness checks are all under one second, and the memory consumption is around $50$ Mega\-bytes. Concerning scalability, the reductions to reachability presented in Section~\ref{sec:CCPCrobustness} and Section~\ref{sec:robustness} show that checking robustness is as hard as checking reachability (the size of the instrumented program is only linear in the size of the original program). Therefore, checking robustness will also suffer from the classic state explosion problem when increasing the number of processes. On the other hand, increasing the number of transactions in a process does not seem to introduce a large overhead. Increasing the number of transactions per process in the clients of Epinions, FusionTicket, and SimpleCurrencyExchange from 2 to 5 introduces a running time overhead of at most 25\%.

%\textcolor{red}{Furthermore, Figure~\ref{fig:experiments} shows the evolution of the runtimes (average over 10 runs) for client programs of Epinions, FusionTicket, and SimpleCurrencyExchange applications when increasing the number of transactions per process from 2 to 5.}
%The reduction to reachability increases the size of the program only linearly. Therefore, checking robustness is as hard as checking reachability. The latter is known to suffer from the state explosion problem when increasing the number of processes, which is beyond the scope of the paper. One can use all the existing technology for this problem, e.g., under/over approximations.

All the robustness violations we report correspond to violations of the intended specifications. For instance: (1) the robustness violation of Epinions against \ccc{} relative to \pcc{} allows two users to update their ratings for a given product and then when each user queries the overall rating of this product they do not observe the latest rating that was given by the other user, (2) the robustness violation of Subscription against \pcc{} relative to \sic{} allows two users to register new accounts with the same identifier, and (3) the robustness violation of Vote against \sic{} relative to \serc{} allows the same user to vote twice. The specification violation in Twitter was reported in \cite{DBLP:conf/pldi/BrutschyD0V18}. However, it was reported as violation of a different robustness property (\ccc{} relative to \serc{}) while our work shows that the violation persists when replacing a weak consistency model (e.g., \sic{}) with a weaker one (e.g. \ccc{}). This implies that this specification violation is not present under \sic{} (since it appears in the difference between \ccc{} and \sic{} behaviors), which cannot be deduced from previous work.
%Also, notice that in this experiment, we were only trying to prove robustness next we move to proving robustness.

In the second part of the experiments, we used the technique described in Section \ref{sec:commutativitygraph}, based on commutativity dependency graphs, to prove robustness. For each application (set of transactions) we considered a program that for each ordered pair of (possibly identical) transactions in the application, contains two processes executing that pair of transactions. Following Remark~\ref{rem:comgraph}, the robustness of such a program implies the robustness of a \emph{most general client} of the application that executes each transaction an arbitrary number of times and from an arbitrary number of processes. 
%contains two processes executing every , or the same transaction twice.
%try to prove a given application robustness using based on the commutativity dependency graph of the application. 
%In this part of experiments, we use the results from the first by focusing on cases that we were unable to prove non robustness for. 
We focused on the cases where we could not find robustness violations in the first part.
To build the ``non-mover'' relations $\mrfo$, $\msto$, and $\mcfo$ for the commutativity dependency graph, we use the left/right mover check provided by the CIVL verifier~\cite{DBLP:conf/cav/HawblitzelPQT15}. 
The results are reported in Table~\ref{tab:exper}, the cells filled with ``yes''. 
%After building this graph, we look whether it contains a set of transactions that satisfy the criteria for proving robustness for the corresponding semantics. 
We showed that the three applications Betting, CassandraLock and SimpleCurrencyExchange are robust against any semantics relative 
to some other stronger semantics. 
%Epinions, FusionTicket, Subscription and Twitter are robust against \sic{} relative to \serc{}. Furthermore, Epinions is robust against \pcc{} relative to \sic{} while Subscription is robust against \ccc{} relative to \pcc{}. Finally, the Vote application is robust both against \ccc{} relative to \pcc{} and against \pcc{} relative to \sic{}. 
%Thus, Vote is robust against \ccc{} relative to \sic{} and is not robust against \ccc{} relative to \serc{}. 
As mentioned earlier, all these robustness results are established for arbitrarily large executions and clients with an arbitrary number of processes. For instance, the robustness of SimpleCurrencyExchange ensures that when the exchange market owner observes a trade registered by a user, they observe also all the other trades that were done by this user in the past. 

In conclusion, our experiments show that the robustness checking techniques we present are effective in proving or disproving robustness of concrete applications. Moreover, it shows that the robustness property for different combinations of consistency models is a relevant design principle, that can help in choosing the right consistency model for realistic applications, i.e., navigating the tradeoff between consistency and performance (in general, weakening the consistency leads to better performance).

%The experiments show that our results can be used for finding violations and proving robustness, and that they apply to a large set of interesting examples. Note that the reduction to SC and the proof method based on commutativity dependency graphs are valid for programs with SQL (select/update) queries. 

    %!TEX root = draft.tex
\vspace{-5pt}
\section{Related Work}
\label{sec:related}
\vspace{-5pt}
 
The consistency models in this paper were studied in several recent works~\cite{DBLP:conf/popl/BurckhardtGYZ14,DBLP:journals/ftpl/Burckhardt14,DBLP:conf/concur/Cerone0G15,DBLP:conf/ppopp/PerrinMJ16,DBLP:conf/popl/BouajjaniEGH17,DBLP:conf/vmcai/RaadLV19,DBLP:journals/pacmpl/BiswasE19}. Most of them focused on their operational and axiomatic formalizations. % of programs semantics under the weak consistency models both operationally and declaratively. 
The formal definitions we use in this paper are based on those given in~\cite{DBLP:conf/concur/Cerone0G15,DBLP:conf/popl/BouajjaniEGH17}.  Biswas and Enea~\cite{DBLP:journals/pacmpl/BiswasE19} shows that checking whether an execution is \ccc{} is polynomial time while checking whether it is \pcc{} or \sic{} is NP-complete.

%In this paper we tackle trace-based robustness problem. In the literature, the decidability and complexity of trace-based 
The robustness problem we study in this paper has been investigated in the context of weak memory models, but only relative to sequential consistency, against Release/Aquire (RA), TSO and Power~\cite{DBLP:conf/pldi/LahavM19,DBLP:conf/icalp/BouajjaniMM11, DBLP:conf/esop/BouajjaniDM13,DBLP:conf/icalp/DerevenetcM14}. Checking robustness against \ccc{}  and \sic{} relative to \serc{} has been investigated in~\cite{DBLP:conf/cav/BeillahiBE19,DBLP:conf/concur/BeillahiBE19}. 
%All of these work study the robustness between a weak consistency model (e.g., RA, TSO, or \ccc{}) and the strong semantics model, i.e., sequential consistency and serialisability. 
%On the other hand, 
In this work, we study the robustness problem between two weak consistency models, which poses different non-trivial challenges.
In particular, previous work proposed reductions to reachability under sequential consistency (or \serc{}) that relied on a concept of minimal robustness violations (w.r.t. an operational semantics), which does not apply in our case. 
% an important technique that was commonly used in previous work which consists of using borderline violations 
%cannot be applied here. Since, these violations defined such that removing the last action in the violation results in serialisable execution which is not valid in our case.  
%However, we have reduced the robustness against \ccc{} relative to \pcc{} to the robustness against \ccc{} relative to \serc{} which allowed us to use the results that were proved in \cite{DBLP:conf/concur/0002G16,DBLP:conf/concur/BeillahiBE19}. 
%Note that our reduction from \pcc{} to \serc{} is similar in spirit to the one shown in \cite{DBLP:conf/vmcai/RaadLV19}, however, 
%in our case execution traces include the store order dependency relation in order to construct the happens before relation.
The relationship between \pcc{} and \serc{} is similar in spirit to the one given by Biswas and Enea~\cite{DBLP:journals/pacmpl/BiswasE19} in the context of checking whether an execution is \pcc{}. However, that relationship was proven in the context of a ``weaker'' notion of trace (containing only program order and read-from), and it does not extend to our notion of trace. For instance, that result does not imply preserving $\sto$ dependencies which is crucial in our case.

Some works describe various over- or under-approximate analyses for checking robustness relative to \serc{}. The works in~\cite{DBLP:conf/concur/0002G16,DBLP:conf/popl/BrutschyD0V17,DBLP:conf/pldi/BrutschyD0V18,DBLP:journals/jacm/CeroneG18,DBLP:conf/concur/NagarJ18} propose static analysis techniques based on computing an abstraction of the set of computations, which is used for proving robustness.
In particular, \cite{DBLP:conf/pldi/BrutschyD0V18,DBLP:conf/concur/NagarJ18} encode program executions under the weak consistency model 
using FOL formulas to describe the dependency relations between actions in the executions.
These approaches may return false alarms due to the abstractions they consider in their encoding.
Note that in this paper, we prove a strengthening of the results of \cite{DBLP:conf/concur/0002G16} with regard to the shape of happens before cycles allowed under \pcc{}.

An alternative to {\em trace-based} robustness, is {\em state-based} robustness which requires that a program is robust if the sets of reachable states under two semantics coincide. 
While state-robustness is the necessary and sufficient concept for preserving state-invariants, its verification, which amounts in computing the set of reachable states under the weak semantics models is in general a hard problem.
The decidability and the complexity of this problem has been investigated in the context of relaxed memory models such as TSO and Power, and it has been shown that it is either decidable but highly complex (non-primitive recursive), or undecidable \cite{DBLP:conf/popl/AtigBBM10,DBLP:conf/esop/AtigBBM12}. 
Automatic procedures for approximate reachability/invariant checking have been proposed using either abstractions or bounded analyses, e.g., \cite{DBLP:conf/cav/AtigBP11,DBLP:conf/cav/AlglaveKT13,DBLP:journals/cl/DanMVY17,DBLP:conf/tacas/AbdullaABN17}. Proof methods have also been developed for verifying invariants in the context of weakly consistent models such as \cite{DBLP:conf/icalp/LahavV15,DBLP:conf/popl/GotsmanYFNS16,DBLP:conf/eurosys/NajafzadehGYFS16,DBLP:conf/popl/AlglaveC17}. These methods, however, do not provide decision procedures.

    \bibliographystyle{splncs04}
    \bibliography{draft}
    %\bibliography{dblp,misc}

%%%%% To display Open Access text and logo, Please add below text and copy the cc_by_4-0.eps in the manuscript package %%%

\vfill

{\small\medskip\noindent{\bf Open Access} This chapter is licensed under the terms of the Creative Commons\break Attribution 4.0 International License (\url{http://creativecommons.org/licenses/by/4.0/}), which permits use, sharing, adaptation, distribution and reproduction in any medium or format, as long as you give appropriate credit to the original author(s) and the source, provide a link to the Creative Commons license and indicate if changes were made.}

{\small \spaceskip .28em plus .1em minus .1em The images or other third party material in this chapter are included in the chapter's Creative Commons license, unless indicated otherwise in a credit line to the material.~If material is not included in the chapter's Creative Commons license and your intended\break use is not permitted by statutory regulation or exceeds the permitted use, you will need to obtain permission directly from the copyright holder.}

\medskip\noindent\includegraphics{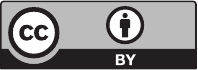}

  %% Appendix
\appendix
\newpage
%!TEX root = draft.tex
\section{Proofs for Section \ref{sec:CCPCrobustness}}
\label{sec:CCPCrobustnessProofs}

\begin{proof}[Proof of Lemma \ref{lem:Transform}]
%\begin{enumerate}
We start with the case $\textsf{X} = \ccc$.  We first show that $\atrace_\pcinstr$ satisfies $\axpoco$ and $\axcoarb$. For $\axpoco$, let $\atr_{1}' \in \{\atrrd{\atr_{1}},\atrwr{\atr_{1}}\}$ and $\atr_{2}' \in \{\atrrd{\atr_{2}},\atrwr{\atr_{2}}\}$, such that $(\atr_{1}',\atr_{2}') \in (\po_\pcinstr\cup\rfo_\pcinstr)^{+}$. By the definition of $\viso_\pcinstr$, we have that either $(\atr_{1}'=\atrrd{\atr_{1}},\atr_{2}'=\atrwr{\atr_{2}}) \in \po_\pcinstr$ and $\atr_1 = \atr_2$ or $(\atr_{1},\atr_2) \in (\po\cup\rfo)^{+}$, which implies that $(\atr_{1},\atr_2) \in \viso$. In both cases we obtain that $(\atr_{1}',\atr_{2}') \in \viso_\pcinstr$. The axiom $\axpoco$ can be proved in a similar way.
%        \item $\axcoarb$: the same proof steps as in $\axpoco$.
%    \end{itemize}

    Next, we show that $\atrace_\pcinstr$ satisfies the property $\axretval$.  Let $\atr$ be a transaction in $\atrace$ that contains a read event $\readact(\atr,\anaddr,\aval)$. 
Let $\atr_0$ be the transaction in $\atrace$ such that 
$$\atr_0 = Max_{\arbo}(\{\atr' \in \atrace\ |\ (\atr',\atr) \in \viso \wedge \exists\ \writeact(\atr',\anaddr,\cdot)\in\amap(\atr')\}).$$ 
The read value $\aval$ must have been written by $\atr_0$ since $\atrace$ satisfies $\axretval$. Thus, the read $\readact(\atr,\anaddr,\aval)$ in $\atrrd{\atr}$ of $\atrace_\pcinstr$ must return the value written by $\atrwr{\atr_0}$. 
From the definitions of $\viso_\pcinstr$ and $\arbo_\pcinstr$, we get  
$$\atrwr{\atr_1}\in\{\atrwr{\atr'} \in \atrace_\pcinstr\ |\ (\atrwr{\atr'},\atrrd{\atr}) \in \viso_\pcinstr \wedge \exists\ \writeact(\atrwr{\atr'},\anaddr,\cdot)\in\amap(\atrwr{\atr'})\}$$ 
iff 
$$\atr_1 \in \{\atr' \in \atrace\ |\ (\atr',\atr) \in \viso \wedge \exists\ \writeact(\atr',\anaddr,\cdot)\in\amap(\atr')\}$$ 
because $(\atrwr{\atr_{1}},\atrrd{\atr_{2}}) \in \viso_\pcinstr$ implies $(\atr_{1},\atr_{2}) \in \viso$. Since $(\atrwr{\atr_{1}},\atrwr{\atr_{2}}) \in \arbo_\pcinstr$ implies $(\atr_{1},\atr_{2}) \in \arbo$, we also obtain that 
$$\atrwr{\atr_0} = Max_{\arbo_\pcinstr}(\{\atrwr{\atr'} \in \atrace_\pcinstr\ |\ (\atrwr{\atr'},\atrrd{\atr}) \in \viso_\pcinstr \wedge \exists\ \writeact(\atrwr{\atr'},\anaddr,\cdot)\in\amap(\atrwr{\atr'})\})$$ 
and since the read $\readact(\atr,\anaddr,\aval)$ in $\atrrd{\atr}$ of $\atrace_\pcinstr$ returns the value written by $\atrwr{\atr_0}$, $\atrace_\pcinstr$ satisfies $\axretval$. 

For the case $\textsf{X} = \pcc$, we show that $\atrace_\pcinstr$ satisfies the property $\axprefix$ (the other axioms are proved as in the case of $\ccc$). 
% In the previous case we already showed that $\axpoco$ and $\axcoarb$ hold. 
%     Now we will show that $\atrace_\pcinstr$ satisfies the property $\axprefix$. 
    Suppose we have $(\atr_{1}',\atr_{2}') \in \arbo_\pcinstr$ and $(\atr_{2}',\atr_{3}') \in \viso_\pcinstr$ where $\atr_{1}' \in \{\atrrd{\atr_{1}},\atrwr{\atr_{1}}\}$, $\atr_{2}' \in \{\atrrd{\atr_{2}},\atrwr{\atr_{2}}\}$, and $\atr_{3}' \in \{\atrrd{\atr_{3}},\atrwr{\atr_{3}}\}$. The are five cases to be discussed:
    \begin{enumerate} %[label=\alph*]
        \item $(\atr_{1}'=\atrrd{\atr_{1}},\atr_{2}'=\atrwr{\atr_{2}}) \in \po_\pcinstr$ and $\atr_1 = \atr_2$ and $(\atr_{2},\atr_{3}) \in \viso$, 
        \item $(\atr_{1},\atr_{2}) \in \arbo$ and $(\atr_{2},\atr_{3}) \in \viso$,  
        \item $(\atr_{1},\atr_{2}) \in \arbo$ and $(\atr_{2}'=\atrrd{\atr_{2}},\atr_{3}'=\atrwr{\atr_{3}}) \in \po_\pcinstr$ and $\atr_2 =  \atr_3$, 
        \item $(\atr_{1}'=\atrrd{\atr_{1}},\atr_{2}'=\atrwr{\atr_{2}}) \in \po_\pcinstr$ and $\atr_1 = \atr_2$ and $(\atr_{2},\atr_{3}) \in \arbo$ and $\atr_{3}'= \atrwr{\atr_{3}}$, 
        \item $(\atr_{1},\atr_{2}) \in \arbo$ and $(\atr_{2},\atr_{3}) \in \arbo$ and $\atr_{3}'= \atrwr{\atr_{3}}$. 
    \end{enumerate}
    Cases (a) and (b) imply that $(\atr_{1},\atr_{3}) \in \viso$ since $\arbo;\viso \subset \arbo$, which implies that $(\atr_{1}',\atr_{3}') \in \viso_\pcinstr$. Cases (c), (d), and (e) imply that $(\atr_{1},\atr_{3}) \in \arbo$ and $\atr_{3}'= \atrwr{\atr_{3}}$ then 
   we get that $(\atrwr{\atr_{1}},\atrwr{\atr_{3}}) \in \viso_\pcinstr$ and $\atr_{3}'= \atrwr{\atr_{3}}$ which means that 
   $(\atr_{1}',\atr_{3}') \in \viso_\pcinstr$. 
%\end{itemize}

Note that the rule $(\atrwr{\atr_{1}},\atrwr{\atr_{2}}) \in \viso_\pcinstr$ if $(\atr_{1},\atr_{2}) \in \arbo$ cannot change the fact that $$\atrwr{\atr_1}\in\{\atrwr{\atr'} \in \atrace_\pcinstr\ |\ (\atrwr{\atr'},\atrrd{\atr}) \in \viso_\pcinstr \wedge \exists\ \writeact(\atrwr{\atr'},\anaddr,\cdot)\in\amap(\atrwr{\atr'})\}$$ iff $$\atr_1 \in \{\atr' \in \atrace\ |\ (\atr',\atr) \in \viso \wedge \exists\ \writeact(\atr',\anaddr,\cdot)\in\amap(\atr')\}$$ Thus, the proof of $\axretval$ follows as in the previous case.

For the case $\textsf{X} = \sic$, we show that $\atrace_\pcinstr$ satisfies $\axconflict$. If $(\atrwr{\atr_{1}},\atrwr{\atr_{2}}) \in \sto_\pcinstr$, then $(\atr_{1},\atr_{2}) \in \sto \subset \viso$, which implies that $(\atrwr{\atr_{1}},\atrrd{\atr_{2}}) \in \viso_\pcinstr$. Therefore, $(\atrwr{\atr_{1}},\atrwr{\atr_{2}}) \in \viso_\pcinstr$, which concludes the proof. The axiom $\axretval$ can be proved as in the previous cases.
%Thus, the property $\axconflict$ holds. Also, we prove in the same way as the previous two cases that $\atrace_\pcinstr$ satisfies the property $\axretval$.
%\end{enumerate} 
\end{proof}

\begin{proof}[Proof of Lemma \ref{lem:cycles}]
($\Rightarrow$) Let $\atrace$ be a trace under \ccc{}. From $\axpoco$ and $\axcoarb$ we get that $\arbo_{0}^{+} \subset \arbo$, and $\arbo_{0}^{+}$ is acyclic because $\arbo$ is total order.  Assume by contradiction that $\viso_{0}^{+};\cfo$ is cyclic which implies that $\viso;\cfo$ is cyclic since $\viso_{0}^{+} \subset \viso$, which means that there exist $\atr_1$ and $\atr_2$ such that $(\atr_1, \atr_2) \in \viso$ and $(\atr_2, \atr_1) \in \cfo$. 
$(\atr_2, \atr_1) \in \cfo$ implies that there exists $\atr_3$ such that $(\atr_3, \atr_1) \in \sto$ and $(\atr_3, \atr_2) \in \rfo$. 
Based on the definition of $\axretval$, $\atr_3$ has two possible instances:
\begin{itemize}
    \item $\atr_3$ corresponds to the "fictional" transaction that wrote the initial values which cannot be the case when $(\atr_1, \atr_2) \in \viso$ and $\atr_1$ writes to the same variable that $\atr_2$ reads from,
    \item $\atr_3$ is the last transaction that occurs before $\atr_2$ that writes the value read by $\atr_2$, which means that 
    $(\atr_1,\atr_3) \in \arbo$ which contradicts the fact that $(\atr_3, \atr_1) \in \sto$ since $\sto \subset \arbo$.
\end{itemize}  

($\Leftarrow$) Let $\atrace$ be a trace such that $\arbo_{0}^{+}$ and $\viso_{0}^{+};\cfo$ are acyclic. Then, we define the relations $\viso$ and $\arbo$ such that $\viso = \viso_{0}^{+}$ and $\arbo$ is any total order that includes $\arbo_{0}^{+}$. Then, we obtain that $(\viso \cup \sto)^{+} \subset \arbo$ and $\viso;\cfo$ is acyclic. Thus, $\atrace$ satisfies the properties $\axpoco$ and $\axcoarb$. Next, we will show that $\atrace$ satisfies $\axretval$. 
Let $\atr$ be a transaction in $\atrace$ that contains a read event $\readact(\atr,\anaddr,\aval)$. 
Let $\atr_0$ be transaction in $\atrace$ such that $$\atr_0 = Max_{\arbo}(\{\atr' \in \atrace\ |\ (\atr',\atr) \in \viso \wedge \exists\ \writeact(\atr',\anaddr,\cdot)\in\amap(\atr')\})$$ then the read must return a value written by $\atr_0$. 
Assume by contradiction that there exists some other transaction $\atr_1 \neq \atr_0$ such that $(\atr_1,\atr) \in \rfo$. 
Then, we get that $(\atr_1,\atr_0) \in \arbo$ and both write to $\anaddr$, therefore, $(\atr_1,\atr_0) \in \sto$ since $\sto \subset \arbo$. Combining $(\atr_1,\atr) \in \rfo$ and $(\atr_1,\atr_0) \in \sto$ we obtain $(\atr,\atr_0) \in \cfo$ and since 
$(\atr_0,\atr) \in \viso$ then we obtain that $(\atr,\atr) \in \viso;\cfo$ which contradicts the fact that $\viso;\cfo$ is acyclic.
Therefore, the read value was written by $\atr_0$ and $\atrace$ satisfies $\axretval$.
%
%%(2): The only-if direction: similar to (1) since $\ccc{} \subset \pcc$, we have that $\arbo_{0}^{+}$ is acyclic. We assume by contradiction that $\arbo_{0}^{+}?;\viso_{0};\cfo$ is cyclic. The property $\axprefix$ implies that $\sto;\viso \subset \viso$ since $\sto \subset \arbo$ and since $\viso_{0}^{+} \subset \viso$ then  $\arbo_{0}^{+}?;\viso_{0} \subset \viso$.  Thus, $\viso;\cfo$ is cyclic which results in a contradiction as in (1). 
%
%%(2): The if direction: we define the relations $\viso$ and $\arbo$ such that $\arbo_{0}^{+}\subset \arbo$ and  $\viso = \arbo ; \viso_{0} \cup \viso_{0}$ which imply that $\viso_{0}^{+} \subset \viso$ and $\arbo;\viso \subset \viso$.  Thus, $\atrace$ satisfies the properties $\axpoco$, $\axcoarb$, and $\axprefix$. Using the same proof steps as in (1) we can show that $\atrace$ satisfies $\axretval$. 
%
%%(3): The only-if direction: similar to above we have that $\arbo_{0}^{+}$ is acyclic.  We assume by contradiction that $\arbo_{0}^{+};\cfo$ is cyclic. Combining the properties $\axconflict$ and $\axpoco$ implies that $\arbo_{0}^{+} \subset \viso$. Thus, $\viso;\cfo$ is cyclic which results in a contradiction as in (1). 
%
%%(3): The if direction: we can define the relations $\viso$ and $\arbo$ such that $\viso = \arbo_{0}^{+}$ and $\arbo_{0}^{+}\subset \arbo$ which imply the properties $\axpoco$, $\axcoarb$, and $\axconflict$. Using the same proof steps as in (1) we can show that $\atrace$ satisfies $\axretval$. 
\end{proof}

\begin{proof}[Proof of Lemma~\ref{lem:CcCc}]
    The only-if direction follows from Lemma \ref{lem:Transform}. For the if direction: consider a trace $\atrace_\pcinstr$ which is \ccc{}. We prove by contradiction that $\atrace$ must be \ccc{} as well. 
    Assume that $\atrace$ is not \ccc{} then it must contain a cycle in either $\arbo_{0}^{+}$ or $\viso_{0}^{+};\cfo$ (based on Lemma \ref{lem:cycles}). In the rest of the proof when we mention a cycle we implicitly refer to a cycle in either $\arbo_{0}^{+}$ or $\viso_{0}^{+};\cfo$. 
    
    Splitting every transaction $\atr \in \atrace$ in a trace to a pair of transactions $\atrrd{\atr}$ and $\atrwr{\atr}$ that occur in this order might not maintain a cycle of $\atrace$. However, we prove that this is not possible and our splitting conserves the cycle. 
    Assume we have a vertex $\atr$ as part of the cycle. We show that $\atr$ can be split into two transactions
     $\atrrd{\atr}$ and $\atrwr{\atr}$ while maintaining the cycle. 
    Note that $\atr$ is part of a cycle iff either 
    \begin{enumerate}
        \item  $(\atr_{1},\atr) \in \arbo_{0}$ and $(\atr,\atr_{2})\in \arbo_{0}$ or
        \item  $(\atr_{1},\atr) \in \viso_{0}$ and $(\atr,\atr_{2})\in \viso_{0}$ or
        \item  $(\atr_{1},\atr) \in \viso_{0}$ and $(\atr,\atr_{2})\in \cfo$ or 
        \item $(\atr_{1},\atr) \in \cfo$ and $(\atr,\atr_{2})\in \viso_{0}$ 
    \end{enumerate}
    where $\atr_{1}$ and $\atr_{2}$ might refer to the same transaction. 
    Thus, by splitting $\atr$ to $\atrrd{\atr}$ and $\atrwr{\atr}$, the above four cases imply that:
    \begin{enumerate}
        \item if $(\atr_{1},\atr) \in \viso_{0}$ and $(\atr,\atr_{2})\in \arbo_{0}$ then 
        $(\atr_{1}',\atrrd{\atr}) \in (\po_\pcinstr \cup \rfo_\pcinstr)$ and $(\atrwr{\atr},\atr_{2}')\in (\po_\pcinstr \cup \rfo_\pcinstr \cup \sto_\pcinstr)$ where $\atr_{1}' \in \{\atrrd{\atr_{1}},\atrwr{\atr_{1}}\}$ and $\atr_{2}' \in \{\atrrd{\atr_{2}},\atrwr{\atr_{2}}\}$. This maintains the vertices $\atr_{1}'$ and $\atr_{2}'$ connected in the cycle formed by the dependency relations of $\atrace_\pcinstr$ since $(\atrrd{\atr},\atrwr{\atr}) \in \po_\pcinstr$;
        \item if $(\atr_{1},\atr) \in \sto$ and $(\atr,\atr_{2})\in \arbo_{0}$ then 
        $(\atr_{1}',\atrwr{\atr}) \in \sto_\pcinstr$ and $(\atrwr{\atr},\atr_{2}')\in (\po_\pcinstr \cup \rfo_\pcinstr \cup \sto_\pcinstr)$ which maintains the vertices $\atr_{1}'$ and $\atr_{2}'$ connected in the cycle formed by the dependency relations of $\atrace_\pcinstr$;
        \item $(\atr_{1},\atr) \in \viso_{0}$ and $(\atr_{2},\atr) \in \cfo$ then $(\atr_{1}',\atrrd{\atr}) \in (\po_\pcinstr \cup \rfo_\pcinstr)$ and $(\atrrd{\atr},\atr_{2}')\in \cfo_\pcinstr$ maintains the vertices $\atr_{1}'$ and $\atr_{2}'$ connected in the cycle formed by the dependency relations of $\atrace_\pcinstr$;
        \item $(\atr_{1},\atr) \in \cfo$ and $(\atr_{2},\atr) \in \viso_{0}$ then $(\atr_{1}',\atrwr{\atr}) \in \cfo_\pcinstr$ and $(\atrwr{\atr},\atr_{2}')\in (\po_\pcinstr \cup \rfo_\pcinstr)$ which maintains the vertices $\atr_{1}'$ and $\atr_{2}'$ connected in the cycle formed by the dependency relations of $\atrace_\pcinstr$ as well. 
    \end{enumerate}  
    Therefore, doing the splitting creates a cycle in either $(\po_\pcinstr \cup \rfo_\pcinstr \cup \sto_\pcinstr)^{+}$ or $(\po_\pcinstr \cup \rfo_\pcinstr)^{+};\cfo_\pcinstr$ which implies that $\atrace_\pcinstr$ is not \ccc{}, a contradiction.
    \end{proof}

\begin{proof}[Proof of Lemma \ref{lem:PcSer}]
($\Leftarrow$) Assume that $\atrace_\pcinstr$ is \serc{}. We will show that $\atrace$ is \pcc{}. 
    Notice that if $(\atr_{1},\atr_{2}) \in \viso_{0}^{+}$ then $(\atrwr{\atr_{1}},\atrrd{\atr_{2}}) \in \arbo_\pcinstr$
    which implies that $(\atr_{1},\atr_{2}) \in \viso$. Similarly, if $(\atr_{1},\atr_{2}) \in \arbo_{0}^{+}$ then $(\atrwr{\atr_{1}},\atrwr{\atr_{2}}) \in \arbo_\pcinstr$ or $(\atrwr{\atr_{1}},\atrrd{\atr_{2}}) \in \arbo_\pcinstr$ which implies that $(\atrwr{\atr_{1}},\atrwr{\atr_{2}}) \in \arbo_\pcinstr$ which in both cases implies that $(\atr_{1},\atr_{2}) \in \arbo$. Thus, $\atrace$ satisfies the properties $\axpoco$ and $\axcoarb$. 
    
    Now assume that $(\atr_{1},\atr_{2})\in \arbo$ and $(\atr_{2},\atr_{3})\in \viso$. We show that $(\atr_{1},\atr_{3})\in \viso$. 
    The assumption implies that $(\atrwr{\atr_{1}},\atrwr{\atr_{2}}) \in \arbo_\pcinstr$ and 
   $(\atrwr{\atr_{2}},\atrrd{\atr_{3}}) \in \arbo_\pcinstr$, which means that $(\atrwr{\atr_{1}},\atrrd{\atr_{3}}) \in \arbo_\pcinstr$. Therefore, $(\atr_{1},\atr_{3}) \in \viso$ and $\atrace$ satisfies the property $\axconflict$. 
   
   Concerning $\axretval$, let $\atr$ be a transaction in $\atrace$ that contains a read event $\readact(\atr,\anaddr,\aval)$. 
   Let $\atr_0$ be transaction in $\atrace$ such that 
   $$\atr_0 = Max_{\arbo}(\{\atr' \in \atrace\ |\ (\atr',\atr) \in \viso \wedge \exists\ \writeact(\atr',\anaddr,\cdot)\in\amap(\atr')\}).$$ 
   We show that the read must return a value written by $\atr_0$. 
   The definitions of $\viso$ and $\arbo$ imply that 
   $$\atrwr{\atr_1}\in\{\atrwr{\atr'} \in \atrace_\pcinstr\ |\ (\atrwr{\atr'},\atrrd{\atr}) \in \arbo_\pcinstr \wedge \exists\ \writeact(\atrwr{\atr'},\anaddr,\cdot)\in\amap(\atrwr{\atr'})\}$$ 
   iff 
   $$\atr_1 \in \{\atr' \in \atrace\ |\ (\atr',\atr) \in \viso \wedge \exists\ \writeact(\atr',\anaddr,\cdot)\in\amap(\atr')\}$$ 
   because $(\atrwr{\atr_{1}},\atrrd{\atr_{2}}) \in \arbo_\pcinstr$ implies $(\atr_{1},\atr_{2}) \in \viso$. 
   Then, we obtain that 
   $$\atrwr{\atr_0} = Max_{\arbo_\pcinstr}(\{\atrwr{\atr'} \in \atrace_\pcinstr\ |\ (\atrwr{\atr'},\atrrd{\atr}) \in \arbo_\pcinstr \wedge \exists\ \writeact(\atrwr{\atr'},\anaddr,\cdot)\in\amap(\atrwr{\atr'})\})$$ 
   and since $\atrace_\pcinstr$ is \serc{} we know that the read must return the value written by $\atrwr{\atr_0}$. Thus, the read returns the value written by $\atr_0$, which implies that $\atrace$ satisfies $\axretval$ holds. Therefore, $\atrace$ is \pcc{}. 

($\Rightarrow$) Assume that $\atrace$ is \pcc{}. We show that $\atrace_\pcinstr$ is \serc{}. 
   Since $\atrace_\pcinstr$ is the result of splitting transactions, a cycle in its dependency relations can only originate from a cycle in $\atrace$. Therefore, it is sufficient to show that any happens-before cycle in $\atrace$ is broken in $\atrace_\pcinstr$. 
    From Lemma \ref{lem:pccycles}, we have that $\atrace$ either does not admit a happens-before cycle or any (simple) happens-before cycle in $\atrace$ must have either two successive $\cfo$ dependencies or a $\sto$ dependency followed by a $\cfo$ dependency. 
    If $\atrace$ does not admit a happens-before cycle then it is \serc{}, and $\atrace_\pcinstr$  is trivially \serc{} (since splitting transactions cannot introduce new cycles). 

\scalebox{0.67}
{
\begin{tikzpicture}
 \node[shape=rectangle ,draw=none,font=\large] (A0) at (0,0)  [] {$\atr_1$ };
 \node[shape=rectangle ,draw=none,font=\large] (A1) at (2,0)  [] {$\atr_2$};
 \node[shape=rectangle ,draw=none,font=\large] (B1) at (4,0)  [] {$\atr_3$};
 \node[shape=rectangle ,draw=none,font=\large] (B2) at (5,0)  [] {$\Longrightarrow$};
 %\node[shape=rectangle ,draw=none,font=\large] (C0) at (8,0)  [] {$\cdots$ };
 \node[shape=rectangle ,draw=none,font=\large] (C1) at (6,0)  [] {$\atrrd{\atr_{1}}$};
 \node[shape=rectangle ,draw=none,font=\large] (D1) at (8,0)  [] {$\atrwr{\atr_{1}}$};
 \node[shape=rectangle ,draw=none,font=\large] (D2) at (10,0)  [] {$\atrrd{\atr_{2}}$};
 \node[shape=rectangle ,draw=none,font=\large] (D0) at (12,0)  [] {$\atrwr{\atr_{2}}$};
 \node[shape=rectangle ,draw=none,font=\large] (E0) at (14,0)  [] {$\atrrd{\atr_{3}}$ };
 \node[shape=rectangle ,draw=none,font=\large] (E1) at (16,0)  [] {$\atrwr{\atr_{3}}$};

  \begin{scope}[ every edge/.style={draw=black,very thick}]
  \path [->] (A0) edge[] node [above,font=\small] {$\sto \cup \cfo$}  (A1);
  \path [->] (A1) edge[] node [above,font=\small] {$\cfo$}  (B1);
  \path [->] (B1) edge[bend left] node [above,font=\small] {$\hbo$}  (A0);

  \path [->] (C1) edge[] node [above,font=\small] {$\po_\pcinstr$}  (D1);
  \path [->] (D1) edge[bend left] node [below,font=\small] {$\sto_\pcinstr$}  (D0);
  \path [->] (C1) edge[bend right] node [above,font=\small] {$\cfo_\pcinstr$}  (D0);
  \path [->] (D2) edge[] node [above,font=\small] {$\po_\pcinstr$}  (D0);

  \path [->] (D2) edge[bend right] node [above,font=\small] {$\cfo_\pcinstr$}  (E1);
  \path [->] (E0) edge[] node [above,font=\small] {$\po_\pcinstr$}  (E1);
  \end{scope}
\end{tikzpicture}}

    Otherwise, if $\atrace$ admits a happens-before cycle like above, then $\atrace$ must contain three transactions $\atr_{1}$, $\atr_{2}$, and $\atr_{3}$ such that $(\atr_{1},\atr_{2}) \in \sto \cup \cfo$, $(\atr_{2},\atr_{3}) \in \cfo$, and $(\atr_{3},\atr_{1}) \in \hbo$ (like in the picture above). 
    Then, by splitting transactions we obtain that $(\atrwr{\atr_{1}},\atrwr{\atr_{2}}) \in \sto_\pcinstr$ or $(\atrrd{\atr_{1}},\atrwr{\atr_{2}}) \in \cfo_\pcinstr$, and $(\atrrd{\atr_{2}},\atrwr{\atr_{3}}) \in \cfo_\pcinstr$. 
    Since, we have $(\atrrd{\atr_{2}},\atrwr{\atr_{2}}) \in \po_\pcinstr$ (and not $(\atrwr{\atr_{2}},\atrrd{\atr_{2}}) \in \po_\pcinstr$), this cannot lead to a cycle in $\atrace_\pcinstr$, which concludes the proof that $\atrace_\pcinstr$ is \serc{}
%
%Thus, in all cases $\atrace_\pcinstr$ doesn't contain cycles. Therefore, $\atrace_\pcinstr$ is \serc{}. 
\end{proof}

%!TEX root = draft.tex
\section{Proofs for Section~\ref{sec:robustness}}
\label{sec:robustnessProofs}

\begin{proof}[Proof of Lemma~\ref{lem:pcsicycles}]
%Since $\atrace$ is \pcc{}, there exists a causal order $\viso$ and an arbitration order $\arbo$, such that $\viso\subseteq \arbo$ and $\axpc$ hold. 
Let $\arbo_{1}$ be a total order that includes $\arbo_{0}^{+}$ and $\arbo_{0}^{+};\cfo;\arbo_{0}^{*}$ ($\arbo_{0}^*$ is the reflexive closure of $\arbo_{0}$). This is well defined because there exists no cycle between tuples in these two relations. Indeed, if $(\atr_{1},\atr_{2}) \in \arbo_{0}^{+}$ and there exist $\atr_{3}$ and $\atr_{4}$ such that $(\atr_{2}, \atr_{3}) \in \arbo_{0}^{+}$, $(\atr_{3},  \atr_{4}) \in \cfo$, and $(\atr_{4}, \atr_{1}) \in \arbo_{0}^{*}$, then we have a cycle in  $\arbo_{0}^{+};\cfo$ that does not contain two successive $\cfo$ dependencies, which contradicts the hypothesis. Also, for every pair of transactions $(\atr_{1},\atr_{2})$ there cannot exist $\atr_{3}$ and $\atr_{4}$ such that 
$$(\atr_{2}, \atr_{3}) \in \arbo_{0}^{+},\ (\atr_{3},  \atr_{4}) \in \cfo\mbox{ and }(\atr_{4}, \atr_{1}) \in \arbo_{0}^{*}$$ and 
$\atr_{3}'$ and $\atr_{4}'$ such that $$(\atr_{1}, \atr_{3}') \in \arbo_{0}^{+},\ (\atr_{3}',  \atr_{4}') \in \cfo\mbox{ and }(\atr_{4}', \atr_{2}) \in \arbo_{0}^{*}$$ 
This will imply a cycle in $\arbo_{0}^{+};\cfo;\arbo_{0}^{+};\cfo$ which again contradicts the hypothesis.
%if $(\atr_{1},\atr_{2}) \not\in \arbo_{0}^{+}$ and there exist $\atr_{3}$ and $\atr_{4}$ such that $(\atr_{2}, \atr_{3}) \in \arbo_{0}^{+}$, $(\atr_{3},  \atr_{4}) \in \cfo$, and $(\atr_{4}, \atr_{1}) \in \arbo_{0}^{*}$ ($\arbo_{0}^*$ denotes the reflexive closure of $\arbo_{0}$), then $(\atr_{2},\atr_{1}) \in \arbo_{1}$.
% and $\arbo_{1}$ is transitive\footnote{Note that .}. 
Also, let $\viso_{1}$ be the smallest transitive relation that includes $\arbo_{0}^{+}$ and $\arbo_{1};\arbo_{0}^{+}$. We show that $\viso_{1}$ and $\arbo_{1}$ are causal and arbitration orders of $\atrace$ that satisfy all the axioms of \sic{}. 

$\axpoco$ and $\axcoarb$ hold trivially. Since $\sto \subseteq \viso_{1}$, $\axconflict$ holds as well. 
%
%We show first that $\arbo_{1}$ is a well defined total order. Note that because $\axcoarb$, $\arbo_{0}^{+} \subset \arbo$ is acyclic. 
%Also 
% the fact that all happens-before cycles in $\atrace$ contain two successive $\cfo$. Thus,  $\arbo_{1}$ is a well defined total order. 
%Also, since $\arbo_{0}^+ \subset \arbo_{1}$ then $\axcoarb$ holds. 
%
%
%Similar to above $\viso_{1}$ is a well defined partial order. 
%Note that $\viso_{1} \subset \arbo_{1}$ as well. Also,  since $\viso_{0}^+ \subset \viso_{1}$ then $\axpoco$ holds. 
$\axpc$ holds because $\arbo_{1} ; \viso_{1} = \arbo_{1};(\arbo_{0}^{+} \cup \arbo_{1};\arbo_{0}^{+})^+ = \arbo_{1};\arbo_{0}^{+} \subset \viso_{1}$. 

The axiom $\axretval$ is equivalent to the acyclicity of $\viso_{1};\cfo$ when $\axpoco$ and $\axcoarb$ hold. Assume by contradiction that $\viso_1;\cfo$ is cyclic. 
From the definition of $\viso_1$ and the fact that $\arbo_{1}$ is total order we obtain that either: %$\arbo_{0}^{+};\cfo$ or $\arbo_{1};\arbo_{0}^{+};\cfo$ (since ) is cyclic: 
\begin{itemize}
  \item $\arbo_{0}^{+};\cfo$ is cyclic, which implies that there exists a happens-before cycle that does not contain two successive $\cfo$, which contradicts the hypothesis, or
 \item $\arbo_{1};\arbo_{0}^{+};\cfo$ is cyclic, which implies that there exist $\atr_{1}$, $\atr_{2}$, and $\atr_{3}$ such that $(\atr_{2}, \atr_{3}) \in \arbo_{0}^{+}$, $(\atr_{3},  \atr_{1}) \in \cfo$ and $(\atr_{1},\atr_{2}) \in \arbo_{1}$.  This contradicts the fact that $(\atr_{2}, \atr_{3}) \in \arbo_{0}^{+}$ and $(\atr_{3},  \atr_{1}) \in \cfo$ implies $(\atr_{2},\atr_{1}) \in \arbo_{1}$.  
 %by definition if $(\atr_{1},\atr_{2}) \not\in \arbo_{0}^{+}$, otherwise, we get $\arbo_{0}^{+};\arbo_{0}^{+};\cfo = \arbo_{0}^{+};\cfo$ is cyclic which leads to the previous case. 
\end{itemize} 
Therefore, $\atrace$ satisfies $\axretval$ for $\viso_1$ and $\arbo_1$, which concludes the proof.
\end{proof}

\medskip
Next, we present an important lemma that characterizes happens before cycles possible under the \pcc{} semantics. 
This is a strengthening of a result in~\cite{DBLP:conf/concur/0002G16} which shows that all happens before cycles under \pcc{} must have two successive dependencies in $\{\cfo,\sto\}$ and at least one $\cfo$. We show that the two successive dependencies cannot be $\cfo$ followed $\sto$, or two successive $\sto$.

\begin{proof}[Proof of Lemma~\ref{lem:pccycles}]
    It was shown in \cite{DBLP:conf/concur/0002G16} that all happens-before cycles under \pcc{} must contain two successive dependencies in $\{\cfo,\sto\}$ and at least one $\cfo$. 
    Assume by contradiction that there exists a cycle with $\cfo$ dependency followed by $\sto$ dependency or two successive $\sto$ dependencies. This cycle must contain at least one additional dependency. Otherwise, the cycle would also have a $\sto$ dependency followed by a $\cfo$ dependency, or it would imply a cycle in $\sto$, which is not possible (since $\sto \subset \arbo$ and $\arbo$ is a total order).
%    it does this implies a cycle in $\sto \subset \arbo$ which is a contradiction. 
    Then, we get that the dependency just before $\cfo$ is either $\po$ or $\rfo$ (i.e., $\viso_0$) since we cannot have $\cfo$ or $\sto$ followed by $\cfo$. Also, the relation after $\sto$ is either $\po$ or $\rfo$ or $\sto$ (i.e., $\arbo_0$) since we cannot have $\sto$ followed by $\cfo$. Thus, the cycle has the following shape:

\medskip
\scalebox{0.65}
{
\begin{tikzpicture}
 \node[shape=rectangle ,draw=none,font=\large] (A0) at (0,0)  [] {$\atr_1$ };
 \node[shape=rectangle ,draw=none,font=\large] (A1) at (1.3,0)  [] {$\atr_2$};
 \node[shape=rectangle ,draw=none,font=\large] (B1) at (2.6,0)  [] {$\atr_3$};
 \node[shape=rectangle ,draw=none,font=\large] (B2) at (3.9,0)  [] {$\atr_4$};
 \node[shape=rectangle ,draw=none,font=\large] (C0) at (4.5,0)  [] {$\cdots$ };
 \node[shape=rectangle ,draw=none,font=\large] (C1) at (5.1,0)  [] {$\atr_i$};
 \node[shape=rectangle ,draw=none,font=\large] (D1) at (6.4,0)  [] {$\atr_{i+1}$};
 \node[shape=rectangle ,draw=none,font=\large] (D2) at (7.9,0)  [] {$\atr_{i+2}$};
 \node[shape=rectangle ,draw=none,font=\large] (D0) at (9.4,0)  [] {$\atr_{i+3}$};
 \node[shape=rectangle ,draw=none,font=\large] (E0) at (10.2,0)  [] {$\cdots$ };
 \node[shape=rectangle ,draw=none,font=\large] (E1) at (11,0)  [] {$\atr_{n-4}$};
 \node[shape=rectangle ,draw=none,font=\large] (F1) at (12.5,0)  [] {$\atr_{n-3}$};
 \node[shape=rectangle ,draw=none,font=\large] (F2) at (14,0)  [] {$\atr_{n-2}$};
 \node[shape=rectangle ,draw=none,font=\large] (F3) at (15.5,0)  [] {$\atr_{n-1}$};
 \node[shape=rectangle ,draw=none,font=\large] (F4) at (17,0)  [] {$\atr_{n}$};

  \begin{scope}[ every edge/.style={draw=black,very thick}]
  \path [->] (A0) edge[] node [above,font=\small] {$\cfo$}  (A1);
  \path [->] (A1) edge[] node [above,font=\small] {$\sto$}  (B1);
  \path [->] (B1) edge[] node [above,font=\small] {$\arbo_0$}  (B2);

  \path [->] (C1) edge[] node [above,font=\small] {$\viso_0$}  (D1);
  \path [->] (D1) edge[] node [above,font=\small] {$\cfo$}  (D2);
  \path [->] (D2) edge[] node [above,font=\small] {$\sto$}  (D0);

  \path [->] (E1) edge[] node [above,font=\small] {$\viso_0$}  (F1);
  \path [->] (F1) edge[] node [above,font=\small] {$\cfo$}  (F2);
  \path [->] (F2) edge[] node [above,font=\small] {$\sto$}  (F3);
  \path [->] (F3) edge[] node [above,font=\small] {$\arbo_0$} (F4);
  \path [->] (F4) edge[bend left=11] node [above,font=\small] {$\viso_0$} (A0);
  \end{scope}
\end{tikzpicture}}

\medskip
Since $\viso_0;\cfo\subseteq \arbo$ is a consequence of the \pcc{} axioms~\cite{DBLP:conf/concur/CeroneGY17}, we get that $(\atr_{n}, \atr_2) \in \arbo$, $(\atr_{i}, \atr_{i+2}) \in \arbo$ and $(\atr_{n-4}, \atr_{n-2}) \in \arbo$, which allows to ``short-circuit'' the cycle. 
Using the fact that $\sto \subset \arbo$, $\viso_0 \subset \arbo$, and $\arbo_0 \subset \arbo$, and applying the short-circuiting process multiple times, we obtain a cycle in the arbitration order $\arbo$ which contradicts the fact that $\arbo$ is a total order. 
\end{proof}

\begin{proof}[Proof of Theorem~\ref{them:RobCcSi}]
  For the only-if direction: assume that $\aprog$ is robust against \ccc{} relative to \sic{}. Then, 
  the set of traces of $\aprog$ under the two consistency models coincide. 
  Since the set of traces under \sic{} is subset of the one under \pcc{}, then 
  the set of traces under \ccc{} is subset of the one under \pcc{}. 
  This implies that $\aprog$ is robust against \ccc{} relative to \pcc{}. % because the set of traces under \pcc{} is a subset of the set of executions under \ccc{}. 
  Thus, we obtain that the set of traces of $\aprog$ under the three consistency models coincide. 
  Therefore, $\aprog$ is robust against \pcc{} relative to \sic{} as well.
  
  For the if direction: assume that $\aprog$ is robust against \ccc{} relative to \pcc{} and $\aprog$ is robust against \pcc{} relative to \sic{}. Then, the set of traces of $\aprog$ under the three consistency models coincide. Thus, we obtain that $\aprog$ is robust against \ccc{} relative to \sic{}.     
\end{proof}
    
%!TEX root = draft.tex
\section{Proofs for Section~\ref{sec:commutativitygraph}}
\label{sec:commutativitygraphProofs}

\begin{proof}[Proof of Theorem~\ref{them:MovRobCcPc}]
It is enough to show: if $\aprog$ is not robust against \ccc{} relative to \pcc{} then we have a simple cycle in the commutativity dependency graph of $\aprog_\pcinstr$ of the form above.  Assume $\aprog$ is not robust against \ccc{} relative to \pcc{}. 
Then, from Theorem \ref{them:RobCcPc}, we obtain $\aprog_\pcinstr$ is not robust against \ccc{} relative to \serc{}. 
Also it was shown in \cite{DBLP:conf/concur/BeillahiBE19} that if a program is not robust then there must exist a robustness violation trace (\ccc{} relative to \serc{}) $\atrace_\pcinstr$ of the shape $\atrace_\pcinstr = \alpha \cdot \atr_1 \cdot \beta \cdot \atr_i \cdot  \atr_{i+1} \cdot \gamma \cdot \atr_n$ where $(\atr_1,\atr_i) \in (\po \cup \rfo)^{+}$, $(\atr_{i},\atr_{i+1}) \in (\sto \cup \cfo)$, $(\atr_{i+1},\atr_n) \in \hbo$, and  $(\atr_n,\atr_1) \in \cfo$. Note that since transactions in the trace $\atrace_\pcinstr$ can either be read-only or write-only. Then, $(\atr_{i},\atr_{i+1}) \in (\sto \cup \cfo)$ and $(\atr_{n},\atr_1) \in \cfo$ imply that $\atr_1$ and $\atr_{i+1}$ must be a write-only transactions and $\atr_{n}$ must be a read-only transaction. 
Note that we may have $\beta = \gamma = \epsilon$ as the case for the trace of the $\mathsf{SB}$ program given in Figure \ref{fig:litmus1}.

%In the trace $\atrace_\pcinstr$, we let $\atr_1$ to be $\atr_1$, $\atr_i$ to be $\atr_2$, $\atr_3$ to be $\atr_{i+1}$, and $\atr_4$ to be $\atr_n$ of Theorem \ref{them:MovRobCcPc}.
We consider first the general case when $\atr_1 \not\equiv \atr_2$. The other case can be proved in the same way.

Consider the prefix $\atrace_{p}$ of $\atrace_\pcinstr$: $\atrace_{p} = \alpha \cdot \atr_1 \cdot \beta \cdot \atr_{i}$ where $(\atr_1,\atr_{i}) \in (\po \cup \rfo)^{+}$ which is a \serc{} trace of $\aprog_{\pcinstr}$. 
Then, we have a sequence of transactions from $\atr_1$ to $\atr_{i}$ that are related by either $\po$ or $\rfo$. 
In the case two transactions are only related by $\rfo$, then the first transaction is not a right mover because of the second transaction reads from a write in the first transaction. Thus, we can relate the two transactions using the relation $\mrfo$ in the commutativity dependency graph.

Similarly consider the following trace $\atrace_{s}$ extracted from $\atrace_\pcinstr$: $\atrace _{s} = \alpha \cdot  \atr_{i+1} \cdot \gamma \cdot \atr_n$ where $(\atr_{i+1},\atr_n) \in \hbo$ which is a \serc{} trace of $\aprog_{\pcinstr}$. 
Similar to before, we have a sequence of transactions from $\atr_{i+1}$ to $\atr_n$ that are related by either $\po$, $\rfo$, $\sto$, or $\cfo$. 
For any two transactions that are related only by either $\rfo$, $\sto$, or $\cfo$, this implies that the first transaction is not a right mover because of the second transaction and a write-read, write-write, or read-write dependency between the two, respectively. Thus, we can relate the two transactions using either $\mrfo$, $\msto$, or $\mcfo$, respectively. 

Now consider the following trace $\atrace_{1}$ extracted from $\atrace_\pcinstr$: $\atrace_{1} = \alpha \cdot \atr_1 \cdot \beta \cdot \atr_{i} \cdot \atr_{i+1}$ where $(\atr_{i},\atr_{i+1}) \in  (\sto \cup \cfo)$  is a \serc{} trace of $\aprog_{\pcinstr}$.
Because $\atr_{i}$ and $\atr_{i+1}$ are related by either $\sto$ or $\cfo$, then $\atr_{i}$ is not a right mover because of $\atr_{i+1}$ and a write-write or read-write dependency between the two, respectively. 
Thus, we can relate the two transactions using either $\msto$ or $\mcfo$, respectively. 

Finally, consider the following trace $\atrace_{2}$ extracted from $\atrace_\pcinstr$: $\atrace_{2} = \alpha \cdot  \atr_{i+1} \cdot \gamma \cdot \atr_n \cdot \atr_1$  where $(\atr_n,\atr_1) \in \cfo$  is a \serc{} trace of $\aprog_{\pcinstr}$. 
Because $\atr_{n}$ and $\atr_{1}$ are related by $\cfo$, then $\atr_{n}$ is not a right mover because of $\atr_{1}$ and a read-write dependency between the two. Thus, we can relate the two transactions using $\mcfo$. 
\end{proof}

\begin{proof}[Proof of Theorem~\ref{them:MovRobPcSi}]
Similar to before it is enough to show: if $\aprog$ is not robust against \pcc{} relative to \sic{} then we have a simple cycle in the commutativity dependency graph of $\aprog_\pcinstr$ of the form above. Assume $\aprog$ is not robust against \pcc{} relative to \sic{}.
Then, from Theorem \ref{them:RobPcSiInstr}, we obtain that if $\sem{\aprog}$ reaches an error state under \serc{} then we will have the following trace $\atrace$ under \serc: $\atrace = \alpha \cdot \atrrd{\atr_{\instr}}  \cdot \atr_3 \cdot  \beta \cdot \atr_n \cdot \atrwr{\atr_{\instr}}$\footnote{For simplicity, we assume here that after reaching the error state we execute the writes of $\atr_{\instr}$, i.e., $\atrwr{\atr_{\instr}}$.} where $(\atrrd{\atr_{\instr}},\atr_3) \in \cfo$, $(\atr_3,\atr_n) \in \hbo$, $(\atr_n,\atrwr{\atr_{\instr}}) \in \sto$, and we don't have two successive $\cfo$ in the happens before between $\atr_3$ and $\atr_n$. In $\atrace$, $\atrwr{\atr_{\instr}}$ (resp., $\atrrd{\atr_{\instr}}$) represents $\atr_1$ (resp., $\atr_2$) in the theorem statement. 
Note that we may have $\alpha = \beta = \epsilon$ as is the case of the transformed $\mathsf{LU}$ program given in Figure \ref{fig:litmus2Instr}. 
The construction of the cycle in the commutativity dependency graph follows the same procedure taken in the proof of Theorem \ref{them:MovRobCcPc}. The only difference is that for every two transactions of $\atrace$ that are part of the happens before between $\atr_3$ and $\atr_n$, if the two are not connected by either $\po$, $\rfo$, $\sto$, or $\cfo$ then they must be the reads and writes of the same original transaction in $\aprog$. In this case, in the commutativity dependency graph we have the two transactions related by $\sametro$. 
%Notice that we abused notation by using the two components $\atrrd{\atr 1}$ and $\atrwr{\atr_1}$ of $\atr_1$ in $\atrace$ to denote that $\atr 1$ writes were not immediately written to the shared variables.
\end{proof}

\end{document}